\def\bGamma{\boldsymbol{\Gamma}}
\def\bSigma{\boldsymbol{\Sigma}}
\def\bOmega{\boldsymbol{\Omega}}
\def\mba{\mathbf{a}}
\def\mbb{\mathbf{b}}
\def\mbc{\mathbf{c}}
\def\mbe{\mathbf{e}}
\def\mbg{\mathbf{g}}
\def\mbn{\mathbf{n}}
\def\mbp{\mathbf{p}}
\def\mbq{\mathbf{q}}
\def\mbr{\mathbf{r}}
\def\mbs{\mathbf{s}}
\def\mbt{\mathbf{t}}
\def\mbu{\mathbf{u}}
\def\mbv{\mathbf{v}}
\def\mbx{\mathbf{x}}
\def\mby{\mathbf{y}}
\def\mbz{\mathbf{z}}
\def\mbA{\mathbf{A}}
\def\mbB{\mathbf{B}}
\def\mbC{\mathbf{C}}
\def\mbG{\mathbf{G}}
\def\mbH{\mathbf{H}}
\def\mbI{\mathbf{I}}
\def\mbL{\mathbf{L}}
\def\mbM{\mathbf{M}}
\def\mbN{\mathbf{N}}
\def\mbO{\mathbf{O}}
\def\mbP{\mathbf{P}}
\def\mbQ{\mathbf{Q}}
\def\mbR{\mathbf{R}}
\def\mbS{\mathbf{S}}
\def\mbU{\mathbf{U}}
\def\mbV{\mathbf{V}}
\def\mbW{\mathbf{W}}
\def\mbX{\mathbf{X}}
\def\mbZ{\mathbf{Z}}
\newtheorem{theorem}{Theorem}
\newtheorem{proposition}{Proposition}
\newtheorem{lemma}{Lemma}
\newtheorem{corollary}{Corollary}
\theoremstyle{definition}
\newtheorem{definition}{Definition}
\algnewcommand\algorithmicinput{\textbf{Input:}}
\algnewcommand\Input{\item[\algorithmicinput]}
\algnewcommand\algorithmicoutput{\textbf{Output:}}
\algnewcommand\Output{\item[\algorithmicoutput]}
\algnewcommand\algorithmicinit{\textbf{Initialize:}}
\algnewcommand\Init{\item[\algorithmicinit]}
\newcommand*{\rom}[1]{\expandafter\@slowromancap\romannumeral #1@}
\begin{document}

%\title{ORKA: Accelerated Kaczmarz Algorithms for Signal Recovery from One-Bit Samples}
%\title{One-Bit Sensing With Time-Varying Sampling Thresholds:\\
%\title{One-Bit Sensing via Sample Abundance:\\
%Theoretical Guarantees and Accelerated Kaczmarz Algorithms}
%\title{The Power of Plenty: Theoretical Guarantees and Accelerated Kaczmarz Algorithms for Efficient One-Bit Sensing}
\title{Harnessing the Power of Sample Abundance: Theoretical Guarantees and Algorithms for Accelerated One-Bit Sensing}
\author{Arian Eamaz, \IEEEmembership{Student Member, IEEE}, Farhang Yeganegi, Deanna Needell, \IEEEmembership{Member, IEEE}, and \\ Mojtaba Soltanalian, \IEEEmembership{Senior Member, IEEE}
\thanks{This work relates in part to Department of Navy Award N00014-22-1-2666 issued by the Office of Naval Research. Any opinions, findings, conclusions, or recommendations expressed in
this material are those of the authors and do not necessarily reflect the views of the
Office of Naval Research. A conference precursor to this work was presented at the 2023 IEEE International Symposium on Information Theory (ISIT). The first two authors contributed equally to this work.}
\thanks{A. Eamaz, F. Yeganegi, and M. Soltanalian are with the Department of Electrical and Computer Engineering, University of Illinois Chicago, Chicago, IL 60607, USA (\emph{Corresponding author: Arian Eamaz}).
}
\thanks{D. Needell is with the Department of Mathematics, University of California Los Angeles, Los
Angeles, CA 90095, USA.}
}

\markboth{Submitted to the IEEE TRANSACTIONS ON INFORMATION THEORY, 2023
}
{Shell \MakeLowercase{\textit{et al.}}: Bare Demo of IEEEtran.cls for IEEE Journals}
\maketitle

\begin{abstract}
One-bit quantization with time-varying sampling thresholds (also known as random dithering)
has recently found significant utilization potential in statistical signal processing applications due to its relatively low
power consumption and low implementation cost. In addition
to such advantages, an attractive feature of one-bit analog-to-digital converters (ADCs) is their superior sampling rates as compared to their conventional multi-bit counterparts. This characteristic endows one-bit signal processing frameworks with what one may refer to as \emph{sample abundance}.  We show that sample abundance plays a pivotal role in many signal recovery and optimization problems that are formulated as (possibly non-convex) quadratic programs with linear feasibility constraints. Of particular interest to our work are low-rank matrix recovery and compressed sensing applications that take advantage of one-bit quantization. We demonstrate that the sample abundance paradigm allows for the transformation of such problems to merely linear feasibility problems by forming large-scale overdetermined linear systems---thus removing the need for handling costly optimization constraints and objectives. To make the proposed computational cost savings achievable, we offer enhanced randomized Kaczmarz algorithms to solve these highly overdetermined feasibility problems and provide theoretical guarantees in terms of their convergence, sample size requirements, and overall performance. Several numerical results are presented to illustrate the effectiveness of the proposed
methodologies.  
\end{abstract}

\begin{IEEEkeywords}
Convex-relaxed problems, compressed sensing, low-rank matrix sensing, one-bit quantization, one-bit ADCs, randomized Kaczmarz algorithm, statistical signal processing, time-varying sampling thresholds.
\end{IEEEkeywords}

\IEEEpeerreviewmaketitle

\section{Introduction}
\IEEEPARstart{W}{e} consider an optimization problem of the form:
\begin{equation}
\label{eq:1nnnn}
\begin{aligned}
\underset{\mbX \in \Omega_{c}}{\textrm{minimize}}\quad f(\mbX)\quad
\text{subject to} \quad &\mathcal{A}\left(\mbX\right)=\mathbf{y},
\end{aligned}
\end{equation}
where $f(.)$ is a cost function, $\Omega_{c}$ is a feasible set, $\mbX\in \mathbb{C}^{n_{1}\times n_{2}}$ is the matrix of unknowns, $\mathbf{y}\in \mathbb{R}^{n}$ is the measurement vector, and $\mathcal{A}$ is a linear transformation mapping $\mathbb{C}^{n_{1}\times n_{2}}$ into $\mathbb{R}^{n}$. This problem emerges in a wide variety of applications, particularly as the relaxed variant of some well-known NP-hard problems. Although many problems can be expressed in
the form in the program (\ref{eq:1nnnn}), the applications we will focus on in this paper
include some specific problems of interest which can take advantage of low-resolution (and particularly one-bit) sampling and processing:
\begin{itemize}
\item The task of \emph{recovering a low-rank matrix from its linear measurements} plays a central role in computational science. The problem occurs in many areas of applied mathematics, such as signal processing \cite{candes2015phase,davenport2016overview,candes2011tight}. In this scenario, the cost function of the program
(\ref{eq:1nnnn}), $f(.)$, is typically the nuclear norm or the Frobenius norm, and the constraint set $\Omega_{c}$ would be an amplitude restriction on the elements of matrix $\mbX$; see \cite{recht2011null}.
    \item \emph{Compressed sensing} (CS) offers a framework for the simultaneous sensing and compression
of finite-dimensional vectors by relying on linear dimensionality reduction. Through a CS formulation, sparse signals may be recovered from highly incomplete measurements \cite{candes2006robust}. The problem (1) can be adopted in the CS context when $f\left(\mbX\right)=\left\|\operatorname{vec}\left(\mbX\right)\right\|_{1}$.
\end{itemize}
This paper will primarily concentrate on relaxation problems in which the objective function $f(\cdot)$ is relaxed to address the NP-hard cost caused due to the associated  feasibility constraints. Such relaxation techniques prove significantly helpful in various applications, such as nuclear norm minimization for low-rank matrix recovery and $\ell_1$ minimization for compressed sensing problems.

Sampling the signals of interest at high data rates with high-resolution ADCs would dramatically increase the overall manufacturing cost and power consumption of such ADCs. In multi-bit sampling scenarios, a very large number of quantization levels is necessary in order to represent the original continuous signal in with high accuracy, which in turn leads to a considerable reduction in sampling rate \cite{boufounos2015quantization,eamaz2021modified}. This attribute of multi-bit sampling is the key reason for the general emergence of underdetermined systems $n_{1}n_{2}\geq n$ in the program~(\ref{eq:1nnnn}) \cite{candes2013phaselift,eamaz2022phase}. An alternative solution to such challenges is to deploy \emph{one-bit quantization} which is a coarse sensing scenario, where the signals are merely compared with predefined threshold levels at the ADCs, producing sign data ($\pm1$). This enables signal processing equipments to sample at a very high rate, with a considerably lower cost and energy consumption, compared to their counterparts which employ multi-bit ADCs \cite{mezghani2018blind,eamaz2021modified}. Several applications abound of one-bit ADCs, such as multiple-input multiple-output wireless communications \cite{mezghani2018blind}. Various methods for one-bit quantization, including our own, offer distinct trade-offs between computational and implementation complexity and reconstruction accuracy. Take the Sigma-Delta approach as an example, which necessitates memory elements to store state-variables associated with thresholds. This sequential quantization of incoming measurements demands significant memory and computational power, causing delays in acquiring measurements while the thresholds are being updated. Therefore, sensors used to collect measurements must possess substantial computational capabilities to accommodate such requirements \cite{knudson2016one}. 

The effectiveness of incorporating random thresholds (dithers) within the framework of one-bit quantization has been extensively established in various contexts \cite{dabeer2006signal}.
Research findings suggest that a carefully chosen dither signal can significantly improve the resolution and performance of digital instrumentation \cite{carbone1997quantitative}.
The implementation of a dithered generator in quantization systems within an ADC system, was described in \cite{robinson2019analog}.

Recent works  \cite{eamaz2021modified,eamaz2023covariance,eamaz2022covariance} considered Gaussian dithering for recovering covariance from one-bit measurements. In \cite{dirksen2022covariance}, the authors provide a comprehensive investigation of estimating the covariance matrix from one-bit measurements using uniform dithering. Dithered One-bit quantization has also been applied to various problems, such as sparse parameter estimation \cite{dabeer2006signal}, phase retrieval \cite{eamaz2022phase}, and sampling theory \cite{eamaz2022uno} as discussed in contemporary literature. The study presented in \cite{eamaz2022covariance} demonstrates that one-bit signal parameter estimation with time-varying thresholds yields notably superior results compared to using constant thresholds. 

Randomized iterative algorithms have become a widely employed technique for dimension reduction, encompassing various iterative methods for solving linear systems and their variations, as well as extensions to non-linear optimization problems \cite{martinsson2020randomized}. These algorithms include well-known methods such as randomized Kaczmarz algorithm (RKA) \cite{strohmer2009randomized}, and coordinate descent \cite{leventhal2010randomized, ma2015convergence}.
\emph{In this approach, instead of utilizing all available information at each iteration, a sketched subset of the information is used, selected through a sketching process.} By multiplying the data with a random matrix, randomized sketching effectively reduces the size of the problem while preserving its essential characteristics.
Common examples of sketching matrices include Gaussian matrices with independent entries and other matrices that satisfy the Johnson-Lindenstrauss property. Another example is block-identity matrices, which effectively subsample the data matrix.
The Kaczmarz method \cite{kaczmarz1937angenaherte} is an iterative projection algorithm for solving linear systems of equations and inequalities. It is usually applied to highly overdetermined systems because of its simplicity.
Many variants of this iterative method and their convergence rates have been proposed and studied in recent decades for both consistent and inconsistent systems including the randomized Kaczmarz algorithm, the randomized block Kaczmarz algorithm and most recently, the sampling Kaczmarz-Motzkin method (SKM) \cite{strohmer2009randomized,leventhal2010randomized,needell2014paved,de2017sampling}.

\subsection{Prior Art}
Our work aligns seamlessly with the recent advancements in the exploration of coarse quantization for parameter estimation from random measurements. The emergence of one-bit compressed sensing (one-bit CS) has led to a plethora of papers that have greatly enhanced our understanding of employing dithering for one-bit signal reconstruction.

The domain of one-bit low-rank matrix sensing has garnered considerable interest, yet it remains relatively unexplored. Several noteworthy papers, such as \cite{foucart2019recovering}, have shed light on this intriguing subject. In \cite{foucart2019recovering}, the authors delve into the theoretical guarantees of one-bit sensing from low-rank matrices, employing two distinct algorithms: one based on hard singular value thresholding and the other utilizing semidefinite programming.
The initial investigation in this study considers a scenario \emph{without} thresholds and subsequently progresses to the Gaussian sampling matrix and Gaussian dithering scenario, where both adaptive and non-adaptive thresholds are employed. However, the recovery algorithms and theoretical guarantees presented in this study require the availability of side information regarding the signal of interest, specifically an upper bound on the Frobenius norm of input signal. This requirement contradicts the essence of employing dithering in one-bit sensing, as the literature suggests that the primary advantage of dithering lies in the ability to recover the signal magnitude without additional information.

One-bit CS was first introduced in \cite{boufounos20081}. Later, in \cite{jacques2013robust}, this method was extensively investigated, and they presented a lower bound on the reconstruction error, alongside proposing heuristic algorithms for recovering the underlying signals. Extensive research has been dedicated to the problem of one-bit CS \emph{without} dithering, resulting in numerous proposed approaches for its solution. Notably, the binary iterative hard thresholding (BIHT) and its normalized version, normalized BIHT (NBIHT), have emerged as prominent methods in this field \cite{boufounos2015quantization,plan2013one,jacques2013quantized,jacques2013robust,friedlander2021nbiht}.
In some earlier works, the ditherless setup was taken into account, leading to the derivation of various theoretical guarantees for reconstruction algorithms \cite{jacques2013robust,ai2014one,friedlander2021nbiht,liu2019one}. For instance,
the authors of \cite{jacques2013robust} presented the required number of measurements for robust sparse signal reconstruction within a ball %centered 
around the original signal, specifically focusing on Gaussian sampling matrices.  
In \cite{plan2014dimension}, authors proposed the \emph{random hyperplane tessellations} to find an embedding between the Hamming distance and direction error recovery in a probability. In \cite{oymak2015near}, the authors characterized the tradeoff between the distortion of random hyperplane tessellations and sample complexity in terms of Gaussian complexity, addressing both arbitrary and structured sets.  Nevertheless, their random hyperplane tessellations suffer from two primary limitations: they focus on the ditherless scenario and only-direction estimation, and they exclusively consider Gaussian sampling matrices. However,
recent studies have shown that under appropriate conditions, complete signal reconstruction can be accomplished by employing nonzero random thresholds \cite{knudson2016one,baraniuk2017exponential,xu2020quantized}. In \cite{jacques2017small}, quasi-isometric embeddings were introduced with high probability through scalar (dithered) quantization following linear random projection. In such embeddings, both multiplicative and additive distortions coexist when measuring distances between mapped vectors using the $\ell_1$-norm.
The authors of \cite{knudson2016one} introduced a sensing method that partitions the measurements into two sets and utilizes a constant threshold in one-bit data acquisition, 
thereby estimating the signal norm. Simultaneously, this method employs a zero threshold for the remaining one-bit data acquisition part,
allowing for the estimation of the direction of the signal. In the context of Gaussian random sampling matrices, the suggestion in \cite{baraniuk2017exponential} involves introducing either adaptive or random dither to the compressive measurements of a signal before binarization. In a recent investigation \cite{xu2020quantized}, a scalar quantization scheme coupled with a uniformly distributed dithering signal was found to extend theoretical frameworks beyond Gaussian measurements. The study provided theoretical guarantees for sampling matrices meeting the restricted isometry property (RIP). Notably, their approach operates without requiring the high-resolution limit (HRA), making it applicable across various bit-rates. In \cite{derezinski2022sharp}, they developed the random hyperplane tessellations to find an
embedding between the Hamming distance and recovery error for uniform
dithering scenario. However, they only considered sun-gaussian matrix and could not provide any theoretical guarantees for the deterministic measurements.
In \cite{thrampoulidis2020generalized}, the authors tackled the problem of parameter estimation with dithered quantization by formulating it as a generalized LASSO problem. This approach allowed the authors to leverage the existing theoretical guarantees available in the literature for generalized LASSO.

\subsection{Motivations and Contributions of the Paper}
Our work was primarily motivated by two key references, \cite{xu2020quantized} and \cite{dirksen2022sharp}. Similar to these works, we explore the theoretical guarantees of one-bit sensing without being confined to a specific reconstruction algorithm. In this paper, we seek to include various sampling matrices, including \emph{deterministic} ones like the DCT. 
Specifically, our aim is to broaden the theoretical assurances to sampling matrices assumed to be solely isotropic, without the necessity of satisfying any matrix deviation inequality, such as RIP. In all the established theoretical assurances for the one-bit sensing problem, the focus has been on the scalar parameter of uniform dithers dominating the dynamic range of measurements. With our proposed scheme, we aim to illustrate the consequences when the scalar parameter fails to meet the dynamic range requirements and how this impacts the upper bound of recovery error.
Additionally, we delve into the scenario of \emph{sample abundance}, a common situation in one-bit sensing, where we reformulate the relaxed optimization problem \eqref{eq:1nnnn} into a linear feasibility problem.
Using our theorems, we aim to address the question of \emph{how many samples are necessary for a linear feasibility problem solver to obtain a solution within the ball with a specific radius, around the original signal}. 

Moreover, we endeavor to harness the advantages of \emph{randomized algorithms} for one-bit sensing, a novel endeavor where we explore their theoretical guarantees and convergence analysis in this context for the first time. 
It is important to note that all the theoretical guarantees derived in this paper are \emph{uniform reconstruction} guarantees. This means that they hold true for all desired signals in the given space, ensuring consistent and reliable reconstruction across the entire signal set.

This paper principally contributes to the following areas:\\
\textbf{1) RKA-based recovery for the dithered one-bit sensing.} 
In this paper, we consider the deployment of dithered one-bit quantization,leading to an increased sample size and a \emph{highly overdetermined system}. The proposed \emph{O}ne-bit aided \emph{R}andomized \emph{K}aczmarz \emph{Algorithm}, which we refer to as ORKA, can find the signal $\mbX$ in the program (\ref{eq:1nnnn}) by (i) generating abundant one-bit samples, in order to define a large scale overdetermined system where a finite volume feasible set is created for (\ref{eq:1nnnn}), and (ii) solving this linear feasibility problem by leveraging one of the efficient solver families, \emph{Kaczmarz algorithms}.
We propose \emph{two novel} variants of RKA which will be compared with the existing RKA variants. We conduct a thorough investigation into the robustness of RKA when dealing with a noisy linear inequality system. Our findings demonstrate that RKA for the linear inequality system, remains robust in the presence of noise, and we are able to obtain the upper recovery bound under this condition.
\\
\textbf{2) Theoretical Guarantees of dithered one-bit sensing.} 
We introduce the concept of the \emph{finite volume property} (FVP) to explore an upper bound on the probability of converging the average of distances between the signal and hyperplanes, to its mean. By leveraging the FVP, we can obtain the required number of samples needed for one-bit sensing to ensure that the solution lies within a ball around the original signal.
This required number of samples is investigated for both arbitrary and structured sets. For the first time in the literature, we are able to offer theoretical guarantees for measurements obtained from discrete cosine transform (DCT).\\
\textbf{3) Projected Kaczmarz algorithm-based recovery.}
Various sampling schemes present tradeoffs among the number of samples, computational cost of the reconstruction, and storage requirements. While the abundance of samples is desirable in scenarios involving dithered one-bit sensing, practical applications such as matrix completion and phase retrieval from coded diffraction patterns \cite{candes2015phase} impose restrictions on the number of available measurements. In such cases, it is not feasible to increase the number of measurements.
An alternative approach is to develop algorithms that require fewer samples, albeit at the cost of increased computational complexity. In our pursuit of a storage-friendly algorithm, we strive to avoid the burden of recording the full matrix per iteration. To achieve this, we leverage the power of two well-studied techniques: sketch-and-project and low-rank matrix factorization. 
We unveil a novel approach that combines low-rank matrix factorization with RKA.
A key advantage of our proposed algorithms is that they do not rely on any additional side information, such as known upper bounds on the signal norm. This makes them more practical for real-world applications.
Through comprehensive numerical evaluations, we demonstrate the superior performance of our proposed algorithms compared to state-of-the-art approaches.

\subsection{Organization and Notations}
In Section~\rom{2}, we will introduce our algorithm to solve (\ref{eq:1nnnn}), ORKA, which tackles the problem as a large-scale overdetermined system and finds the original signal in the one-bit polyhedron by an accelerated Kaczmarz approach. Moreover, two new variants of the Kaczmarz algorithms are proposed that enhance the convergence rate and the computational complexity of these solvers. In Section~\rom{3}, we delve into the robustness of the proposed algorithms in the presence of noise. We thoroughly discuss the impact of noise on the performance of the algorithms and address the need for novel algorithms specifically designed for noisy scenarios. As a representative application, in Section~\rom{4}, ORKA and other proposed algorithms will be applied in the context of low-rank matrix recovery and compressed sensing.

To investigate the theoretical guarantees  of dithered one-bit sensing, we will first introduce the FVP theorem in Section~\rom{5} based on the general Hoeffding's bound. In Section~~\rom{6}, the convergence of proposed randomized algorithms is discussed. 
Section~\rom{7} focuses on investigating the performance of the proposed algorithms in the tasks of one-bit low-rank matrix sensing and one-bit CS. We consider scenarios involving both noisy and noiseless conditions. Through comprehensive comparisons with state-of-the-art methods, we aim to demonstrate the superior performance and efficiency of our proposed algorithms. 
Finally, Section~\rom{8} concludes the paper.

\underline{\emph{Notation:}}
We use bold lowercase letters for vectors and bold uppercase letters for matrices. $\mathbb{C}$ and $\mathbb{R}$ represent the set of complex and real numbers, respectively. $(\cdot)^{\top}$ and $(\cdot)^{\mathrm{H}}$ denote the vector/matrix transpose, and the Hermitian transpose, respectively. $\mbI_{N}\in \mathbb{R}^{N\times N}$ is the identity matrix of size $N$. $\operatorname{Tr}(.)$ denotes the trace of the matrix argument. $\left\langle \mbB_{1},\mbB_{2}\right\rangle=\operatorname{Tr}(\mbB_{1}^{\mathrm{H}}\mbB_{2})$ is the standard inner product between two
matrices. The nuclear norm of a matrix $\mbB\in \mathbb{C}^{N_{1}\times N_{2}}$ is denoted $\left\|\mbB\right\|_{\star}=\sum^{M}_{i=1}\sigma_{i}$ where $M$ and $\left\{\sigma_{i}\right\}$ are the rank and singular values of $\mbB$, respectively. The Frobenius norm of a matrix $\mbB$ is defined as $\|\mbB\|_{\mathrm{F}}=\sqrt{\sum^{N_{1}}_{r=1}\sum^{N_{2}}_{s=1}\left|b_{rs}\right|^{2}}$ where $\{b_{rs}\}$ are elements of $\mbB$. The $\ell_{k}$-norm of a vector $\mathbf{b}$ is defined as $\|\mbb\|^{k}_{k}=\sum_{i}|b|^{k}_{i}$. The Hadamard (element-wise) product of two matrices $\mbB_{1}$ and $\mbB_{2}$ is denoted as $\mbB_{1}\odot \mbB_{2}$. Additionally, the  Kronecker product is denoted as $\mbB_{1}\otimes\mbB_{2}$. The vectorized form of a matrix $\mbB$ is written as $\operatorname{vec}(\mbB)$. $\mathbf{1}_{s}$ is the $s$-dimensional all-one vector. If there exists a $c>0$ such that $a\leq c b$ (resp. $a\geq c b$) for two quantities
$a$ and $b$, we have $a \gtrsim b$ (resp. $a \lesssim b$). Given a scalar $x$, we define $(x)^{+}$ as $\max\left\{x,0\right\}$. The set $[n]$ is defined as $[n]=\left\{1,\cdots,n\right\}$. $\operatorname{Diag}\left\{\mathbf{b}\right\}$ denotes a diagonal matrix with $\{b_{i}\}$ as its diagonal elements. A ball with radius $r$ centered at a point $\mby\in\mathbb{R}^{n}$ is defined as $\mathcal{B}_r\left(\mby\right)=\left\{\mby_1\in\mathbb{R}^{n}|\left\|\mby-\mby_1\right\|_2\leq r\right\}$. The function $\operatorname{sgn}(\cdot)$ yields the sign of its argument. The function $\log(\cdot)$ denotes the natural logarithm, unless its base is otherwise stated. For an event $\mathcal{E}$, $\mathbb{I}_{(\mathcal{E})}$ is the indicator function for that event meaning that $\mathbb{I}_{(\mathcal{E})}$ is $1$ if $\mathcal{E}$ occurs; otherwise, it is zero. The notation $x \sim \mathcal{U}_{[a,b]}$ means a random variable drawn from the uniform distribution over the interval $[a,b]$ and $x \sim \mathcal{N}(\mu,\sigma^2)$ represents the normal distribution with mean $\mu$ and variance $\sigma^2$. The sub-Gaussian norm of a random variable~$X$ characterized by
\begin{equation}
\label{STARIAN}
\|X\|_{\psi_2}=\inf \left\{t>0: \mathbb{E}\left\{e^{X^2/t^2}\right\} \leq 2\right\}.
\end{equation}
For an arbitrary set $\mathcal{K}$, the Gaussian complexity is defined as
\begin{equation}
\gamma\left(\mathcal{K}\right) = \mathbb{E}\left\{ \sup _{\mbx \in \mathcal{K}}\left|\langle \mbg, \mbx\rangle\right|\right\}, \quad \mbg \sim \mathcal{N}\left(0, \mbI_n\right).
\end{equation}
A covering number is the number of $r$-balls of a given size needed to completely cover a given set $\mathcal{K}$, i.e., $\mathcal{N}\left(\mathcal{K},\|\cdot\|_2, r\right)$. The Kolmogorov $r$-entropy of a set $\mathcal{K}$ is denoted by $\mathcal{H}\left(\mathcal{K},r\right)$ defined as
the logarithm of the size of the smallest $r$-net of $\mathcal{K}$ \cite{kolmogorov1959varepsilon}. The Hamming distance between $\operatorname{sgn}(\mbx),\operatorname{sgn}(\mby)\in\{-1,1\}^n$ 
is defined as 
\begin{equation}
\label{hamming}
d_{\mathrm{H}}(\operatorname{sgn}(\mbx),\operatorname{sgn}(\mby))=\frac{1}{n}\sum_{i=1}^{n}\mathbb{I}_{(\operatorname{sgn}(x_i)\neq \operatorname{sgn}(y_i))}.
\end{equation}

\section{ORKA: Towards Circumventing Costly Constraints }
\label{sec_3}
An interesting alternative to enforcing the feasible set $\Omega_{c}$ in (\ref{eq:1nnnn}) emerges when one increases the number of samples $n$, and solves the overdetermined linear system of equations with $n\geq n_{1}n_{2}$. In this sample abundance regimen, the linear constraint $\mathcal{A}\left( \mbX\right)=\mathbf{y}$ may actually yield the optimum inside $\Omega_{c}$. As a result of increasing the number of samples, it is possible that the intersection of these hyperplanes will achieve the desired point without the need to consider costly constraints. However, this idea may face practical limitations in the case of multi-bit quantization systems since ADCs capable of ultra-high rate sampling are difficult and expensive to produce.

In this section, by deploying the idea of one-bit quantization with time-varying thresholds, linear equality constraints are superseded by a massive array of linear inequalities in forming the feasible polyhedron. Therefore, by increasing the number of samples, a finite-volume space may be created inside $\Omega_{c}$ with shrinking size; making projections on $\Omega_{c}$ redundant. From a practical point of view, one-bit sampling is done efficiently at a very high rate with a significantly lower cost compared to its high-resolution counterpart. It has been examined in \cite{eamaz2022phase} that even though only partial information is made available to one-bit signal processing algorithms, they can achieve acceptable recovery performance with less complexity compared to the high-resolution scenario.

We first formulate the one-bit quantization with multiple dithering scheme. In Sections~\ref{sec_RKA} and \ref{sec_SKM}, we present a 
%begin by presenting 
summarized review of RKA and Sampling Kaczmarz-Motzkin algorithm (SKM), respectively.
%randomized Kaczmarz algorithms. 
Then in Section~\ref{sec_preSKM}, we propose a novel Kaczmarz method variant formulated based on the 
%\emph{sampling Kaczmarz-Motzkin algorithm} 
\emph{SKM} 
and a \emph{preconditioning approach}. One-bit sampling via time-varying thresholds will be combined with the proposed randomized Kaczmarz method to create highly overdetermined linear inequalities. This paves the way for the recovery of the desired signal $\mbX$ in the program (\ref{eq:1nnnn}) without solving the original optimization problem; merely by tacking accounts of its linear constraints. 
Due to the block structure of the linear feasibility in ORKA, we will propose a block-based Kaczmarz algorithm accordingly in Section~\ref{ORKA}.

\subsection{One-Bit Quantization With Multiple Thresholding}
\label{one-bit}
Let $y_{k}=y(k\mathrm{T})$ denote the uniform samples of signal $y(t)$ with the sampling rate $1/\mathrm{T}$. In practice, the discrete-time samples occupy pre-determined quantized values. We denote the quantization operation on $y_{k}$ by the function $Q(\cdot)$. This yields the scalar quantized signal as $r_{k} = Q(y_{k})$.
In one-bit quantization, compared to zero or constant thresholds, time-varying sampling thresholds yield a better recovery performance \cite{eamaz2023covariance}. These thresholds may be chosen from any distribution.  
In the case of one-bit quantization with such time-varying sampling thresholds, we have
$r_{k} = \operatorname{sgn}\left(y_{k}-\tau_{k}\right)$.
The information gathered through the one-bit sampling with time-varying thresholds presented here may be formulated in terms of an overdetermined linear system of inequalities. 
We have $r_{k}=+1$ when $y_{k}>\tau_{k}$ and $r_{k}=-1$ when $y_{k}<\tau_{k}$. Therefore, one can formulate the geometric location of the signal as
$r_{k}\left(y_{k}-\tau_{k}\right) \geq 0$.
Collecting all the elements in the vectors as $\mathbf{y}=[y_{k}] \in \mathbb{R}^{n}$ and $\mathbf{r}=[r_{k}] \in \left\{-1,1\right\}^{n}$, we have
$\mbr\odot\left(\mathbf{y}-\boldsymbol{\uptau}\right) \succeq \mathbf{0}$,
or equivalently
\begin{equation}
\label{eq:6}
\begin{aligned}
\bOmega_{\mby}  \mathbf{y} &\succeq \mbr \odot \boldsymbol{\uptau},
\end{aligned}
\end{equation}
where $\bOmega_{\mby}  \triangleq \operatorname{diag}\left(\mbr\right)$. Denote the time-varying sampling threshold in $\ell$-th signal sequence by $\boldsymbol{\uptau}^{(\ell)}$, where  $\ell\in [m]$.
It follows from \eqref{eq:6} that
\begin{equation}
\label{eq:7}
\begin{aligned}
\bOmega^{(\ell)}_{\mby} \mathbf{y} &\succeq \mathbf{r}^{(\ell)} \odot \boldsymbol{\uptau}^{(\ell)}, \quad \ell \in [m],
\end{aligned}
\end{equation}
where $\bOmega^{(\ell)}_{\mby} =\operatorname{diag}\left(\mbr^{(\ell)}\right)$. 
Denote the concatenation of all $m$ sign matrices as
\begin{equation}
\label{eq:9}
\Tilde{\bOmega}_{\mby} =\left[\begin{array}{c|c|c}
\bOmega^{(1)}_{\mby}  &\cdots &\bOmega^{(m)}_{\mby} 
\end{array}\right]^{\top}, \quad \Tilde{\bOmega}_{\mby} \in \left\{-1,0,1\right\}^{m n\times n}.
\end{equation}
Rewrite the $m$ linear %system of
inequalities in \eqref{eq:7} as
\begin{equation}
\label{eq:8}
\Tilde{\bOmega}_{\mby}  \mathbf{y} \succeq \operatorname{vec}\left(\mbR_{\mby} \right)\odot \operatorname{vec}\left(\bGamma\right),
\end{equation}
where $\mathbf{R}$ and $\bGamma$ are matrices, whose columns are the sequences $\left\{\mathbf{r}^{(\ell)}\right\}_{\ell=1}^{m}$ and $\left\{\boldsymbol{\uptau}^{(\ell)}\right\}_{\ell=1}^{m}$, respectively.

Assuming a large number of samples --- a common situation in one-bit sampling scenarios --- hereafter we treat (\ref{eq:8}) as an overdetermined linear system of inequalities associated with the one-bit sensing scheme.
The inequality (\ref{eq:8}) can be recast as a polyhedron,
\begin{equation}
\label{eq:8n}
\begin{aligned}
\mathcal{P}_{\mby}=\left\{\mby^{\prime} \in \mathbb{R}^n \mid \tilde{\boldsymbol{\Omega}}_{\mby} \mby^{\prime} \succeq \operatorname{vec}\left(\mbR_{\mby}\right) \odot \operatorname{vec}(\boldsymbol{\Gamma})\right\} \subset \mathbb{R}^n,
\end{aligned}
\end{equation}
which we refer to as the \emph{one-bit polyhedron}. Generally, it can be assumed that the signal $\mbx\in\mathbb{R}^{d}$ is observed linearly through the sampling matrix $\mbA\in\mathbb{R}^{n\times d}$, creating the measurements as $\mby=\mbA\mbx$. Based on \eqref{eq:8}, the one-bit polyhedron for this type of problem is given by
\begin{equation}
\label{eq:80n}
\begin{aligned}
\mathcal{P}_{\mathbf{x}} = \left\{\mathbf{x}^{\prime}\in\mathbb{R}^{d} \mid \mbP_{\mby}  \mathbf{x}^{\prime} \succeq \operatorname{vec}\left(\mbR_{\mby} \right)\odot \operatorname{vec}\left(\bGamma\right)\right\}\subset \mathbb{R}^d,
\end{aligned}
\end{equation}
where $\mbP_{\mby} =\Tilde{\bOmega}_{\mby} \mbA$ or equivalently
\begin{equation}
\label{eq:90}
\mbP_{\mby} =\left[\begin{array}{c|c|c}
\mbA^{\top}\bOmega^{(1)}_{\mby}&\cdots &\mbA^{\top}\bOmega^{(m)}_{\mby} 
\end{array}\right]^{\top}, \quad \mbP_{\mby} \in\mathbb{R}^{m n\times d}.
\end{equation}
By taking advantage of one-bit sampling in the asymptotic sample size scenario (sample abundance), the space constrained by the one-bit polyhedron $\mathcal{P}_{\mathbf{x}}$ \emph{shrinks} to become contained within the feasible set~$\Omega_c$. Note that this shrinking space always contains the global minima, with a volume that is diminished with an increased sample size. To find a solution inside the one-bit polyhedron, the ORKA employs the variants of RKA introduced in the following subsections.

\subsection{Randomized Kaczmarz Algorithm}
\label{sec_RKA}
The RKA is a \emph{sub-conjugate gradient method} to solve a linear feasibility problem, i.e, $\mbC\mathbf{x}\succeq\mathbf{b}$ where $\mbC$ is a ${m\times n}$ matrix with $m>n$ \cite{leventhal2010randomized,strohmer2009randomized}. Conjugate-gradient methods immediately turn the mentioned inequality to an equality in the following form $\left(\mbb-\mbC\mathbf{x}\right)^{+}=0$,
and then, approach the solution by the same process as used for systems of equations. The projection coefficient $\beta_{i}$ of the RKA is \cite{leventhal2010randomized}
\begin{equation}
\label{eq:22}
\beta_{i}= \begin{cases}
\left(b_{j}-\langle\mathbf{c}_{j},\mathbf{x}_{i}\rangle\right)^{+} & \left(j \in \mathcal{I}_{\geq}\right), \\ b_{j}-\langle\mathbf{c}_{j},\mathbf{x}_{i}\rangle & \left(j \in \mathcal{I}_{=}\right),
\end{cases}
\end{equation}
where the disjoint index sets $\mathcal{I}_{\geq}$ and $\mathcal{I}_{=}$ partition $[m]$
%$\mathcal{J}$ 
and $\{\mathbf{c}_{j}\}$ are the rows of $\mathbf{C}$.
Also, the unknown column vector $\mathbf{x}$ is iteratively updated as
\begin{equation}
\label{eq:23}
\mathbf{x}_{i+1}=\mathbf{x}_{i}+\frac{\beta_{i}}{\left\|\mbc_{j}\right\|^{2}_{2}} \mbc^{\star}_{j},
\end{equation}
where, at each iteration $i$, the index $j$ is drawn from the set $[m]$ independently at random following the distribution
$\operatorname{Pr}\{j=k\}=\frac{\left\|\mbc_{k}\right\|^{2}_{2}}{\|\mbC\|_{\mathrm{F}}^{2}}$.
Assuming that the linear system is consistent with nonempty feasible set $\mathcal{P}_{\mbx}$ created by the intersection
of hyperplanes around the desired point $\mbx$,
RKA converges linearly in expectation to the solution $\widehat{\mbx}\in\mathcal{P}_{\mbx}$\cite{strohmer2009randomized,leventhal2010randomized}:
\begin{equation}
\label{eq:15}
\mathbb{E}\left\{\hbar\left(\mathbf{x}_{i},\widehat{\mbx}\right)\right\} \leq \left(1-q_{_{\text{RKA}}}\right)^{i}~ \hbar\left(\mathbf{x}_{0},\widehat{\mbx}\right),
\end{equation}
where $\hbar\left(\mathbf{x}_{i},\widehat{\mbx}\right)=\left\|\mathbf{x}_{i}-\widehat{\mbx}\right\|_{2}^{2}$, is the euclidean distance between two points in the space, $i$ is the number of required iterations for RKA, and $q_{_{\text{RKA}}} \in \left(0,1\right)$ is given by $q_{_{\text{RKA}}}=\frac{1}{\kappa^{2}\left(\mbC\right)}$, with $\kappa\left(\mbC\right)=\|\mbC\|_{\mathrm{F}}\|\mbC^{\dagger}\|_{2}$ denoting the scaled condition number of a matrix $\mbC$. To reach an error bounded by $\varepsilon$, \eqref{eq:15} shows that this
is achieved in $I$ iterations if we impose $\left(1-q_{\mathrm{RKA}}\right)^I \simeq  e^{-I / \kappa^2} \leqslant \varepsilon$, that is $I=\kappa^2 \log (1 / \varepsilon)$. Since, $\kappa\geq \sqrt{n}$, RKA thus needs at least $\mathcal{O}\left(n\log (1 / \varepsilon)\right)$ multiplication to reach an error $\varepsilon$.

\subsection{Sampling Kaczmarz Motzkin Method}
\label{sec_SKM}
The SKM combines the ideas of both the RKA and the Motzkin method. The generalized convergence analysis of the SKM with sketch matrix which has been formulated based on the convergence analysis of RKA, and sampling Motzkin method for solving linear feasibility problem has been comprehensively explored in \cite{de2017sampling}.
The central contribution of SKM lies in its innovative way of projection plane selection. The hyperplane selection is done as follows: At iteration $i$, the SKM algorithm selects a collection of $\gamma$ (denoted by the set $\mathcal{T}_{i}$) rows, uniformly at random out of $m$ rows of the constraint matrix $\mbC$. Then, out of these $\gamma$ rows, the row $j_{\star}$ with the maximum positive residual is selected; i.e.
\begin{equation}
j_{\star}=\operatorname{argmax}~\left\{ \left(b_{j}-\langle\mbc_{j},\mathbf{x}_{i}\rangle\right)^{+}\right\},~j\in\mathcal{T}_{i}.    
\end{equation}
Finally, the solution is updated as \cite{de2017sampling} $\mathbf{x}_{i+1}=\mathbf{x}_{i}+\lambda_{i}\frac{\beta_{i}}{\left\|\mbc_{j_\star}\right\|^{2}_{2}} \mbc^{\star}_{j_{\star}}$,
where $\lambda_{i}$ is a relaxation parameter which for consistent systems must satisfy $0\leq \lim_{i\to\infty} \inf \lambda_{i}\leq \lim_{i\to\infty} \sup \lambda_{i}<2$ \cite{strohmer2009randomized},
to ensure convergence. The SKM algorithm enjoys the linear convergence as shown in \cite{de2017sampling} same as RKA. 

\subsection{Preconditioned SKM}
\label{sec_preSKM}
Assume $\mathcal{P}_{\mbx}$ as the space created by the intersection of hyperplanes in a linear feasibility problem. According to the convergence rate of RKA, reducing the value of the scaled condition number $\kappa\left(\mbC\right)$ or equivalently increasing the value of $q_{_{\text{RKA}}}$ in \eqref{eq:15} leads to an accelerated convergence of the RKA to $\mathcal{P}_{\mbx}$.
From another perspective, this property (lower value of $\kappa\left(\mbC\right)$) provides the RKA or its variant, SKM, enjoying a lower number of iterations required to reach an specific error, usually considered to be the algorithm's termination criterion. Consequently, assuming $I$ as a number of iterations, the computational cost of RKA which behaves as $\mathcal{O}\left(I n\right)$, is diminished as well.
The following theorem states how one can achieve the infimum of the scaled condition number $\kappa\left(\mbC\right)$:
\begin{theorem}
\label{scaled_number}
The infimum scaled condition number of a matrix $\mbC\in\mathbb{R}^{m\times n}$ is given by
\begin{equation}
\label{eq:Arian}
\inf_{\mbC} \kappa\left(\mbC\right) = \sqrt{n},
\end{equation}
which is achieved if and only if $\mbC$ is of the form $\mbC=\alpha \mbU$, where $\mbU$ is an orthonormal column matrix and $\alpha\in\mathbb{R}$ is a scalar.
\end{theorem}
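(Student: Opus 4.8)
The plan is to work with the singular values of $\mbC$ directly. Write the singular values of $\mbC$ as $\sigma_1 \geq \sigma_2 \geq \cdots \geq \sigma_n > 0$ (if $\mbC$ is rank-deficient then $\mbC^\dagger$ picks up an infinite or ill-defined quantity and $\kappa(\mbC)=\infty$, so we may assume full column rank). The key observation is that $\|\mbC\|_{\mathrm{F}}^2 = \sum_{i=1}^n \sigma_i^2$ while $\|\mbC^\dagger\|_2 = 1/\sigma_n$ (the largest singular value of the pseudoinverse is the reciprocal of the smallest singular value of $\mbC$). Hence
\begin{equation}
\label{eq:kappa_sv}
\kappa^2(\mbC) = \|\mbC\|_{\mathrm{F}}^2 \, \|\mbC^\dagger\|_2^2 = \frac{\sum_{i=1}^n \sigma_i^2}{\sigma_n^2} = \sum_{i=1}^n \frac{\sigma_i^2}{\sigma_n^2}.
\end{equation}

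First I would lower-bound the right-hand side of \eqref{eq:kappa_sv}: since $\sigma_i \geq \sigma_n$ for every $i$, each summand satisfies $\sigma_i^2/\sigma_n^2 \geq 1$, so $\kappa^2(\mbC) \geq n$, giving $\kappa(\mbC) \geq \sqrt{n}$. This establishes the inequality $\inf_{\mbC}\kappa(\mbC) \geq \sqrt{n}$. Next I would check attainment and characterize the equality case: equality in $\sum_i \sigma_i^2/\sigma_n^2 \geq n$ holds if and only if $\sigma_i^2/\sigma_n^2 = 1$ for all $i$, i.e. all singular values are equal to a common value $|\alpha| := \sigma_n$. A matrix $\mbC \in \mathbb{R}^{m\times n}$ with full column rank has all singular values equal to $|\alpha|$ precisely when $\mbC = \alpha \mbU$ with $\mbU$ having orthonormal columns — this follows from the SVD $\mbC = \mbW \bSigma \mbV^\top$ with $\bSigma = |\alpha|\, \mbI_n$ (padded with zero rows), so $\mbC = |\alpha|\, \mbW_{:,1:n}\mbV^\top$ and $\mbW_{:,1:n}\mbV^\top$ has orthonormal columns; absorbing a sign gives the scalar $\alpha\in\mathbb{R}$. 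Conversely, if $\mbC = \alpha\mbU$ with $\mbU^\top\mbU = \mbI_n$, then $\|\mbC\|_{\mathrm{F}}^2 = \alpha^2 n$ and $\|\mbC^\dagger\|_2 = 1/|\alpha|$, so $\kappa(\mbC) = \sqrt{n}$, confirming the infimum is achieved.

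I do not anticipate a serious obstacle here; the argument is essentially the observation \eqref{eq:kappa_sv} plus an elementary inequality. The one point that needs care is the degenerate/rank-deficient case — I would state at the outset that $\kappa(\mbC)$ is finite only when $\mbC$ has full column rank $n$ (otherwise $\|\mbC^\dagger\|_2$ does not control the smallest \emph{nonzero} singular value in a way that bounds the conditioning of the linear system), so the infimum is taken over full-rank matrices, and within that class it is attained. A secondary bookkeeping point is making the "if and only if" in the statement precise: one direction is the equality analysis of the inequality $\sum_i \sigma_i^2 \geq n\sigma_n^2$, the other is the direct computation just sketched.
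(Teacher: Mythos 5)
Your proposal is correct, and it actually takes a more direct route than the paper. The paper's proof factors $\kappa(\mbC)=\frac{\|\mbC\|_{\mathrm{F}}}{\sigma_{\mathrm{max}}}\varrho(\mbC)$, invokes the inequality $\|\mbC\|_{\mathrm{F}}\leq\sqrt{n}\,\|\mbC\|_{2}$ to get $\kappa(\mbC)\leq\sqrt{n}\,\varrho(\mbC)$, and then characterizes $\varrho(\mbC)=1$ as occurring exactly for scaled matrices with orthonormal columns (via $\mbC^{\top}\mbC=\alpha^{2}\mbI_{n}$). That argument establishes attainment of the value $\sqrt{n}$ at $\mbC=\alpha\mbU$, but the chain it writes down is an \emph{upper} bound on $\kappa$ in terms of $\varrho$, so the lower bound $\kappa(\mbC)\geq\sqrt{n}$ for arbitrary $\mbC$ is left implicit. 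Your identity $\kappa^{2}(\mbC)=\sum_{i=1}^{n}\sigma_{i}^{2}/\sigma_{n}^{2}\geq n$ is exactly the complementary inequality $\|\mbC\|_{\mathrm{F}}\geq\sqrt{n}\,\sigma_{\mathrm{min}}$, and it delivers both the lower bound and the equality case (all singular values equal, hence $\mbC=\alpha\mbU$ by the SVD) in a single step, with the converse computation $\kappa(\alpha\mbU)=\sqrt{n}$ confirming attainment. Your explicit handling of the rank-deficient case and of the "if and only if" is also cleaner than the paper's, which only argues $\sigma_{\mathrm{min}}=\sigma_{\mathrm{max}}\Rightarrow\mbC^{\top}\mbC=\alpha^{2}\mbI_{n}$ without the full-rank caveat. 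In short: same singular-value toolbox, but you use the opposite (and, for the infimum claim, the logically needed) Frobenius-versus-spectral-norm inequality, which makes your write-up the more complete of the two.
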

\begin{proof}
The condition number of the matrix $\mbC$ is defined as $\varrho\left(\mbC\right)=\frac{\sigma_{\mathrm{max}}}{\sigma_{\mathrm{min}}}$, where $\sigma_{\mathrm{max}}$ and $\sigma_{\mathrm{min}}$ are the maximum and minimum singular values of the matrix $\mbC$, respectively. The scaled condition number can be written as $\kappa\left(\mbC\right)=\frac{\|\mbC\|_{\mathrm{F}}}{\sigma_{\mathrm{min}}}$. Therefore, the scaled condition number $\kappa\left(\mbC\right)$ has the following relation with the condition number $\varrho\left(\mbC\right)$,
\begin{equation}
\label{gorang}
\kappa\left(\mbC\right)=\frac{\|\mbC\|_{\mathrm{F}}}{\sigma_{\mathrm{max}}}\varrho\left(\mbC\right). 
\end{equation}
Based on the relation between the norm-$2$ and the Frobenius norm of a matrix $\mbC$, $\|\mbC\|_{\mathrm{F}}\leq \sqrt{n}  \|\mbC\|_{\mathrm{2}}$ or equivalently, $\frac{\|\mbC\|_{\mathrm{F}}}{\sigma_{\mathrm{min}}} \leq \sqrt{n} \frac{\|\mbC\|_{\mathrm{2}}}{\sigma_{\mathrm{min}}}$,
the condition number $\varrho\left(\mbC\right)$ can be considered to be an upper bound for the scaled condition number $\kappa\left(\mbC\right)$ as follows,
$\kappa\left(\mbC\right)\leq \sqrt{n} \varrho\left(\mbC\right)$.
Additionally,
the lowest possible value of $\varrho$ is $1$ which is achieved for scaled unitary matrices $\mbU$ as if we let $\mbS=\alpha \mbU$, and $\mbO=\mbS^{\top}\mbS=\alpha^{2}\mbI_{n}$, then, $\sigma_{\mathrm{min}}=\sigma_{\mathrm{max}}=\alpha$, and $\varrho=1$. Vice versa, if $\varrho=1$, it means $\sigma_{\mathrm{min}}=\sigma_{\mathrm{max}}$ which leads to a diagonal matrix $\mbO=\alpha^{2}\mbI_{n}$. 
\end{proof}
Following Theorem~\ref{scaled_number}, it would be enough to make the matrix $\mbC$ unitary by a process which is referred to as the \emph{preconditioning method}. In preconditioning, the linear feasibility is rewritten as $\mbC \mbM \mathbf{z}\succeq\mathbf{b}$,
where $\mbM$ is the preconditioner and $\mathbf{x}=\mbM \mathbf{z}$. The straightforward way to approach this task is to use QR decomposition where the constraint matrix is decomposed as $\mbC=\mbQ_{c}\mbR_{c}$, where $\mbQ_{c}\in\mathbb{R}^{m\times n}$ is a unitary matrix, and $\mbR_{c}\in\mathbb{R}^{n\times n}$ is an upper triangular matrix, leading to $\mbQ_{c}=\mbC\mbR^{-1}_{c}$. For a matrix $\mbC\in\mathbb{R}^{m\times n}$, since we have assumed $m>n$, we can obtain a unitary matrix $\mbQ_{c}\in\mathbb{R}^{m\times n}$ and an upper triangular matrix $\mbR_{c}\in\mathbb{R}^{n\times n}$ such that the QR-decomposition holds; i.e. $\mbC=\mbQ_{c}\mbR_{c}$.
Thus, based on Theorem~\ref{scaled_number}, a good choice for the preconditioner is $\mbM=\mbR^{-1}_{c}$. To find a
point $\bar{\mathbf{z}}$ in a nonempty feasible set,
the SKM method described in Section~\ref{sec_SKM} is employed.
Finally, one may approach the solution of the original linear feasibility $\mbC\mbx\succeq\mathbf{b}$ by computing $\bar{\mathbf{x}}=\mbR^{-1}_{c}\bar{\mathbf{z}}$.
We refer to this method \emph{Pr}econditioned \emph{SKM} (PrSKM) which is summarized in Algorithm~\ref{algorithm_1}.
%---------------------------------------------------------
\begin{algorithm}[t]
\caption{PrSKM Algorithm}
\label{algorithm_1}
\begin{algorithmic}[1]
\Statex \emph{Input:} A matrix $\mbC$ and the measurement vector $\mbb$.
\Statex \emph{Output:} A solution $\bar{\mbx}$ in a nonempty feasible set of $\mbC\mbx\succeq\mbb$.
\State $\left[\mbQ_{c},\mbR_{c}\right]=\operatorname{QR}\left(\mbC\right)$ $\triangleright$ $\operatorname{QR}(\cdot)$ computes the QR-decomposition of a matrix $\mbC$.
%\State $\mbM\gets\mbR^{-1}_{c}$ $\triangleright$ $\mbM$ is the preconditioner matrix.
\State $\mbQ_{c}\mbz\succeq\mbb$ $\triangleright$ The new problem that we should solve respect to $\mbz$.
\State Choose a sample set of $\gamma$ constraints (denoted as $\mathcal{T}_{i}$) uniformly at random from the rows of $\mbQ_{c}$.
\State 
$j_{\star}\gets\operatorname{argmax}~\left\{ \left(b_{j}-\langle\mbq_{j},\mathbf{z}_{i}\rangle\right)^{+}\right\},~j\in\mathcal{T}_{i}$ $\triangleright$ $\mbq_{j}$ is the $j$-th row of $\mbQ_{c}$.
\State $\mathbf{z}_{i+1}\gets\mathbf{z}_{i}+\lambda_{i}\frac{\left(b_{j_{\star}}-\langle\mbq_{j_{\star}},\mathbf{z}_{i}\rangle\right)^{+}}{\left\|\mbq_{j_\star}\right\|^{2}_{2}} \mbq^{\star}_{j_{\star}}$.
\State Repeat steps (4)-(6) until convergence and obtain $\bar{\mbz}$.
\State $\mbR_{c}\bar{\mbx}=\bar{\mathbf{z}}$ $\triangleright$ Obtain $\bar{\mbx}$ via the Gaussian elimination algorithm.
\State \Return $\bar{\mbx}$
\end{algorithmic}
\end{algorithm}
%---------------------------------------------------------
As shown in Theorem~\ref{scaled_number}, the scaled condition number of the matrix $\mbQ_{c}$ is $\kappa\left(\mbQ_{c}\right)=\sqrt{n}$. Therefore, step~$5$ of Algorithm~\ref{algorithm_1} converges linearly in expectation to a nonempty feasible set of $\mbQ_{c}\mbz\succeq\mbb$ as follows,
\begin{equation}
\label{convergence}
\mathbb{E}\left\{\hbar\left(\mathbf{x}_{i},\widehat{\mbx}\right)\right\} \leq \left(1-\frac{1}{n}\right)^{i}~ \hbar\left(\mathbf{x}_{0},\widehat{\mbx}\right).
\end{equation}
Note that to run the Algorithm~\ref{algorithm_1} for solving the one-bit polyhedron \eqref{eq:80n}, it is enough to set $\mbC=\mbP_{\mby}$ and $\mbb=\operatorname{vec}\left(\mbR_{\mby}\right)\odot \operatorname{vec}\left(\bGamma\right)$. Later, in Theorem~\ref{theorem_2}, we will show that it is only required to set $\mbC=\mbA$ in Algorithm~\ref{algorithm_1} which is more computationally and storagelly efficient compared to that of setting $\mbC=\mbP_{\mby}$.

\subsection{Storage-Friendly PrSKM}
The PrSKM algorithm uses QR decomposition as a preconditioning process, which can be computationally challenging for high-dimensional matrices. To address this issue, the literature has proposed a technique called \emph{sketch-and-precondition}, as outlined in \cite{rokhlin2008fast}. This approach is designed to mitigate the curse of dimensionality associated with QR-decomposition-based preconditioning. 

To explain how this method works, consider for simplicity an overdetermined linear system, i.e., $\mbU\mbx=\mbb$ with $\mbU\in\mathbb{R}^{m\times n}$, $m\gg n$. 
%Define $\bOmega\in \mathbb{R}^{m\times s}$ a Gaussian matrix,
Define a Gaussian matrix $\mbN\in \mathbb{R}^{m\times s}$. Sketch-and-precondition computes a sketch $\mbN^{\top}\mbU$ using a random test matrix $\mbN$ with sketch size $s=\mathcal{O}\left(n/\epsilon^{2}\right)\ll m$ where $\epsilon\in (0,1)$. Next, it performs a QR-decomposition $\mbN^{\top}\mbU=\mbQ_p\mbR_p$ of the resulting $s\times n$ matrix and then uses $\mbR_{p}^{-1}$ as a preconditioner for $\mbU$. This approach significantly reduces the computational cost of QR-decomposition from $\mathcal{O}\left(mn^2\right)$ to $\mathcal{O}\left(sn^2\right)$, regardless of the number of rows. This reduction in computational cost is particularly advantageous for highly-overdetermined linear systems, demonstrating the significance of this approach. For more information, we encourage a reader to see
%This approach is comprehensively investigated in 
\cite{martinsson2020randomized,woodruff2014sketching}. This paper introduces the first instance of utilizing the sketch-and-precondition technique on a linear inequality system within the literature. 
%Therefore, 
To explicitly state the algorithmic implementation of storage-friendly PrSKM, the first two steps of Algorithm~\ref{algorithm_1} must be replaced by the following steps:
\begin{enumerate}
    \item Generate a Gaussian test matrix $\mbN\in\mathbb{R}^{m\times s}$ where $s$ follows $\mathcal{O}\left(n/\epsilon^{2}\right)$ with $\epsilon=\frac{1}{2}$, sketch the matrix $\mbC\in\mathbb{R}^{m\times n}$ from $\mbC\mbx\succeq\mbb$ as $\mbS=\mbN^{\top}\mbC$.
    \item $\left[\mbQ_s,\mbR_s\right]=\operatorname{QR}\left(\mbS\right)$ $\triangleright$ $\operatorname{QR}(\cdot)$ computes the QR-decomposition of the sketched matrix $\mbS$, where $\mbQ_s\in \mathbb{R}^{s\times n}$ and $\mbR_{s}\in\mathbb{R}^{n\times n}$.
    \item $\mbM=\mbR^{-1}_{s}$ $\triangleright$ $\mbM$ is the preconditioner.
    \item $\mbC \mbM \mathbf{z}\succeq\mathbf{b}$ $\triangleright$ New problem that we should solve respect to $\mbz$.
\end{enumerate}

We provide the convergence of storage-friendly PrSKM to the nonempty feasible set of $\mbC \mbM \mathbf{z}\succeq\mathbf{b}$ in the following theorem:
\begin{theorem}
\label{stprskm}
Suppose $\mbN\in\mathbb{R}^{m\times s}$ is a Gaussian matrix with the sketch size $s=\mathcal{O}\left(n/\epsilon^{2}\right)$ with $\epsilon\in (0,1)$ (e.g., $\epsilon=\frac{1}{2}$). Consider the sketch matrix as $\mbS=\mbN^{\top}\mbC$ where $\mbC\in\mathbb{R}^{m\times n}$, $m\gg n$. Assume $\left[\mbQ_s,\mbR_s\right]=\operatorname{QR}\left(\mbS\right)$ where $\operatorname{QR}(\cdot)$ denotes the QR-decomposition operator. Then, the storage-friendly PrSKM algorithm converges linearly in expectation to the nonempty feasible set of $\mbC\mbR_{s}^{-1} \mathbf{z}\succeq\mathbf{b}$ as follows:
\begin{equation}
\label{stprskm1}
\mathbb{E}\left\{\hbar\left(\mathbf{x}_{i},\widehat{\mbx}\right)\right\}\leq\left(1-\frac{1}{3 n}\right)^{i}~ \hbar\left(\mathbf{x}_{0},\widehat{\mbx}\right).
\end{equation}
\end{theorem}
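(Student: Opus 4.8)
The plan is to reduce the claim to Theorem~\ref{scaled_number} and the randomized Kaczmarz bound~\eqref{eq:15}: it suffices to show that the preconditioned matrix $\mbC\mbR_s^{-1}$ (of full column rank $n$, assuming $\mbC$ is) has scaled condition number $\kappa\left(\mbC\mbR_s^{-1}\right)\le\sqrt{3n}$ when $\epsilon=\tfrac12$. Granting this, $q_{\mathrm{RKA}}=1/\kappa^2\left(\mbC\mbR_s^{-1}\right)\ge\tfrac{1}{3n}$, and feeding this into~\eqref{eq:15} for the consistent inequality system $\mbC\mbR_s^{-1}\mbz\succeq\mbb$ solved by the SKM iteration of Algorithm~\ref{algorithm_1} produces~\eqref{stprskm1}; here the running iterate of that iteration is the $\mbx_i$ appearing in the statement, exactly as in~\eqref{convergence}, and the SKM iteration inherits at least the RKA rate on the same system.

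The technical core is the subspace-embedding guarantee underlying sketch-and-precondition \cite{rokhlin2008fast}: for a Gaussian $\mbN\in\mathbb{R}^{m\times s}$ with $s=\mathcal{O}\left(n/\epsilon^2\right)$, with high probability the (suitably scaled) map $\mbN^{\top}$ acts as an $\epsilon$-isometry on the $n$-dimensional column space of $\mbC$,
\begin{equation}
\left(1-\epsilon\right)\|\mbC\mbz\|_2^2\;\le\;\left\|\mbN^{\top}\mbC\mbz\right\|_2^2\;\le\;\left(1+\epsilon\right)\|\mbC\mbz\|_2^2,\qquad\mbz\in\mathbb{R}^n,
\end{equation}
which I would obtain in the usual way from an $\tfrac{\epsilon}{2}$-net of the unit sphere of $\operatorname{range}(\mbC)$ together with the Johnson--Lindenstrauss concentration of $\|\mbN^{\top}\mbu\|_2$. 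Using the QR factorization $\mbN^{\top}\mbC=\mbQ_s\mbR_s$ with $\mbQ_s$ orthonormal, the middle quantity equals $\|\mbR_s\mbz\|_2$, so $\mbR_s$ is invertible; substituting $\mbz=\mbR_s^{-1}\mbw$ gives
\begin{equation}
\frac{1}{1+\epsilon}\,\|\mbw\|_2^2\;\le\;\left\|\mbC\mbR_s^{-1}\mbw\right\|_2^2\;\le\;\frac{1}{1-\epsilon}\,\|\mbw\|_2^2,
\end{equation}
i.e.\ every singular value of $\mbC\mbR_s^{-1}$ lies in $\left[\left(1+\epsilon\right)^{-1/2},\left(1-\epsilon\right)^{-1/2}\right]$ (the scaling of $\mbN$ drops out, being absorbed into $\mbR_s$).

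Next I would bound the scaled condition number exactly as in the proof of Theorem~\ref{scaled_number}: the singular-value bounds give $\|\mbC\mbR_s^{-1}\|_{\mathrm{F}}^2=\sum_{i=1}^n\sigma_i^2\left(\mbC\mbR_s^{-1}\right)\le\tfrac{n}{1-\epsilon}$ and $\sigma_{\min}^2\left(\mbC\mbR_s^{-1}\right)\ge\tfrac{1}{1+\epsilon}$, so
\begin{equation}
\kappa^2\!\left(\mbC\mbR_s^{-1}\right)=\frac{\|\mbC\mbR_s^{-1}\|_{\mathrm{F}}^2}{\sigma_{\min}^2\!\left(\mbC\mbR_s^{-1}\right)}\;\le\;n\cdot\frac{1+\epsilon}{1-\epsilon}\;=\;3n\quad\text{when }\epsilon=\tfrac12.
\end{equation}
Then $q_{\mathrm{RKA}}\ge\tfrac{1}{3n}$, and~\eqref{eq:15} closes the argument.

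The hard part will be the probabilistic bookkeeping rather than any step of the linear algebra: the embedding inequality holds only on a high-probability event over $\mbN$, so the conclusion is really ``linear convergence in expectation, on an event of high probability over the sketch,'' and one must either state it with that qualifier or absorb the failure probability into the constant hidden in $s=\mathcal{O}\left(n/\epsilon^2\right)$. Obtaining the clean constant $3n$ (rather than $3n(1+o(1))$) also forces the use of the sharp two-sided isometry bound with $\epsilon$ pinned to $\tfrac12$, instead of a looser operator-norm perturbation estimate for $\mbR_s$; beyond that, everything is the routine computation already carried out for Theorem~\ref{scaled_number}.
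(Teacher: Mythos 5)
Your proposal is correct and follows essentially the same route as the paper's proof: invoke the Gaussian $\epsilon$-subspace embedding for the sketch $\mbN^{\top}\mbC$, substitute through the QR factorization to show all singular values of $\mbC\mbR_s^{-1}$ lie in $\left[\left(1+\epsilon\right)^{-1/2},\left(1-\epsilon\right)^{-1/2}\right]$, conclude $\kappa\left(\mbC\mbR_s^{-1}\right)\leq\sqrt{3n}$ at $\epsilon=\tfrac12$, and feed this into the RKA rate \eqref{eq:15}. The only cosmetic differences are that the paper cites the embedding result directly rather than re-deriving it via a net argument, and it passes through the bound $\kappa\leq\sqrt{n}\,\varrho$ instead of bounding the Frobenius norm by singular values, which amounts to the same estimate; your remark about the high-probability qualifier over the sketch is a fair point that the paper leaves implicit.
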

The proof of Theorem~\ref{stprskm} is presented in Appendix~\ref{ak}.

\subsection{Block SKM}
\label{ORKA}
In two previous sections, we have proposed PrSKM and storage-friendly PrSKM algorithms in order to solve the one-bit polyhedron \eqref{eq:80n} in an asymptotic sample size scenario. It is worth noting that both methods are
\emph{row-based} approaches, where at each iteration, the row index of the matrix $\mbP_{\mby}$ is chosen independently at random. However, the matrix $\mbP_{\mby}$ in (\ref{eq:80n}) has a block structure as formulated in \eqref{eq:90}.
This fact motivates us to investigate the block-based RKA methods to find the desired signal in the one-bit polyhedron $\mathcal{P}_{\mathbf{x}}$ for further efficiency enhancement. Our proposed algorithm for block systems, \emph{Block SKM}, is motivated by (i) random selection of one block at each iteration, (ii) choosing a subset of rows using the idea of Motzkin sampling, and (iii) updating the solution using the randomized block Kaczmarz method, which takes advantage of the efficient matrix-vector multiplication, thus giving the method a significant reduction in computational cost. Algorithm~\ref{algorithm_2} shows the implementation of our proposed Block SKM method to solve the linear feasibility $\mbB\mbx\succeq\mbb$ with $\mbB=\left[\begin{array}{c|c|c}\mbB^{\top}_{1} &\cdots&\mbB^{\top}_{m}\end{array}\right]^{\top}$ and $\mathbf{b}=\left[\begin{array}{c|c|c}\mathbf{b}^{\top}_{1} &\cdots&\mathbf{b}^{\top}_{m}\end{array}\right]^{\top}$ where $\mbB_{j}\in\mathbb{R}^{n\times d}$ and $\mbb_{j}\in\mathbb{R}^{n}$ for all $j\in[m]$. Note that in step~$4$ of Algorithm~\ref{algorithm_2}, the reason behind choosing $k^{\prime}<d$ is due to the computation of $\left(\mbB_{j}^{\prime}\mbB_{j}^{\prime\top}\right)^{-1}$ in the next step (step $5$). For $k^{\prime}>d$, the matrix $\mbB_{j}^{\prime}\mbB_{j}^{\prime\top}$ is rank-deficient and its inverse is not available. The Block SKM algorithm can be considered to be a special case of the more general \emph{sketch-and-project} method with a sparse block sketch matrix as defined in \cite{derezinski2022sharp}. The convergence of Block SKM algorithm
%is obtained 
with the sparse Gaussian sketch in the case of $k^{\prime}=1$ is presented in the following lemma:
%---------------------------------------------------------
\begin{algorithm}[t]
\caption{Block SKM Algorithm}
\label{algorithm_2}
\begin{algorithmic}[1]
\Statex \emph{Input:} A block matrix $\mbB=\left[\begin{array}{c|c|c}\mbB^{\top}_{1} &\cdots&\mbB^{\top}_{m}\end{array}\right]^{\top}$, and the measurement vector $\mathbf{b}=\left[\begin{array}{c|c|c}\mathbf{b}^{\top}_{1} &\cdots&\mathbf{b}^{\top}_{m}\end{array}\right]^{\top}$.
\Statex \emph{Output:} A solution $\bar{\mbx}$ in a nonempty feasible set of $\mbB\mbx\succeq\mbb$.
\State Choose a block $\mbB_{j}$ uniformly at random with the probability $\operatorname{Pr}\{j=k\}=\frac{\left\|\mbB_{k}\right\|^{2}_{\mathrm{F}}}{\|\mbB\|_{\mathrm{F}}^{2}}$.
\State $\mathbf{e}\gets\mbB_{j}\mathbf{x}-\mathbf{b}_{j}$.
\State $\mbe^{\prime}\gets\operatorname{sort}\left(\mbe\right)$ $\triangleright$ $\operatorname{sort}(\cdot)$ is the operator that sorts the elements of the vector $\mbe$ from $e_{\mathrm{max}}$ (the maximum element of $\mathbf{e}$) to $e_{\mathrm{min}}$ (the minimum element of $\mathbf{e}$). This step is inspired by the idea of the Motzkin sampling, presented in Section~\ref{sec_SKM}, to have an accelerated convergence.
\State $[\mbB_{j}^{\prime},\mbb_{j}^{\prime}]=\operatorname{select}_{k^{\prime}}(\mbB_{j},\mbb_{j},\mbe^{\prime})$ $\triangleright$ $\operatorname{select}_{k^{\prime}}(\cdot)$ is the operator which selects the first $k^{\prime}<d$ element of $\mathbf{e}^{\prime}$ and construct the sub-problem $\mbB_{j}^{\prime}\mathbf{x}	\succeq\mathbf{b}_{j}^{\prime}$, where $\mbB_{j}^{\prime}\in\mathbb{R}^{k^{\prime}\times d}$ and $\mathbf{b}_{j}^{\prime}\in\mathbb{R}^{k^{\prime}}$.
\State $\mbB_{j}^{\prime\dagger}\gets\mbB_{j}^{\prime\top}\left(\mbB_{j}^{\prime}\mbB_{j}^{\prime\top}\right)^{-1}$ $\triangleright$ $\mbB_{j}^{\prime\dagger}$ is the Moore-Penrose pseudoinverse of $\mbB_{j}^{\prime}$.
\State $\mathbf{x}_{i+1}\gets\mathbf{x}_{i}+\lambda_{i}\mbB_{j}^{\prime\dagger}\left(\mathbf{b}_{j}^{\prime}-\mbB_{j}^{\prime}\mathbf{x}\right)^{+}$.
\State Repeat steps (1)-(6) until convergence and obtain $\bar{\mbx}$.
\State \Return $\bar{\mbx}$
\end{algorithmic}
\end{algorithm}
%---------------------------------------------------------
\begin{lemma}
\label{kvm}
The Block SKM algorithm with the sparse Gaussian sketch in the case of $k^{\prime}=1$, converges linearly in expectation to the nonempty feasible set of $\mbB\mathbf{x}\succeq \mathbf{b}$, $\mbB\in\mathbb{R}^{m n\times d}$, as follows,
\begin{equation}
\begin{aligned}
\mathbb{E}\left\{\hbar\left(\mathbf{x}_{i},\widehat{\mbx}\right)\right\}\leq \left(1-\frac{c \sigma_{\min }^2(\widehat{\mbB})\log n}{\|\widehat{\mbB}\|_{\mathrm{F}}^2}\right)^{i}\hbar\left(\mathbf{x}_{0},\widehat{\mbx}\right),
\end{aligned}
\end{equation}
where $c$ is a positive constant value, $\widehat{\mbB}$ is the $n\times d$ submatrix of $\mbB$ (one of the candidates of $\mbB_{j}$ in Algorithm~\ref{algorithm_2}).
\end{lemma}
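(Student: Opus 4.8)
The plan is to identify the Block SKM iteration with $k^{\prime}=1$ as an instance of the sketch-and-project framework with a \emph{sparse Gaussian sketch}, and then invoke the sharp convergence rate for that framework from \cite{derezinski2022sharp}. First I would fix attention on the block that realizes the submatrix $\widehat{\mbB}\in\mathbb{R}^{n\times d}$ (one of the candidate blocks $\mbB_{j}$). With $k^{\prime}=1$ the ``sort-and-select'' step of Algorithm~\ref{algorithm_2} picks, within that block, a single row according to the largest positive residual; relaxing the Motzkin (max-residual) rule to a randomized row choice only \emph{helps} the contraction, so it suffices to lower bound the expected progress of the randomized single-row update $\mathbf{x}_{i+1}=\mathbf{x}_{i}+\lambda_{i}\,\mbb_{j}^{\prime\dagger}(\mbb_{j}^{\prime}-\mbB_{j}^{\prime}\mathbf{x}_{i})^{+}$. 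Selecting a block uniformly with the stated Frobenius-weighted probabilities and then a row inside it is, up to the residual clipping $(\cdot)^{+}$ that conjugate-gradient-type analyses handle exactly as in the RKA case \eqref{eq:22}--\eqref{eq:15}, the same as projecting onto a hyperplane drawn from a sparse (one nonzero per selected block) Gaussian-type sketch of $\widehat{\mbB}$.

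Next I would quote the key quantity governing sketch-and-project convergence: the contraction factor is $1-\lambda_{\min}\!\big(\mathbb{E}[\,\mbH\,]\big)$ where $\mbH$ is the (normalized) projection matrix induced by the random sketch, and for a \emph{sparse} sketch that samples $\Theta(1)$ coordinates at a time the relevant spectral bound picks up a logarithmic factor—precisely the phenomenon analyzed in \cite{derezinski2022sharp}, which shows $\mathbb{E}[\mbH]\succeq c\,\frac{\sigma_{\min}^{2}(\widehat{\mbB})\log n}{\|\widehat{\mbB}\|_{\mathrm F}^{2}}\,\mbI$ for an absolute constant $c>0$. Substituting this bound into the generic one-step inequality $\mathbb{E}\{\hbar(\mathbf{x}_{i+1},\widehat{\mbx})\mid \mathbf{x}_{i}\}\le\big(1-\lambda_{\min}(\mathbb{E}[\mbH])\big)\hbar(\mathbf{x}_{i},\widehat{\mbx})$ and iterating (taking total expectation and telescoping) yields exactly the claimed
\[
\mathbb{E}\left\{\hbar\left(\mathbf{x}_{i},\widehat{\mbx}\right)\right\}\leq \left(1-\frac{c\,\sigma_{\min}^{2}(\widehat{\mbB})\log n}{\|\widehat{\mbB}\|_{\mathrm F}^{2}}\right)^{i}\hbar\left(\mathbf{x}_{0},\widehat{\mbx}\right).
\]
For the inequality (as opposed to equality) system one checks, as in \eqref{eq:22}, that replacing the residual by its positive part never increases the distance to the feasible polyhedron $\mathcal{P}_{\mbx}$, so the same contraction survives; the relaxation parameter $\lambda_{i}$ stays in the admissible range $(0,2)$ exactly as for SKM.

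The main obstacle I anticipate is the bookkeeping that bridges the \emph{two-level} random choice in Algorithm~\ref{algorithm_2}—first a block $\mbB_{j}$ with Frobenius-weighted probability, then a row inside the (Motzkin-sorted) block—to the single-level sparse-sketch model of \cite{derezinski2022sharp}; one must verify that the effective per-row sampling distribution matches (or stochastically dominates, in the favorable direction) the one for which the $\log n$ spectral bound is proved, and that conditioning on the block does not degrade the constant. A secondary technical point is justifying that the Motzkin max-residual selection can only accelerate convergence relative to uniform/weighted random selection, so that the worst-case bound for the randomized variant transfers; this is standard (the max-residual rule dominates the average residual used in the RKA expectation bound), but it should be stated explicitly. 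Everything else—the positive-part handling and the telescoping—is routine.
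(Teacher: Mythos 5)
There is a genuine gap, and it sits exactly where the lemma gets its content: the $\log n$ factor. In the paper's argument, after expanding $\left\|\mbx_{i+1}-\widehat{\mbx}\right\|_2^2$ and disposing of the cross term via feasibility of $\widehat{\mbx}$ (which is how the clipping $(\cdot)^{+}$ is handled, see \eqref{St_200}--\eqref{St_100}), the expectation over the sparse \emph{Gaussian} sketch is treated by Jensen's inequality for $f(x,y)=x^2/y$, the computation $\mathbb{E}_{\mbS}\left\{\left\|\left(\mbS^{\top}\mbB\right)_{j_{\star}}\right\|_2^2\right\}=\|\widehat{\mbB}\|_{\mathrm{F}}^2$, and—crucially—the lower bound $\mathbb{E}_{\mbS}\left\{\left\|\left(\mbS^{\top}\mbb-\mbS^{\top}\mbB\mbx_i\right)^{+}\right\|_{\infty}\right\}\geq c\left\|\left(\widehat{\mbb}-\widehat{\mbB}\mbx_i\right)^{+}\right\|_2\sqrt{\log n}$, which comes from the expected maximum of $n$ Gaussian variables (Lemma~\ref{ghoorak}); the Hoffman bound \cite[Theorem~4.2]{leventhal2010randomized} then converts $\left\|\left(\widehat{\mbb}-\widehat{\mbB}\mbx_i\right)^{+}\right\|_2^2\geq\sigma_{\min}^2(\widehat{\mbB})\left\|\mbx_i-\widehat{\mbx}\right\|_2^2$. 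The $\log n$ is therefore produced by the Motzkin max-residual selection applied to the Gaussian-sketched block. Your step ``relaxing the Motzkin rule to a randomized row choice only helps, so it suffices to analyze the randomized single-row update'' throws away precisely this mechanism: a randomized single-row analysis yields at best the RKA-type contraction $1-\sigma_{\min}^2(\widehat{\mbB})/\|\widehat{\mbB}\|_{\mathrm{F}}^2$, with no $\log n$, so the dominance argument (even granting it) cannot deliver the stated rate.

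The second weak point is that your plan outsources the key spectral estimate, $\mathbb{E}[\mbH]\succeq c\,\sigma_{\min}^{2}(\widehat{\mbB})\log n\,\|\widehat{\mbB}\|_{\mathrm{F}}^{-2}\,\mbI$, to \cite{derezinski2022sharp}. The paper does not quote such a result; it explicitly notes that the literature offers no convergence guarantee for the block sketch with the identity block and replaces it with the sparse Gaussian sketch \eqref{St_3} precisely so that this estimate can be \emph{derived} (via the Gaussian-maximum bound and Jensen). Asserting it as a citation assumes the conclusion. Relatedly, the $1-\lambda_{\min}(\mathbb{E}[\mbH])$ formalism you invoke is the equality-system sketch-and-project bound; for the inequality system one must also connect the clipped residual to the distance to the feasible polyhedron, which is exactly what the Hoffman bound supplies and which is absent from your outline. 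If you want to salvage your route, you would need to (i) keep the max-residual step in the analysis rather than relaxing it, (ii) prove the $\sqrt{\log n}$ lower bound on the expected sup-norm of the Gaussian-sketched positive residual, and (iii) insert the Hoffman-type error bound—at which point you have reconstructed the paper's proof.
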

The proof of Lemma~\ref{kvm} is provided in Appendix~\ref{A}. In the rest of the paper, we will derive all theoretical guarantees for the RKA. Similar guarantees can also be derived for the variants of the RKA; i.e. SKM, Block SKM and etc.
Throughout the paper, we represent each iterate of the update process for RKA, PrSKM, and Block SKM as $\mathrm{KA}_{r}(\mbx)$, $\mathrm{KA}_{p}(\mbx)$, and $\mathrm{KA}_{b}(\mbx)$, respectively. To examine the performance of the Block SKM, we will compare it with the PrSKM, SKM and RKA.

\subsection{Comparing RKA, SKM, PrSKM and Block SKM}
\label{NUM_COM}
In this section, we numerically compare the RKA, SKM, PrSKM, and Block SKM in linear system of
inequalities. Accordingly, we utilize ORKA to make a linear equation $\mbB\mathbf{x}=\mathbf{y}$ a linear inequality system, where the number of time-varying sampling threshold sequences is $m=40$, $\mbB\in\mathbb{R}^{100\times 10}$, $\mathbf{x}\in\mathbb{R}^{10}$, and $\mathbf{y}\in\mathbb{R}^{100}$. Each row of $\mbB$ is generated as $\mathbf{b}_{j}\sim\mathcal{N}\left(\mathbf{0},\mbI_{10}\right)$. Also, the desired signal $\mathbf{x}$ is generated as $\mathbf{x}\sim\mathcal{N}\left(\mathbf{0},\mbI_{10}\right)$. All
time-varying sampling threshold sequences are generated according to 
$\left\{\boldsymbol{\uptau}^{(\ell)}\sim\mathcal{N}\left(\mathbf{0},\mbI_{100}\right)\right\}_{\ell=1}^{m}$. The performance of the RKA, SKM, PrSKM, and Block SKM is illustrated in Fig.~\ref{figure_1}. It can be seen that the Block SKM has a better accuracy in recovering the desired signal $\mathbf{x}$ in the one-bit polyhedron (\ref{eq:80n}) compared to the other three approaches. The NMSE results in Fig.~\ref{figure_1} are averaged over $1000$ experiments.

%-------------------------------------------------------
\begin{figure}[t]
	\centering
	%\subfloat[]
		{\includegraphics[width=0.47\columnwidth]{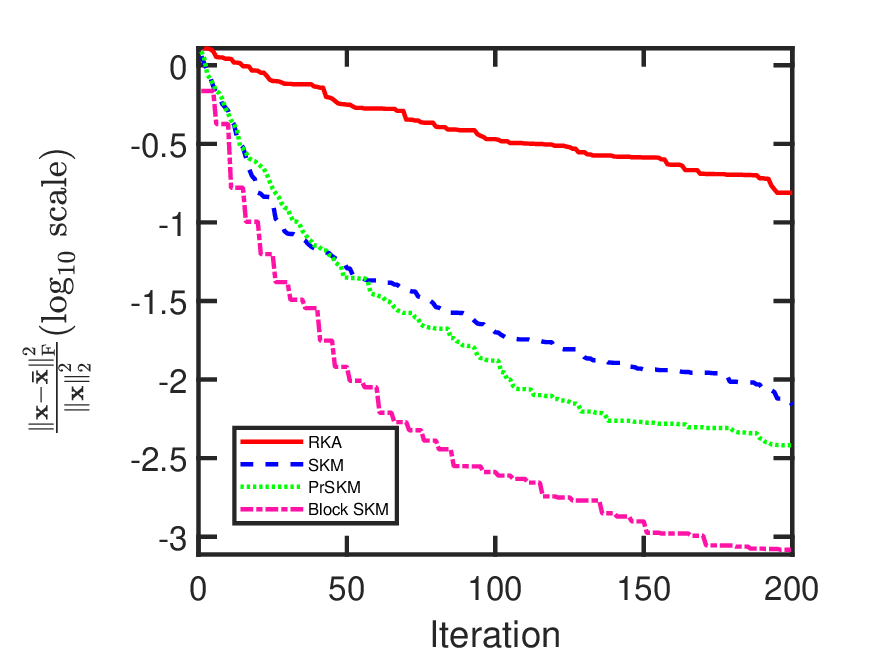}}
	\caption{Comparing the 
 recovery performance of the two proposed Kaczmarz algorithms, namely the PrSKM and the Block SKM, with that of SKM and RKA for
 %: (a) a linear equation system, (b) 
 a linear inequality system. 
 \vspace{-15pt}
 }
	\label{figure_1}
\end{figure}
%-------------------------------------------------------
\section{ORKA With Noisy Measurements}
\label{NOISE}
Herein, we formulate the noisy version of one-bit sampling with time-varying thresholds. Denote $\mbz=[z_{j}]\in\mathbb{R}^{n}$ as a noise vector which has been added to the linear system of equations $\mby=\mbA\mbx$ leading to $\mby_z=\mbA\mbx+\mbz$. Then, the corresponding noisy one-bit samples are generated as
\begin{equation}
\label{St_8}
\begin{aligned}
r_{j}^{(\ell)} &= \begin{cases} +1 & \langle\mba_j,\mbx\rangle+z_j>\tau_{j}^{(\ell)},\\ -1 & \langle\mba_j,\mbx\rangle+z_j<\tau_{j}^{(\ell)},
\end{cases}
\quad j\in[n],~\ell\in[m],
\end{aligned}
\end{equation}
where $\mba_{j}$ denotes the $j$-th row of a sampling matrix $\mbA$.
Consequently, the one-bit polyhedron associated with \eqref{St_8} is rewritten as
\begin{equation}
\label{St_10}
\mbP_{\mby_z} \mbx+\boldsymbol{\upnu}\succeq \operatorname{vec}\left(\mbR_{\mby_z}\right)\odot \operatorname{vec}\left(\bGamma\right),
\end{equation}
where $\mbP_{\mby_z}$ is defined similar to \eqref{eq:90} and $\boldsymbol{\upnu}=\Tilde{\bOmega}_{\mby_z}\mbz$ is the noise of our system with $\Tilde{\bOmega}_{\mby_z}$ defined similar to \eqref{eq:9}. For instance, %if we consider 
assuming a zero-mean Gaussian noise vector 
$\mbz\sim\mathcal{N}\left(\mathbf{0},\bSigma_z\right)$ with 
%mean vector $\boldsymbol{\upmu}$ and 
the covariance matrix $\bSigma_z$, the distribution of $\boldsymbol{\upnu}$ will be $\mathcal{N}\left(\mathbf{0}, \Tilde{\bOmega}_{\mby_z}\bSigma_z\Tilde{\bOmega}_{\mby_z}^{\mathrm{H}}\right)$.

The robustness of the RKA
against noise has been demonstrated in \cite{needell2010randomized}. Furthermore, the authors of \cite{huang2022linear} specifically explored the performance of the RKA %Kaczmarz algorithm 
in the presence of \emph{Gaussian} and \emph{Poisson} noise, highlighting its robustness even when dealing with Poisson noisy measurements. In our discussion, in Section~\ref{sec_noisy_a} we will explore how the inconsistency of a linear system in a noisy scenario manifests itself in the recovery error of the RKA. Next, in Section~\ref{QRKA} we will propose a novel algorithm to have a robust recovery performance in the presence of impulsive noise.

\subsection{Robustness of ORKA Against Noise}
\label{sec_noisy_a}
Given a linear system  of equations $\mbU\mbx=\mbb$ that is highly over-determined and subject to a noise vector $\mbn=[n_{j}]$ resulting in a corrupted system of equations $\mbU\mbx\approx\mbb+\mbn$.
The convergence rate of the noisy RKA was comprehensively discussed in \cite[Theorem~2.1]{needell2010randomized} for the case of $\mbU\mbx\approx\mbb+\mbn$.
The primary contrast between the convergence rates of RKA
and noisy RKA, as demonstrated in \cite[Theorem~2.1]{needell2010randomized}, lies in the second term of convergence rate $\kappa^2\max_{j}\frac{n^{2}_j}{\left\|\mbu_j\right\|^2_2}$. This term indicates the degree to which the error in the corrupted system $\mbU\mbx\approx\mbb+\mbn$ deviates from the main solution. 

Drawing inspiration from the convergence rate of the noisy RKA, 
we can similarly derive the convergence rate of noisy RKA in the case of noisy linear system of inequalities
%linear inequality systems, specifically 
$\mbC\mbx+\mbn\succeq\mbb$ using the following proposition:
\begin{proposition}
\label{prop_noisyRKA}
Let $\mbC\in\mathbb{R}^{m\times n}$ have full column rank and assume $\widehat{\mbx}$ is the solution of unperturbed feasibility problem $\mbC\mbx\succeq\mbb$. Let $\bar{\mbx}_{i}$ be the $i$-th iterate of the noisy RKA run with $\mbC\mbx+\mbn\succeq\mbb$, and let $n_j$ denotes the $j$-th element of $\mbn$. Then we have
\begin{equation}
\label{St_15000}
\mathbb{E}\left\{\left\|\bar{\mbx}_{i}-\widehat{\mbx}\right\|_2\right\} \leq \left(1-\frac{1}{\kappa^{2}\left(\mbC\right)}\right)^{\frac{i}{2}}~ \left\|\mathbf{x}_{0}-\widehat{\mbx}\right\|_2+\kappa\max_{j}\gamma_j,
\end{equation}
where $\gamma_j=\frac{|n_j|}{\left\|\mbc_j\right\|_2}$, with $\mbc_j$ denotes the $j$-th row of the matrix $\mbC$.
\end{proposition}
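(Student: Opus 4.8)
The plan is to transcribe the noisy‑RKA analysis of \cite[Theorem~2.1]{needell2010randomized} from the equality setting to the half‑space setting. Running the noisy RKA on $\mbC\mbx+\mbn\succeq\mbb$ means that, at iteration $i$, after drawing row $j$ with probability $\|\mbc_{j}\|_{2}^{2}/\|\mbC\|_{\mathrm{F}}^{2}$, one updates $\bar{\mbx}_{i+1}=\bar{\mbx}_{i}+\beta_{i}\mbc_{j}/\|\mbc_{j}\|_{2}^{2}$ with $\beta_{i}=(b_{j}-n_{j}-\langle\mbc_{j},\bar{\mbx}_{i}\rangle)^{+}$ as in \eqref{eq:22}; equivalently, $\bar{\mbx}_{i}$ is orthogonally projected onto the half‑space $\{\mbx:\langle\mbc_{j},\mbx\rangle\ge b_{j}-n_{j}\}$ whenever that constraint is violated. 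Unlike \cite{needell2010randomized}, where the update is a projection onto an affine hyperplane so that the error splits exactly by Pythagoras, here the truncation $(\cdot)^{+}$ must be handled with a one‑sided estimate.

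The core of the argument is the per‑iteration inequality
\begin{equation*}
\|\bar{\mbx}_{i+1}-\widehat{\mbx}\|_{2}^{2}\;\le\;\|\bar{\mbx}_{i}-\widehat{\mbx}\|_{2}^{2}-\frac{\rho_{j}^{2}}{\|\mbc_{j}\|_{2}^{2}}+\frac{n_{j}^{2}}{\|\mbc_{j}\|_{2}^{2}},\qquad\rho_{j}:=\big(b_{j}-\langle\mbc_{j},\bar{\mbx}_{i}\rangle\big)^{+},
\end{equation*}
which I would prove by splitting on whether $\beta_{i}=0$. If $\beta_{i}=0$ the iterate is unchanged and $\langle\mbc_{j},\bar{\mbx}_{i}\rangle\ge b_{j}-n_{j}$ forces $\rho_{j}\le|n_{j}|$, so the bound is trivial. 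If $\beta_{i}>0$, then $\langle\mbc_{j},\bar{\mbx}_{i}\rangle=b_{j}-n_{j}-\beta_{i}$; expanding the square and using $\langle\mbc_{j},\widehat{\mbx}\rangle\ge b_{j}$ (since $\widehat{\mbx}$ is feasible for the unperturbed system) gives $\langle\mbc_{j},\bar{\mbx}_{i}-\widehat{\mbx}\rangle\le-n_{j}-\beta_{i}$, hence a decrement of $(\beta_{i}^{2}+2\beta_{i}n_{j})/\|\mbc_{j}\|_{2}^{2}=\big((\beta_{i}+n_{j})^{2}-n_{j}^{2}\big)/\|\mbc_{j}\|_{2}^{2}$; finally the identity $\beta_{i}+n_{j}=b_{j}-\langle\mbc_{j},\bar{\mbx}_{i}\rangle$ yields $(\beta_{i}+n_{j})^{2}\ge\rho_{j}^{2}$, which closes the case.

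Next I would take expectations over the row index. Because $\sum_{k}\rho_{k}^{2}=\|(\mbb-\mbC\bar{\mbx}_{i})^{+}\|_{2}^{2}$ and $\sum_{k}n_{k}^{2}=\|\mbn\|_{2}^{2}$, averaging the per‑step bound with weights $\|\mbc_{k}\|_{2}^{2}/\|\mbC\|_{\mathrm{F}}^{2}$ gives $\mathbb{E}\{\|\bar{\mbx}_{i+1}-\widehat{\mbx}\|_{2}^{2}\}\le\mathbb{E}\{\|\bar{\mbx}_{i}-\widehat{\mbx}\|_{2}^{2}\}-\mathbb{E}\{\|(\mbb-\mbC\bar{\mbx}_{i})^{+}\|_{2}^{2}\}/\|\mbC\|_{\mathrm{F}}^{2}+\|\mbn\|_{2}^{2}/\|\mbC\|_{\mathrm{F}}^{2}$. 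The middle term is exactly the decrement that drives the noiseless rate \eqref{eq:15}, so it is at least $\kappa^{-2}\!\left(\mbC\right)\mathbb{E}\{\|\bar{\mbx}_{i}-\widehat{\mbx}\|_{2}^{2}\}$, while $\|\mbn\|_{2}^{2}/\|\mbC\|_{\mathrm{F}}^{2}$ is a convex combination of the numbers $\gamma_{k}^{2}=n_{k}^{2}/\|\mbc_{k}\|_{2}^{2}$ and therefore at most $\max_{k}\gamma_{k}^{2}$. This yields the recursion $\mathbb{E}\{\|\bar{\mbx}_{i+1}-\widehat{\mbx}\|_{2}^{2}\}\le(1-\kappa^{-2})\mathbb{E}\{\|\bar{\mbx}_{i}-\widehat{\mbx}\|_{2}^{2}\}+\max_{k}\gamma_{k}^{2}$; unrolling it and summing $\sum_{l\ge0}(1-\kappa^{-2})^{l}=\kappa^{2}$ gives $\mathbb{E}\{\|\bar{\mbx}_{i}-\widehat{\mbx}\|_{2}^{2}\}\le(1-\kappa^{-2})^{i}\|\mbx_{0}-\widehat{\mbx}\|_{2}^{2}+\kappa^{2}\max_{k}\gamma_{k}^{2}$, and Jensen's inequality together with $\sqrt{a+b}\le\sqrt{a}+\sqrt{b}$ then produces \eqref{St_15000}.

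The step I expect to be the main obstacle is the per‑iteration inequality: the half‑space projection has no exact Pythagorean splitting, so one must verify that projecting onto a noise‑perturbed half‑space never costs more than $n_{j}^{2}/\|\mbc_{j}\|_{2}^{2}$ above the genuine RKA decrement $\rho_{j}^{2}/\|\mbc_{j}\|_{2}^{2}$, and this is precisely where the identity $\beta_{i}+n_{j}=b_{j}-\langle\mbc_{j},\bar{\mbx}_{i}\rangle$ and the sign condition $\langle\mbc_{j},\widehat{\mbx}\rangle\ge b_{j}$ do the work. A secondary caveat, inherited from \eqref{eq:15} itself, is that replacing the expected squared residual $\mathbb{E}\{\|(\mbb-\mbC\bar{\mbx}_{i})^{+}\|_{2}^{2}\}/\|\mbC\|_{\mathrm{F}}^{2}$ by $\kappa^{-2}\mathbb{E}\{\|\bar{\mbx}_{i}-\widehat{\mbx}\|_{2}^{2}\}$ implicitly treats $\widehat{\mbx}$ as the point of the one‑bit polyhedron nearest to the iterate, which is the shrinking finite‑volume regime in which ORKA is meant to operate.
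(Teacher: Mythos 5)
Your argument is correct in substance, but it takes a genuinely different route from the paper's. The paper first proves a translation lemma (Lemma~\ref{A5_1} in Appendix~\ref{A5}): each perturbed half-space $\{\mbx:\langle\mbc_j,\mbx\rangle\ge b_j-n_j\}$ is the translate of the true half-space $\mathcal{H}_j$ along $\mbc_j$ by $\alpha_j=-n_j/\|\mbc_j\|_2^2$, so the noisy iterate can be written as a point of $\mathcal{H}_j$ plus $\alpha_j\mbc_j$; the triangle inequality then gives a per-step additive error $|n_j|/\|\mbc_j\|_2$, and an \emph{unsquared} recursion $\mathbb{E}\{\|\bar{\mbx}_{i}-\widehat{\mbx}\|_2\}\le(1-\kappa^{-2})^{1/2}\|\bar{\mbx}_{i-1}-\widehat{\mbx}\|_2+\max_j\gamma_j$ is unrolled, in the spirit of Lemma~2.2 of the Needell noisy-RKA paper. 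You instead work with squared errors throughout: the case split on $\beta_i$ together with the identity $\beta_i+n_j=b_j-\langle\mbc_j,\bar{\mbx}_i\rangle$ and the feasibility of $\widehat{\mbx}$ gives the per-iteration decrement $(\rho_j^2-n_j^2)/\|\mbc_j\|_2^2$, and unrolling the squared recursion plus Jensen yields \eqref{St_15000} — this mirrors how Needell's Theorem~2.1 is actually proved. Your route buys something concrete: the squared geometric series sums to $\kappa^2\max_j\gamma_j^2$, hence exactly the advertised constant $\kappa\max_j\gamma_j$ after taking square roots, whereas the paper's unsquared series $\sum_k(1-\kappa^{-2})^{k/2}$ is of order $\kappa^2$ rather than $\kappa$, so the stated constant is really only reached through the squared-error bookkeeping you use; your explicit handling of the $(\cdot)^+$ truncation also replaces the paper's implicit identification of the noisy update with a translated noiseless one. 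The caveat you flag is real but shared: lower-bounding the expected residual decrement by $\kappa^{-2}\,\mathbb{E}\{\|\bar{\mbx}_i-\widehat{\mbx}\|_2^2\}$ treats the fixed $\widehat{\mbx}$ as if it were the nearest feasible point (the Leventhal--Lewis/Hoffman bound controls the distance to the feasible set, not to a prescribed $\widehat{\mbx}$); the paper's own proof invokes \eqref{eq:15} with a fixed $\widehat{\mbx}$ in exactly the same way, so this is inherited from the paper rather than a gap you introduced.
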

The proof of Proposition~\ref{prop_noisyRKA} is presented in Appendix~\ref{A5}.
%Section G of the Supplementary Material. 
Based on the noisy one-bit polyhedron \eqref{St_10}, the parameters $\mbC=\mbP_{\mby_z}$, $\mbn=\boldsymbol{\upnu}$, and $\mbb=\operatorname{vec}\left(\mbR_{\mby_z}\right)\odot \operatorname{vec}\left(\bGamma\right)$ can be replaced in the convergence rate presented in \eqref{St_15000} to get the similar result in the one-bit noisy scenario. Note that as can be observed in \eqref{St_15000}, a small perturbation in the linear feasibility problem $\mbC\mbx\succeq\mbb$ may slightly deviate the solution of the noisy RKA from the main solution.

\subsection{Upper Quantile-Based ORKA}
\label{QRKA}
In Section~\ref{sec_noisy_a}, the robustness of the noisy RKA in the presence of a small perturbation has been discussed. However, as can be observed in the convergence rate of the noisy RKA \eqref{St_15000}, a large perturbation in the linear feasibility $\mbC\mbx\succeq\mbb$ can lead to a significant error in the input signal recovery. In such a scenario, an accurate input signal recovery is quite a challenging task. In this section, our goal is to propose a novel RKA-based algorithm which is robust to impulsive noise.
The noisy linear inequality feasibility problem is defined as
$\mbC\mbx+\mbn\succeq \mbb$,
where $\mbn$ is the noise of our system.
If the noise does not have a significant impact on the inequality, we can rephrase the system with noise as $\mbC\mbx\succeq \mbb$ and solve it using the RKA. However, if the noise has the potential to alter the direction of the inequalities, we need to handle it differently. In this section, 
specifically, we introduce an algorithm that identifies the ``orthants" that are immune to corruption
(where noise cannot change their direction), and only incorporate them in the updates of RKA. 

The probability of orthants that are corrupted with noise is formulated as the following upper quantile
\begin{equation}
\label{Stef2}
\alpha_j = \operatorname{Pr}\left(\langle\mbc_j,\mbx\rangle-b_j\geq -n_j\right),
\end{equation}
where
%$\left\{\mbc_j\right\}$ 
$b_j$ and $n_j$ are $j$-th
%rows of $\mbC$, 
elements of $\mbb$ and $\mbn$, respectively. By using this formulation, we may be able to determine a threshold for our residuals $\left\{\langle\mbc_j,\mbx\rangle-\mbb_j\right\}$ that can be used to identify the corrupted ones. The threshold is calculated based on the empirical $q$-quantile of the noise:
\begin{equation}
\label{Stef3}
\mathcal{Q}\left(\mbx\right)\triangleq q\text{-quantile}\left\{\left|\langle\mbc_j,\mbx\rangle-b_j\right|,~j\in [m]\right\}.
\end{equation}
If a residual exceeds this threshold, it suggests that the noise may not have been strong enough to change its direction. By applying the thresholding process, we ensure that a sufficient distance is maintained between $\langle\mbc_j,\mbx\rangle$ and $\mbb_j$ to prevent any noise from impacting the inequality system. This helps to preserve the integrity of the solution throughout the algorithm. Assume $\mbp_j$ is the $j$-th row of $\mbP_{\mby_z}$ randomly chosen at each iteration $i$, the proposed algorithm for the noisy one-bit sampled systems \eqref{St_10}, \emph{upper quantile-based ORKA} is written as follows: (i) update the RKA projection if $\left|\langle\mbp_j,\mbx_i\rangle-r^{(\ell)}_j\tau^{(\ell)}_j\right|\geq \mathcal{Q}\left(\mbx_i\right)$, and (ii) otherwise, set $\mbx_i=\mbx_{i+1}$.
%---------------------------------------------------------------
\begin{figure*}[t]
	\centering
	\subfloat[$m=2$]
		{\includegraphics[width=0.25\textwidth]{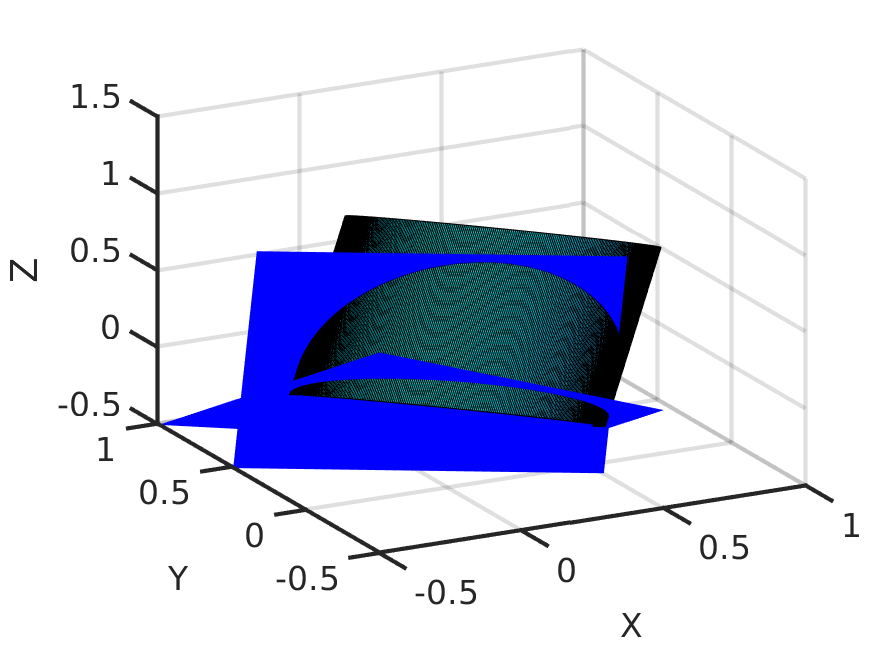}}\qquad
	\subfloat[$m=6$]
		{\includegraphics[width=0.25\textwidth]{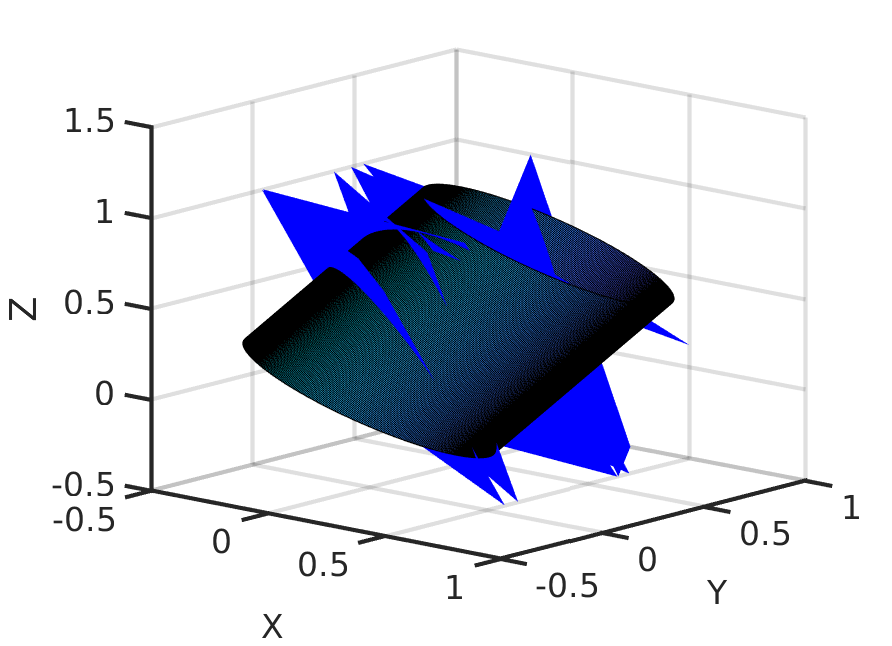}}\qquad
	\subfloat[$m=60$]
		{\includegraphics[width=0.25\textwidth]{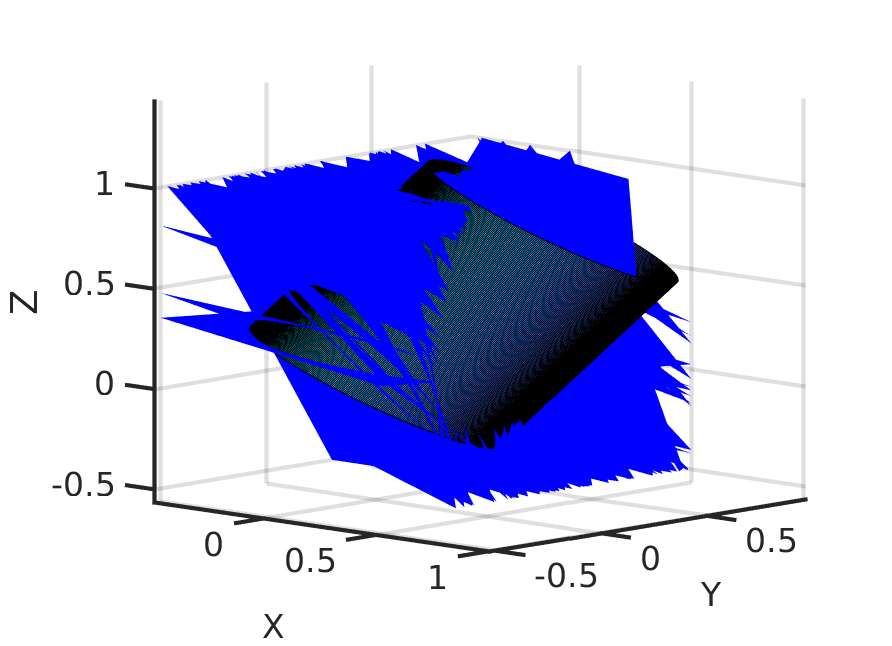}}\\
	\subfloat[$m=2$]
		{\includegraphics[width=0.25\textwidth]{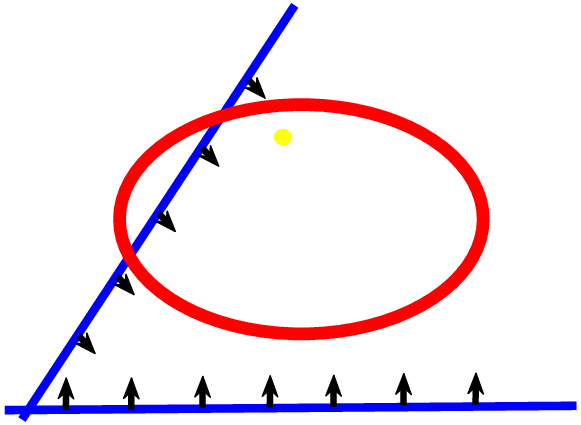}}\qquad
	\subfloat[$m=6$]
		{\includegraphics[width=0.25\textwidth]{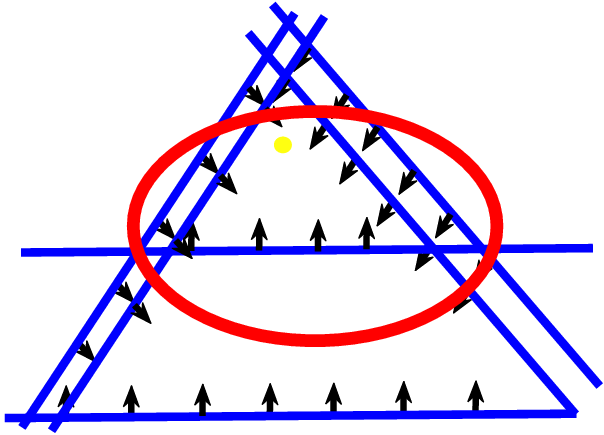}}\qquad
	\subfloat[$m=60$]
		{\includegraphics[width=0.25\textwidth]
        {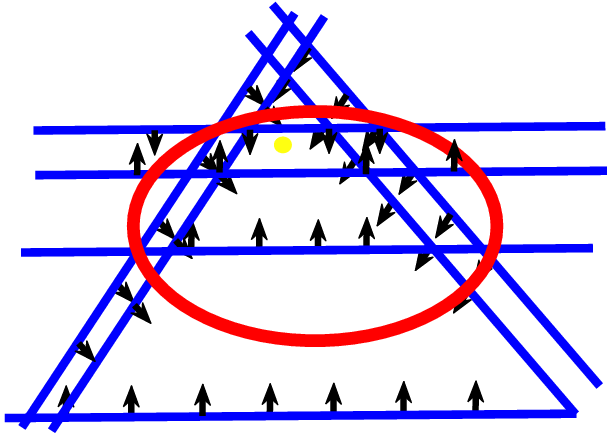}}
	\caption{Shrinkage of the one-bit polyhedron (\ref{eq:80n}) in blue, ultimately placed within the unit ball of the nuclear norm $\left\|\mbX\right\|_{\star}\leq 1$ shown with black cylindrical region and its red contours, when the number of constraints (samples) grows large. The arrows point to the half-space associated with each inequality constraint. The evolution of the feasible regime is depicted with increasing samples in three cases: (a) and (d) small sample-size regime, constraints not forming a finite-value polyhedron; (b) and (e) medium sample-size regime, constraints forming a finite-volume polyhedron, parts of which are outside the cylinder; (c) and (f) large sample-size regime, constraints forming a finite-volume polyhedron inside the nuclear norm cylinder, making its constraint redundant. The original signal representing the signal to be recovered is shown by yellow.
    \vspace{-15pt}
	}
	\label{figure_1n}
\end{figure*}
%-----------------------------------------------------------------
\section{Accelerated Algorithms for One-Bit Sensing Problem}
We analyze the impact of coarse quantization on two vital inverse problems: the low-rank matrix sensing problem and the compressed sensing problem, considering two scenarios: (i) sample abundance and (ii) sample restriction.
\subsection{Problem Formulation for Low-Rank Matrix Recovery}
\label{pr}
The problem of the low-rank matrix sensing is formulated as:
\begin{equation}
\label{eq:1nnnnn}
\begin{aligned}
\text{find}\quad \mbX \in \Omega_{c} \quad
\text{subject to} \quad \mathcal{A}\left(\mbX\right)=\mathbf{y},~ 
\operatorname{rank}\left(\mbX\right)\leq r,
\end{aligned}
\end{equation}
where $\mbX\in \mathbb{R}^{n_{1}\times n_{2}}$ is the matrix of unknowns, $\mathbf{y}\in \mathbb{R}^{n}$ is the measurement vector, and $\mathcal{A}$ is a linear transformation such that $\mathcal{A}:\mathbb{R}^{n_1\times n_2}\mapsto\mathbb{R^{n}}$.
%$n_{1}\times n_{2}$ into $\mathbb{R}^{n}$. 
In general, $\Omega_{c}$ can be chosen such as the set of semi-definite matrices, symmetric matrices, upper or lower triangle matrices, Hessenberg matrices and a specific constraint on the matrix elements $\left\|\mbX\right\|_{\infty}\leq \alpha$ or on its eigenvalues, i.e., $\lambda_{i}\leq \epsilon$ where $\left\{\lambda_{i}\right\}$ are eigenvalues of $\mbX$. The nuclear norm minimization alternative of the problem is given by \cite{cai2010singular,recht2011null}:
\begin{equation}
\label{eq:1nnnnnnn}
\begin{aligned}
\underset{\mbX \in \Omega_{c}}{\textrm{minimize}}\quad \left\|\mbX\right\|_{\star}\quad
\text{subject to} \quad \mathcal{A}\left(\mbX\right)=\mathbf{y}.
\end{aligned}
\end{equation}
In this problem, the feasible set $\mathcal{F}_{\mbX}$ is obtained as
$\mathcal{F}_{\mbX}=\left\{\mathcal{P}^{\star}_{\mbX}\cap \Omega_{c}\right\}$,
where $\mathcal{P}^{\star}_{\mbX}$ is defined as follows
\begin{equation}
\label{stephanie}
\mathcal{P}^{\star}_{\mbX} = \left\{\mbX \mid
\left\|\mbX\right\|_{\star}\leq \tau\right\},\quad
\tau\in\mathbb{R}^{+}.
\end{equation}
Next, we will apply ORKA to (\ref{eq:1nnnnnnn}) to make its costly constraints redundant by using abundant number of one-bit samples $m^{\prime}$.
%time-varying sampling thresholds $m$. 

In low-rank matrix sensing, the linear operator $\mathcal{A}\left(\mbX\right)$ is obtained as
\begin{equation}
\label{Stefanie_2}
\mathcal{A}\left(\mbX\right)=\frac{1}{\sqrt{n}}\left[\operatorname{Tr}\left(\mbA^{\top}_1\mbX\right)\cdots\operatorname{Tr}\left(\mbA^{\top}_n\mbX\right)\right]^{\top},
\end{equation}
where $\mbA_j\in\mathbb{R}^{n_1\times n_2}$ is the $j$-th sensing matrix. The one-bit polyhedron for the low-rank matrix sensing is given by
%\par\noindent\small
\begin{equation}
\label{kumar}
\mathcal{P}^{(M)}=\left\{\mbX^{\prime}\in\mathbb{R}^{n_1\times n_2} \mid r^{(\ell)}_{j}\operatorname{Tr}\left(\mbA^{\top}_j\mbX^{\prime}\right)\geq r^{(\ell)}_{j}\tau^{(\ell)}_{j}\right\}\subset \mathbb{R}^{n_1\times n_2},  
\end{equation} %\normalsize
for all $j\in[n],\ell\in[m]$. Writing the update process of ORKA in matrix form yields the following representation:
\begin{equation}
\label{Stefanie_1}   
\mbX_{i+1} =\mbX_i+\frac{\left(r^{(\ell)}_j\tau^{(\ell)}_j-r^{(\ell)}_j\operatorname{Tr}\left(\mbA^{\top}_j\mbX_i\right) \right)^{+}}{\left\|\mbA_{j}\right\|^{2}_{\mathrm{F}}} \mbA_{j}.
\end{equation}
The convergence rate of this process is equivalent to that of RKA. Denote the update process in \eqref{Stefanie_1} by $\mbX_{i+1}=\mathrm{KA}_{r}(\mbX_i)$.
One can similarly utilize other variants of RKA, PrSKM and Block SKM, and generalize the update process \eqref{Stefanie_1} as $\mbX_{i+1}=\mathrm{KA}_{p}(\mbX_i)$ and $\mbX_{i+1}=\mathrm{KA}_{b}(\mbX_i)$.

A numerical investigation of (\ref{kumar})  
reveals that by increasing the number of samples, the space formed by the intersection of linear constraints can fully shrink to the desired signal $\mbX$ inside the feasible region of (\ref{stephanie}) which is shown by the cylindrical space \cite{recht2011null}---see Fig.~\ref{figure_1n} for an illustrative example of this phenomenon. As can be seen in this figure, the blue lines displaying the linear feasibility form a finite-volume space around the original signal displayed by the yellow circle inside the cylinder (the elliptical region) by growing the number of threshold sequences or one-bit samples. In (a)/(d), constraints are not enough to create a finite-volume space, whereas in (b)/(e) such constraints can create the desired finite-volume polyhedron space which, however, is not fully inside the cylinder. Lastly, in (c)/(f), the created finite-volume space shrinks to be fully inside the cylinder. The theoretical discussion regarding the required number of one-bit samples $m^{\prime}$ to accurately recover the low-rank matrix from one-bit samples will be presented in Theorem~\ref{theorem_1}.

\subsection{ORKA With Low-Rank Matrix Factorization}
\label{matrixfactor}
Although Kaczmarz algorithms use a simple and low-complexity update process, the computational cost can still be prohibitively large for a matrix $\mbX\in\mathbb{R}^{n_1\times n_2}$ when $n_1$ and $n_2$ are large values, typically requiring $\mathcal{O}\left(n_1 n_2\right)$ operations. To address this issue, we employ the well-known \emph{low-rank matrix factorization} technique. Instead of using the full matrix $\mbX$, we use its low-rank factorization $\mbX=\mbL\mbW^{\top}$, where $\mbL\in\mathbb{R}^{n_1\times r}$ and $\mbW\in\mathbb{R}^{n_2\times r}$ are low-rank factors. The advantage of this approach is that the economical representation of the low-rank matrix results in lower storage requirements, and lower per-iteration computational cost \cite{chi2019nonconvex}. It also makes the method more scalable to larger problem sizes, especially when using iterative feasibility problem solver like ORKA. In this section, we explore the integration of low-rank matrix factorization with ORKA and the alternating minimization (AltMin) algorithm discussed in \cite{escalante2011alternating}. 

The one-bit polyhedron $\mathcal{P}^{(M)}$ associated with the low-rank matrix factorization approach is written as
\begin{equation}
\label{kvm_15}
\mathcal{P}^{(M)}_0=\left\{\mbL, \mbW \mid r^{(\ell)}_{j}\operatorname{Tr}\left(\mbA^{\top}_j\mbL\mbW^{\top}\right)\geq r^{(\ell)}_{j}\tau^{(\ell)}_{j}\right\},
\end{equation}
for all $j\in[n],\ell\in [m]$. To find the solution from $\mathcal{P}^{(M)}_0$, we use the idea of AltMin algorithm. We split $\mathcal{P}^{(M)}_0$ into two linear feasibility sub-problems with respect to $\mbL$ and $\mbW$, respectively. Specifically, with respect to $\mbL$ when $\mbW$ is fixed we have:
\begin{equation}
\label{kvm_16}
\mathcal{P}^{(M)}_{\mbL}=\left\{\mbL\mid r^{(\ell)}_{j}\operatorname{Tr}\left(\mbW^{\top}\mbA^{\top}_j\mbL\right)\geq r^{(\ell)}_{j}\tau^{(\ell)}_{j}\right\},
\end{equation}
and with respect to $\mbW$ when $\mbL$ is fixed we have:
\begin{equation}
\label{kvm_17}
\mathcal{P}^{(M)}_{\mbW}=\left\{\mbW \mid r^{(\ell)}_{j}\operatorname{Tr}\left(\mbA^{\top}_j\mbL\mbW^{\top}\right)\geq r^{(\ell)}_{j}\tau^{(\ell)}_{j}\right\},
\end{equation}
for all $j\in[n],\ell\in [m]$. If either $\mbL$ or $\mbW$ are fixed, finding the solution with respect to the other variable is achieved via ORKA as indicated in Algorithm~\ref{algorithm_200}.
\begin{algorithm}[t]
\caption{ORKA (with Low-Rank Matrix Factorization)}
\label{algorithm_200}
\begin{algorithmic}[1]
\Statex \emph{Input:} One-bit samples, time-varying thresholds and the sensing matrices presented in \eqref{kvm_15}, total number of required iterations $T$, and $\operatorname{rank}(\mbX)=r$.
\Statex \emph{Output:} A solution $\bar{\mbX}\in\mathbb{R}^{n_1\times n_2}$ in the polyhedron \eqref{kvm_15}.
\Statex \emph{Note:} $\mbL_t\in\mathbb{R}^{n_1\times r}$ and $\mbW_t\in\mathbb{R}^{n_2\times r}$ denote the obtained matrices at $t$-th iteration, $\mbH_{i+1}=\mathrm{KA}_{r}(\mbH_{i};\Tilde{\mbH})$ denotes the update process of ORKA using the RKA when $\Tilde{\mbH}$ is fixed.
\State $\mbW_{t+1}\gets\mathrm{KA}_{r}(\mbW_t;\mbL_t)$ $\triangleright$ Update process for polyhedron \eqref{kvm_17}.
\State $\mbL_{t+1}\gets\mathrm{KA}_{r}(\mbL_t;\mbW_{t+1})$ $\triangleright$ Update process for polyhedron \eqref{kvm_16}.
\State Repeat steps (1) and (2) until convergence. %and obtain $\mbX_{\star}$.
\State $\mbL_{\star}\gets\mbL_T$ and $\mbW_{\star}\gets\mbW_T$.
\State \Return  $\bar{\mbX}\gets\mbL_{\star}\mbW^{\top}_{\star}$.
\end{algorithmic}
\end{algorithm}

\subsection{Curse of Dimensionality: SVP-ORKA}
\label{SVP}
While acquiring a large number of samples is not typically an issue for one-bit sampling, it is still practical to avoid using excess samples, particularly in signal processing applications where access to sufficient measurements may be limited. Adhering to the law of parsimony, ``\emph{Entia non svnt mvltiplicanda præter necessitatem}," i.e., entities should not be multiplied beyond necessity, this section focuses on developing ORKA to facilitate low-rank matrix sensing with a reduced number of one-bit samples.

Define $r$ 
%and $R$ 
as the predefined rank 
%and the bound on the Frobenius norm 
of the unknown matrix $\mbX$.
%, respectively. 
In order to obtain the solution within a reduced number of samples in the polyhedron $\mathcal{P}^{(M)}$ defined in \eqref{kumar}, we impose 
%(i) 
a rank constraint, $\operatorname{rank}(\mbX)\leq r$,
%, and (ii) the Frobenius norm constraint, $\|\mbX\|_{\mathrm{F}}\leq R$,
to shrink the entire space, as shown by the following polyhedron:
%\par\noindent\small
\begin{equation}
\label{Stefanie_5}
\mathcal{P}^{(M)}_1=\left\{\mbX^{\prime}\in\mathcal{P}^{(M)} \mid \operatorname{rank}\left(\mbX^{\prime}\right)\leq r\right\}\subset \mathbb{R}^{n_1\times n_2}.
\end{equation}
%where $j\in[n],~\ell \in [m]$. 
To tackle this problem, we apply the singular value projection (SVP) to ORKA. The operator $P_{r}$ 
calculates the $r$ largest singular values of a matrix and subsequently rewrites its SVD based on these $r$ singular values and their corresponding singular vectors \cite{chi2019nonconvex}. Similar to Section~\ref{pr}, denote $\mathrm{KA}_r(\cdot)$ as the update process of ORKA using the RKA defined in \eqref{Stefanie_1}. Through the integration of SVP into each iteration of ORKA, we can achieve the following update process:
\begin{equation}
\label{St_20}
\left\{\begin{array}{l}
\mbZ_{i+1}=\mathrm{KA}_{r}(\mbX_i),\\
%\mbX_i+\frac{\left(r^{(\ell)}_j\uptau^{(\ell)}_j-r^{(\ell)}_j\operatorname{Tr}\left(\mbA^{\top}_j\mbX_i\right) \right)^{+}}{\left\|\mbA_{j}\right\|^{2}_{\mathrm{F}}} \mbA_{j}\\
\mbX_{i+1}=P_{r}\left(\mbZ_{i+1}\right).
\end{array}\right.
\end{equation}
The first step of the update process \eqref{St_20}, $\mbZ_{i+1}=\mathrm{KA}_{r}(\mbX_i)$, can be also replaced with $\mbZ_{i+1}=\mathrm{KA}_{p}(\mbX_i)$ or $\mbZ_{i+1}=\mathrm{KA}_{b}(\mbX_i)$ to be consistent with the updates of PrSKM and Block SKM, respectively.

\subsection{One-Bit CS via Sample Abundance}
\label{ocs_abundance}
The one-bit CS is solely accomplished by creating a highly-constrained one-bit polyhedron:
\begin{equation}
\label{Stefanie_500}
\mathcal{P}^{(C)}=\left\{\mbx^{\prime}\in \mathbb{R}^{d}\mid r^{(\ell)}_{j}\langle\mba_j,\mbx^{\prime}\rangle\geq r^{(\ell)}_{j}\tau^{(\ell)}_{j}\right\}\subset \mathbb{R}^{d},
\end{equation}
for all $j\in[n],\ell\in[d]$. In other words, instead of solving an optimization problem with costly constraints, the problem may be tackled by 
the following update process:
\begin{equation}
\label{kvm_23}
\mbx_{i+1}=\mathrm{KA}_{r}\left(\mbx_i\right),
\end{equation}
where $\mathrm{KA}_r(\cdot)$ denotes the RKA update process presented as $\mbx_{i+1}=\mbx_i+\frac{\left(r^{(\ell)}_j\tau^{(\ell)}_j-r^{(\ell)}_j\langle\mba_j,\mbx_i\rangle\right)^{+}}{\left\|\mba_{j}\right\|^{2}_2} \mba^{\star}_j$. The update process in \eqref{kvm_23} can be also replaced with $\mbx_{i+1}=\mathrm{KA}_{p}\left(\mbx_i\right)$ or $\mbx_{i+1}=\mathrm{KA}_{b}\left(\mbx_i\right)$ to be consistent with the updates of PrSKM and Block SKM, respectively. The theoretical discussion about the minimum number of one-bit samples $m^{\prime}$ to precisely recover the sparse vector from the polyhedron \eqref{Stefanie_500} has been presented in Section~\ref{penalt}.

\subsection{ST-ORKA}
\label{ocs_constrained}
We continue our exploration of algorithms for one-bit CS by introducing the \emph{$\ell_0$ regularized ORKA}. Building on the motivation behind our earlier proposal of the SVP-ORKA, ORKA is developed to efficiently recover
%identify 
a sparse solution from the one-bit CS polyhedron using a reduced number of one-bit samples. The regularized one-bit CS polyhedron is obtained as
%\par\noindent\small
\begin{equation}
\label{Stefanie_50}
\mathcal{P}^{(C)}_1=\left\{\mbx^{\prime}\in \mathbb{R}^{d}\mid r^{(\ell)}_{j}\langle\mba_j,\mbx^{\prime}\rangle\geq r^{(\ell)}_{j}\tau^{(\ell)}_{j},~ \left\|\mbx^{\prime}\right\|_0\leq k\right\}\subset \mathbb{R}^{d},
\end{equation} %\normalsize
where $j\in[n]$ and $\ell\in[m]$. 
Define the soft thresholding (ST) operator as $S_{\upkappa}\left(\mbx\right)= \operatorname{sgn}\left(\mbx\right)\left(|\mbx|-\mbt_1\right)^{+}$, where $\mbt_1$ is the predefined threshold. The ST-ORKA utilizes the %soft thresholding 
ST operator $S_{\upkappa}$ 
%operator 
to project each iteration of ORKA to the set $\left\{\|\mbx\|_0\leq k\right\}$
with the following update process:
\begin{equation}
\label{St_21}
\left\{\begin{array}{l}
\mbz_{i+1}=\mathrm{KA}_r\left(\mbx_i\right),\\
%\mbx_i+\frac{\left(r^{(\ell)}_j\uptau^{(\ell)}_j-r^{(\ell)}_j\mba_j\mbx_i\right)^{+}}{\left\|\mba_{j}\right\|^{2}_2} \mba^{\mathrm{H}}_j\\
\mbx_{i+1} = S_{\upkappa}\left(\mbz_{i+1}\right),
\end{array}\right.
\end{equation}
where $\mathrm{KA}_r(\cdot)$ denotes the RKA updates. The first step of the update process \eqref{St_21} can be also replaced with $\mbz_{i+1}=\mathrm{KA}_p\left(\mbx_i\right)$ or $\mbz_{i+1}=\mathrm{KA}_b\left(\mbx_i\right)$ to be consistent with the updates of PrSKM and Block SKM, respectively. 
\subsection{Judicious Sampling With Adaptive Thresholding for ORKA}
\label{ada_thresh}
%\textcolor{red}{move to last part (needs edits)}
By the spirit of using the iterative RKA, suitable time-varying sampling thresholds can be selected in order to enhance the recovery performance. To capture the desired signal $\mathbf{x}$ more efficiently, the right-hand side of the inequalities in (\ref{eq:80n}), i.e. $\operatorname{vec}\left(\mbR_{\mby}\right)\odot \operatorname{vec}\left(\bGamma\right)$, must be determined in a way that each associated hyperplane passes through the desired feasible region within $\mathcal{F}_{\mbx}$. Accordingly, we propose an iterative algorithm generating adaptive sampling thresholds to accurately obtain a good solution. To have a smaller area of the finite-volume space around the signal $\mathbf{x}$, one can somehow choose thresholds to reduce the distances between 
the desired point and the associated hyperplanes in \eqref{eq:80n}. To do so, we update the time-varying thresholds 
%for $\ell\in\left\{1,\cdots,m\right\}$ 
as
\begin{equation}
\label{eq:200}
\left\{\begin{array}{l}
\boldsymbol{\uptau}^{(\ell)}_{k+1}=\mbA\mbx_{k}-\frac{1}{2}\left(\mbr^{(\ell)}\odot\boldsymbol{\epsilon}^{(\ell)}_{k}\right),\\\boldsymbol{\epsilon}^{(\ell)}_{k}=\mbr^{(\ell)}\odot\left(\mbA\mbx_k-\boldsymbol{\uptau}_{k}^{(\ell)}\right),
\end{array}\right.
\end{equation}
where $\mbx_k$ and $\boldsymbol{\uptau}_{k}^{(\ell)}$ denote the $k$-th updates of $\mbx$ and $\boldsymbol{\uptau}^{(\ell)}$ in our proposed adaptive threshold design algorithm, respectively. 
The proposed sampling algorithm is summarized in Algorithm~\ref{algorithm_20}.
\begin{algorithm}[t]
\caption{Adaptive Thresholding for ORKA}
\label{algorithm_20}
\begin{algorithmic}[1]
\Statex \emph{Input:} One-bit data,
%matrix $\mbR$, the 
time-varying sampling thresholds,
%matrix $\bGamma$, 
and the sampling matrix. %$\mbA\in\mathbb{R}^{n\times d}$. 
\Statex \emph{Output:} A solution $\bar{\mbx}$ in the one-bit polyhedron \eqref{eq:80n}.
\State $\mbb^{(\ell)}\gets\boldsymbol{\uptau}_{k}^{(\ell)}$.
\State $\mathcal{P}_{k}\gets\left\{\mathbf{x}_{k}\mid r_j^{(\ell)}\langle\mba_j,\mbx_k\rangle\geq r_j^{(\ell)}b_j^{(\ell)},~j\in[n],~\ell\in[m]\right\}$.
\State Obtain $\mbx_k$ in $\mathcal{P}_{k}$ by RKA, PrSKM or Block SKM.
\State $\boldsymbol{\epsilon}^{(\ell)}_{k}\gets\mbr^{(\ell)}\odot\left(\mbA\mbx_k-\mbb^{(\ell)}\right)$.
\State $\boldsymbol{\uptau}^{(\ell)}_{k+1}\gets\mbA\mbx_{k}-\frac{1}{2}\left(\mbr^{(\ell)}\odot\boldsymbol{\epsilon}^{(\ell)}_{k}\right)$.
\State Increase $k$ by one.
\State Repeat Steps ($1$)-($6$) until
%Stop when 
$\sum_{\ell=1}^{m}\left\|\boldsymbol{\uptau}^{(\ell)}_{k+1}-\boldsymbol{\uptau}^{(\ell)}_{k}\right\|_{2}\leq\delta$.
\State \Return $\bar{\mbx}$
\end{algorithmic}
\end{algorithm}
\section{Finite Volume Property:\\ Theoretical Guarantees for One-bit Sensing}
\label{ada_thresh_prob}
It has been demonstrated that in the ditherless scenario, and if the measurements are bounded within a specific set with high probability, there exists a subset of hyperplanes such that the radius of the cell containing $\mbx$ induced by these hyperplanes is bounded \cite{lybrand2022number}. In this section, we demonstrate an alternative approach to determining an upper bound. Instead of confining measurements to a bounded space, we leverage the one-bit polyhedron derived from the dithering scheme. Our method involves approaching the average of distances between the signal and surrounding hyperplanes toward its mean. We establish that as the number of samples increases, the finite volume or cell created around the signal diminishes.
We will introduce the concept of FVP and subsequently obtain the required number of one-bit samples $m^{\prime}$ to capture a solution inside a ball centered at $\mbx$. 
We begin this section by providing the following definitions:
\begin{definition}
\label{def_2}
Define the distance between a signal $\mbx\in\mathcal{T}$ and the hyperplanes of the polyhedron \eqref{eq:80n} as 
\begin{equation}
\label{dist_1}
d^{(\ell)}_{j}=\left|\langle\mba_{j},\mathbf{x}\rangle-\tau^{(\ell)}_{j}\right|,~j\in[n], ~\ell\in[m].
\end{equation}
Then, we denote the average of such distances by
\begin{equation}
\label{a_7}
T_{\mathrm{ave}}(\mbx)=\frac{1}{mn}\sum_{\ell=1}^{m}\sum_{j=1}^{n}\left(d^{(\ell)}_{j}\right).
\end{equation} 
\end{definition}
It is essential to clarify that in our analysis, we adopt the worst-case scenario for the distance
between the desired point and the solutions. This approach considers the solution to lie precisely on the hyperplanes. However, it is crucial to note that in reality, the solution to a linear inequality
system can exist anywhere between the desired point and the hyperplane. Utilizing the average of distances, we formulate the FVP and its theorems for both arbitrary and structured sets. As the number of samples increases, and the average of distances approaches its mean, we obtain sufficient samples to construct a finite volume and locate the solution within a ball. This concept is expounded upon in our theorems. 
\begin{definition}[Consistent Reconstruction Property]
\label{def_1}
Define a signal as $\mbx\in\mathcal{T}$. Denote $\bar{\mbx}\in\mathcal{T}$ as a solution obtained by an arbitrary reconstruction algorithm addressing the one-bit feasibility problem. Then, such a reconstruction algorithm is said to be consistent when
\begin{equation}
\label{a_6}
\begin{aligned}
\operatorname{sgn}\left(\langle\mba_j,\mbx\rangle-\tau_j^{(\ell)}\right)&=\operatorname{sgn}\left(\langle\mba_j,\bar{\mbx}\rangle-\tau_j^{(\ell)}\right),
\end{aligned}
\end{equation}
for all $j\in[n],\ell\in[m]$.   
\end{definition}
Note that the sign preservation (consistency) mentioned in our work, as stated in Definition~\ref{def_1}, is equivalent to the solution provided by a linear feasibility solver, denoted as $\bar{\mbx}$, satisfying all the given inequalities, meaning $\bar{\mbx}\in\mathcal{P}_{\mbx}$. In Section~\ref{penalt}, we aim to demonstrate that having a sufficient number of samples guarantees the formation of a finite volume of intersection of hyperplanes. This, in turn, implies that the solution will be confined within a ball of radius $\rho$. We will establish the relationship between this radius and FVP distortion in Section~\ref{reco}.
\begin{definition}[Isotropic Property]
\label{def_3}
A random vector $\mba$ is said to be isotropic if its covariance matrix is the identity matrix; i.e.
\begin{equation}
\label{iso}
\mathbb{E}\left\{\mba\mba^{\mathrm{H}}\right\}=\mbI.
\end{equation}
A random matrix $\mbA$ is said to be isotropic if all its rows are isotropic.
\end{definition}
In our proofs, we establish that the sole assumption on the sampling matrix is to be isotropic. With this assumption, we derive a concentration inequality for the average of distances $T_{\mathrm{ave}}$. This inequality serves as the basis for providing an upper bound on the recovery error $\rho$.
\subsection{Finite Volume Property}
\label{penalt}
In the following theorem, we derive a concentration inequality for the average of distances in a probability distribution. This inequality reveals that the upper bound of the average of distances is contingent on the signal norm, establishing an embedding which we call it the FVP. Initially demonstrated for arbitrary sets, we subsequently extend this theorem to encompass various well-known structured sets: 
\begin{theorem}[FVP for an Arbitrary Set]
\label{theorem_0}
Consider an $n\times d$ isotropic sub-gaussian sampling matrix $\mbA$ with $\left\|\mba_j\right\|_{\psi_2}\leq K$ for all $j\in[n]$, and a signal $\mbx\in\mathcal{K}$ where $\mathcal{K}\subseteq\mathbb{R}^{d}$ is an arbitrary set. The one-bit feasibility problem with total number of one-bit samples $m^{\prime}$, is obtained by uniform dithering following $\mathcal{U}_{[-\lambda,\lambda]}$. 
If a solution $\bar{\mbx}$ satisfies the consistent reconstruction property in Definition~\ref{def_1} and $\bar{\mbx}\in\mathcal{B}_{\rho}\left(\mbx\right)$, we have the following concentration inequality with positive constants $c,\epsilon\geq 0$: 
\begin{equation}
\label{a_90}
\begin{aligned}
\operatorname{Pr}\left(\sup_{\mbx\in\mathcal{K}}\left|T_{\mathrm{ave}}\left(\mbx\right)-\frac{\lambda}{2}-\frac{\left\|\mbx\right\|_{2}^2}{2\lambda}\right|\geq\epsilon\right)\leq 2 e^{-\frac{c \epsilon^2 m^{\prime}}{2\left(K+\frac{\lambda}{\sqrt{\log(2)}}\right)^2}},
\end{aligned}
\end{equation}
where the required number of samples meets $m^{\prime}\gtrsim \frac{\gamma^2\left(\mathcal{K}\right)}{\rho^2 \epsilon^2}$. 
\end{theorem}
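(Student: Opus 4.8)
The plan is to establish the concentration inequality in two stages: first a pointwise bound for a fixed signal $\mbx\in\mathcal{K}$ via Hoeffding's inequality for sub-exponential / bounded-increment summands, then a uniform bound over $\mathcal{K}$ via a chaining (or net) argument that introduces the Gaussian complexity $\gamma(\mathcal{K})$ and dictates the sample-size requirement $m^{\prime}\gtrsim \gamma^2(\mathcal{K})/(\rho^2\epsilon^2)$.

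For the pointwise step, fix $\mbx$ and consider the random variables $d^{(\ell)}_j = |\langle \mba_j,\mbx\rangle - \tau^{(\ell)}_j|$, which are independent across the $m^{\prime}=mn$ index pairs once we condition appropriately on the sampling matrix. The key computation is the conditional expectation over the uniform dither $\tau^{(\ell)}_j\sim\mathcal{U}_{[-\lambda,\lambda]}$: for a fixed scalar $t=\langle\mba_j,\mbx\rangle$ one has $\mathbb{E}_\tau|t-\tau| = \frac{\lambda}{2} + \frac{t^2}{2\lambda}$ provided $|t|\le\lambda$ (and is otherwise $|t|$, which is where the ``dynamic range'' caveat the paper flags would enter — for the clean statement I would assume $|t|\le\lambda$ on the relevant set, or absorb the discrepancy). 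Averaging $\langle\mba_j,\mbx\rangle^2$ over $j$ and using the isotropy assumption $\mathbb{E}\{\mba_j\mba_j^{\mathrm H}\}=\mbI$ gives $\mathbb{E}\{T_{\mathrm{ave}}(\mbx)\} = \frac{\lambda}{2} + \frac{\|\mbx\|_2^2}{2\lambda}$, which is exactly the centering term in \eqref{a_90}. The summands $d^{(\ell)}_j$ are not bounded but have a sub-gaussian tail: $|d^{(\ell)}_j| \le |\langle\mba_j,\mbx\rangle| + \lambda$, and $\|\langle\mba_j,\mbx\rangle\|_{\psi_2}\le K\|\mbx\|_2$ while the dither contributes a bounded $\psi_2$-norm of order $\lambda/\sqrt{\log 2}$. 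Hence each centered summand has sub-gaussian norm at most $C(K + \lambda/\sqrt{\log 2})$ (after normalizing $\|\mbx\|_2$, or tracking it through), and the general Hoeffding bound for sums of independent sub-gaussians yields
\begin{equation}
\operatorname{Pr}\left(\left|T_{\mathrm{ave}}(\mbx)-\tfrac{\lambda}{2}-\tfrac{\|\mbx\|_2^2}{2\lambda}\right|\ge\epsilon\right)\le 2\exp\!\left(-\frac{c\epsilon^2 m^{\prime}}{2(K+\lambda/\sqrt{\log 2})^2}\right),
\end{equation}
matching the target up to the supremum.

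To upgrade to the uniform bound over $\mathcal{K}$, I would view $\mbx\mapsto T_{\mathrm{ave}}(\mbx) - \mathbb{E}\{T_{\mathrm{ave}}(\mbx)\}$ as an empirical process and control its supremum by a standard Dudley/chaining argument. The increments $T_{\mathrm{ave}}(\mbx) - T_{\mathrm{ave}}(\mbx')$ are Lipschitz in $\mbx-\mbx'$ through the reverse triangle inequality $\big||t-\tau|-|t'-\tau|\big|\le|t-t'|$, so the process has sub-gaussian increments with respect to a metric comparable to $\|\mba_j(\mbx-\mbx')\|$; its metric entropy is governed by the Gaussian complexity $\gamma(\mathcal{K})$ via the isotropy of $\mbA$ (this is the standard matrix-deviation-type estimate, e.g.\ Vershynin). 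Chaining then produces a bound of order $\gamma(\mathcal{K})/\sqrt{m^{\prime}}$ on the expected supremum of the fluctuation, plus a deviation term; requiring this to be $\lesssim\rho\epsilon$ — the scale at which the FVP distortion matters, since $\bar\mbx\in\mathcal{B}_\rho(\mbx)$ — forces $m^{\prime}\gtrsim\gamma^2(\mathcal{K})/(\rho^2\epsilon^2)$. Combining the chaining estimate with the pointwise tail and a union-type argument gives \eqref{a_90}.

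The main obstacle I anticipate is making the chaining step rigorous while keeping the clean exponent of \eqref{a_90}: one must carefully balance the ``bias'' contribution of the supremum (handled by $\gamma(\mathcal{K})$ and absorbed into the sample complexity) against the ``fluctuation'' contribution (handled by the sub-gaussian Hoeffding tail), and ensure the role of $\rho$ — coming from the restriction $\bar\mbx\in\mathcal{B}_\rho(\mbx)$, i.e.\ effectively we only need the concentration to hold at resolution $\rho$ — is correctly threaded into the entropy integral so that the radius $\rho$ appears in the denominator. The secondary technical point is handling the case $|\langle\mba_j,\mbx\rangle|>\lambda$, where the conditional-expectation identity degrades; for the stated theorem I would restrict attention to the regime where the dither range dominates, noting (as the paper's introduction promises) that the breakdown of this condition is exactly what shifts the recovery-error bound, to be analyzed separately.
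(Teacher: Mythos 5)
Your pointwise step is exactly the paper's: conditioning on the dither gives $\mathbb{E}_{\tau}\left\{|t-\tau|\right\}=\frac{\lambda}{2}+\frac{t^2}{2\lambda}$ under the dynamic-range assumption $\lambda\geq\sup_j|\langle\mba_j,\mbx\rangle|$, isotropy converts $\mathbb{E}\left\{\langle\mba_j,\mbx\rangle^2\right\}$ into $\|\mbx\|_2^2$, the summands $d_j^{(\ell)}$ have sub-gaussian norm at most $K+\frac{\lambda}{\sqrt{\log(2)}}$, and the general Hoeffding inequality yields the stated tail for a fixed $\mbx$ (the DR caveat you flag is handled the same way in the paper, which assumes the condition here and treats its violation separately in Theorem~\ref{theorem_5}). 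The divergence---and the gap---is in the uniform step. The paper does not chain: it takes a union bound over a $\rho$-net of $\mathcal{K}$, where the resolution $\rho$ is precisely the radius of the ball $\mathcal{B}_{\rho}\left(\mbx\right)$ in which the consistent solution is assumed to lie, so the failure probability acquires a factor $e^{\mathcal{H}(\mathcal{K},\rho)}$; Sudakov's minoration $\mathcal{H}(\mathcal{K},\rho)\lesssim\gamma^2\left(\mathcal{K}\right)/\rho^2$ then turns the requirement that this entropy term be absorbed into half the Hoeffding exponent into $m^{\prime}\gtrsim\gamma^2\left(\mathcal{K}\right)/(\rho^2\epsilon^2)$.

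In your chaining version the appearance of $\rho$ is not justified. A Dudley-type bound on the fluctuation process is of order $\gamma\left(\mathcal{K}\right)/\sqrt{m^{\prime}}$, and the theorem asks for deviation at most $\epsilon$, not $\rho\epsilon$; requiring $\gamma\left(\mathcal{K}\right)/\sqrt{m^{\prime}}\lesssim\epsilon$ gives only $m^{\prime}\gtrsim\gamma^2\left(\mathcal{K}\right)/\epsilon^2$, with no $\rho$ anywhere. You introduce the extra factor by demanding accuracy $\rho\epsilon$ ``because the FVP distortion matters at scale $\rho$,'' but nothing in the statement or in the process you defined supports that target---it reverse-engineers the claimed sample complexity. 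To close the gap, either adopt the paper's single-scale argument (union bound over a $\rho$-net plus Sudakov, which is where $\rho$ legitimately enters as the discretization scale matched to $\mathcal{B}_{\rho}\left(\mbx\right)$), or, if you keep chaining, you must make explicit that only resolution $\rho$ is needed and truncate the entropy integral there---which again reduces to a Sudakov-type estimate at scale $\rho$ and reproduces the paper's bound.
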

The proof of Theorem~\ref{theorem_0} is presented in Appendix~\ref{ealmaz}. According to the FVP theorem, the average of distances finds a quasi-isometry with a distortion parameter $\epsilon$.  
The following theorem presents the FVP for the set of low-rank matrices:
\begin{theorem}[FVP for Low-Rank Matrices]
\label{theorem_1}
Consider a low-rank matrix $\mbX\in\mathcal{K}_r$ where $\mathcal{K}_r$ is defined as
\begin{equation}
\label{boz}
\mathcal{K}_{r}=\left\{\mbX\in\mathbb{R}^{n_1\times n_2}\mid \operatorname{rank}(\mbX)\leq r,\|\mbX\|_{\mathrm{F}}\leq 1\right\}.
\end{equation}
Consider an $n\times (n_1n_2)$ sub-gaussian sampling matrix $\mbA$ with $\left\|\mba_j\right\|_{\psi_2}\leq K$ for all $j\in[n]$. The one-bit feasibility problem with total number of one-bit samples $m^{\prime}$, is obtained by uniform dithering following $\mathcal{U}_{[-\lambda,\lambda]}$. 
If a solution $\bar{\mbX}$ satisfies the consistent reconstruction property in Definition~\ref{def_1} and $\bar{\mbX}\in\mathcal{B}_{\rho}\left(\operatorname{vec}\left(\mbX\right)\right)$, we have the following concentration inequality with positive constants $c,\epsilon\geq 0$:
\begin{equation}
\label{a_9}
\begin{aligned}
\operatorname{Pr}\left(\sup_{\mbX\in\mathcal{K}_r}\left|T_{\mathrm{ave}}\left(\mbX\right)-\frac{\lambda}{2}-\frac{\left\|\mbX\right\|_{\mathrm{F}}^2}{2\lambda}\right|\geq\epsilon\right)\leq 2 e^{-\frac{c \epsilon^2 m^{\prime}}{2\left(K+\frac{\lambda}{\sqrt{\log(2)}}\right)^2}},
\end{aligned}
\end{equation}
where the required number of samples satisfies
\begin{equation}
\label{lowrank}
m^{\prime} \gtrsim \epsilon^{-2}r \left(n_1+n_2\right)\log\left(1+\frac{1}{\rho}\right).
%\frac{\log\left(1+\frac{1}{\rho}\right)}{\epsilon^2}.
\end{equation}
\end{theorem}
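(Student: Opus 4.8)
The plan is to deduce the statement from Theorem~\ref{theorem_0} together with a sharp covering-number estimate for the set $\mathcal{K}_r$, so that the concentration bound \eqref{a_9} is essentially free and all of the work lies in the sample count \eqref{lowrank}. Identifying $\mathbb{R}^{n_1\times n_2}$ with $\mathbb{R}^{n_1 n_2}$ by vectorization, the map $\operatorname{vec}$ carries $\mathcal{K}_r$ onto a subset $\operatorname{vec}(\mathcal{K}_r)\subseteq\mathbb{R}^{n_1 n_2}$, we have $\|\operatorname{vec}(\mbX)\|_2=\|\mbX\|_{\mathrm{F}}$, and the $n\times(n_1 n_2)$ matrix $\mbA$ is isotropic sub-gaussian with $\|\mba_j\|_{\psi_2}\leq K$. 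Under this identification $T_{\mathrm{ave}}(\mbX)$, the centering $\tfrac{\lambda}{2}+\tfrac{\|\mbX\|_{\mathrm{F}}^2}{2\lambda}$, the dithering law $\mathcal{U}_{[-\lambda,\lambda]}$, and the consistency hypothesis are exactly those appearing in Theorem~\ref{theorem_0} with ambient dimension $d=n_1 n_2$ and set $\mathcal{K}=\operatorname{vec}(\mathcal{K}_r)$; hence \eqref{a_9} is precisely \eqref{a_90} specialized to this set, and it remains only to replace the generic requirement $m^{\prime}\gtrsim\gamma^2(\mathcal{K})/(\rho^2\epsilon^2)$ by \eqref{lowrank}.

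For the sample count I would reopen the net/union-bound step of the proof of Theorem~\ref{theorem_0}. That step produces the uniform bound from (i) the pointwise general-Hoeffding estimate, valid for fixed $\mbX$ with the exponent displayed in \eqref{a_9}; (ii) a union bound over a net of $\mathcal{K}$ at scale proportional to $\rho$; and (iii) a Lipschitz extension from the net to all of $\mathcal{K}$. The requirement it actually yields is therefore $m^{\prime}\gtrsim\epsilon^{-2}\,\mathcal{H}(\mathcal{K},c\rho)$ in terms of the Kolmogorov entropy, the stated form $\gamma^2(\mathcal{K})/(\rho^2\epsilon^2)$ following by Sudakov minoration $\mathcal{H}(\mathcal{K},c\rho)\lesssim\gamma^2(\mathcal{K})/\rho^2$, which is lossy for structured $\mathcal{K}$. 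For $\mathcal{K}=\operatorname{vec}(\mathcal{K}_r)$ I would instead invoke the standard entropy bound for low-rank matrices: writing $\mbX=\mbU\bSigma\mbV^{\top}$ via the SVD, with $\mbU\in\mathbb{R}^{n_1\times r}$ and $\mbV\in\mathbb{R}^{n_2\times r}$ having orthonormal columns and $\bSigma$ diagonal with $\|\bSigma\|_{\mathrm{F}}\leq 1$, one takes nets of scale $\asymp\delta$ on the two Stiefel factors (of sizes $(C/\delta)^{n_1 r}$ and $(C/\delta)^{n_2 r}$) and on the diagonal block (size $(C/\delta)^{r}$); the induced matrices form a $\delta$-net of $\mathcal{K}_r$ of cardinality at most $(9/\delta)^{(n_1+n_2+1)r}$, so $\mathcal{H}(\mathcal{K}_r,c\rho)\lesssim r(n_1+n_2)\log(1+1/\rho)$, and substituting into $m^{\prime}\gtrsim\epsilon^{-2}\mathcal{H}(\mathcal{K}_r,c\rho)$ gives \eqref{lowrank}.

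The step demanding the most care is the Lipschitz extension (iii) restricted to the low-rank set: the rows $\mba_j$, seen in the $(n_1 n_2)$-dimensional ambient space, have $\ell_2$-norm of order $\sqrt{n_1 n_2}$, so a crude Lipschitz constant for $T_{\mathrm{ave}}$ would be far too large to be useful. The remedy is to exploit low-rankness. With $\mbA_j$ denoting the reshaping of $\mba_j$ into an $n_1\times n_2$ matrix, $|\,|a|-|b|\,|\leq|a-b|$ gives $|T_{\mathrm{ave}}(\mbX)-T_{\mathrm{ave}}(\mbX_0)|\leq\max_{j}\|\mbA_j\|_{2}\,\|\mbX-\mbX_0\|_{\star}$, and since $\mbX-\mbX_0$ has rank at most $2r$, $\|\mbX-\mbX_0\|_{\star}\leq\sqrt{2r}\,\|\mbX-\mbX_0\|_{\mathrm{F}}$; combined with the high-probability sub-gaussian operator-norm bound $\max_{j\in[n]}\|\mbA_j\|_{2}\lesssim\sqrt{n_1}+\sqrt{n_2}$ (a union bound over the $n$ reshaped matrices, intersected with the concentration event), this shows it is enough to take the net scale a factor $\mathrm{poly}(n_1,n_2,r,1/\epsilon)$ below $\rho$; since that factor enters the entropy only through its logarithm, it is absorbed into $\log(1+1/\rho)$ up to constants, leaving \eqref{lowrank} intact. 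The quadratic centering term is controlled identically via $\big|\|\mbX\|_{\mathrm{F}}^2-\|\mbX_0\|_{\mathrm{F}}^2\big|\leq 2\|\mbX-\mbX_0\|_{\mathrm{F}}$, and the remaining computations --- in particular the pointwise Hoeffding estimate that produces the exponent in \eqref{a_9} --- are verbatim those of Theorem~\ref{theorem_0}.
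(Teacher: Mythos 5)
Your proposal follows essentially the same route as the paper's: it reruns the argument of Theorem~\ref{theorem_0} (pointwise general-Hoeffding bound plus a union bound over a $\rho$-net of the set) and simply replaces the lossy Sudakov estimate on the Kolmogorov entropy by the low-rank entropy bound $\mathcal{H}\left(\mathcal{K}_r,\rho\right)\lesssim r\left(n_1+n_2\right)\log\left(1+\frac{1}{\rho}\right)$, which the paper imports from \cite[Table~1]{jacques2017time} and you rederive via the standard SVD-based net, yielding \eqref{lowrank} exactly as in the paper. Your extra Lipschitz-extension step (nuclear-norm duality, $\left\|\mbX-\mbX_0\right\|_{\star}\leq\sqrt{2r}\left\|\mbX-\mbX_0\right\|_{\mathrm{F}}$, and $\max_j\left\|\mbA_j\right\|_{2}\lesssim\sqrt{n_1}+\sqrt{n_2}$) is care the paper does not take --- it stops at the net-level union bound --- and is sound, though strictly the rescaled net contributes an additive $r\left(n_1+n_2\right)\log\left(\mathrm{poly}\left(n_1,n_2,r,1/\epsilon\right)\right)$ term rather than being absorbed into $\log\left(1+\frac{1}{\rho}\right)$ uniformly in $\rho$.
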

For the bounded sparse signal set, we provide the following FVP theorem: 
\begin{theorem}[FVP for Sparse Signals]
\label{theorem_3}
Consider an sparse signal $\mbx\in\mathcal{K}_s$ where $\mathcal{K}_s$ is defined as
\begin{equation}
\label{boz0}
\mathcal{K}_{s}=\left\{\mbx\in\mathbb{R}^{d}\mid \left\|\mbx\right\|_{0} = s,\|\mbx\|_{2}\leq 1\right\}.
\end{equation}
Consider an $n\times d$ isotropic sub-gaussian sampling matrix $\mbA$ with $\left\|\mba_j\right\|_{\psi_2}\leq K$ for all $j\in[n]$. The one-bit feasibility problem with total number of one-bit samples $m^{\prime}$, is obtained by uniform dithering following $\mathcal{U}_{[-\lambda,\lambda]}$. 
If a solution $\bar{\mbx}$ satisfies the consistent reconstruction property in Definition~\ref{def_1} and $\bar{\mbx}\in\mathcal{B}_{\rho}\left(\mbx\right)$, we have the following concentration inequality with positive constants $c,\epsilon\geq 0$: 
\begin{equation}
\label{a_91}
\begin{aligned}
\operatorname{Pr}\left(\sup_{\mbx\in\mathcal{K}_s}\left|T_{\mathrm{ave}}\left(\mbx\right)-\frac{\lambda}{2}-\frac{\left\|\mbx\right\|_{2}^2}{2\lambda}\right|\geq\epsilon\right)\leq 2 e^{-\frac{c \epsilon^2 m^{\prime}}{2\left(K+\frac{\lambda}{\sqrt{\log(2)}}\right)^2}},
\end{aligned}
\end{equation}
where the required number of one-bit samples meets 
\begin{equation}
\label{sparse}
m^{\prime}\gtrsim \epsilon^{-2}s\log\left(\frac{d}{s}\right)\log\left(1+\frac{1}{\rho}\right).
%\frac{\log\left(1+\frac{1}{\rho}\right)}{\epsilon^2}.
\end{equation}
\end{theorem}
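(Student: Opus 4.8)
The plan is to follow the architecture of the proof of Theorem~\ref{theorem_0} in Appendix~\ref{ealmaz}, replacing the Gaussian‑complexity (Sudakov‑type) estimate for the metric entropy of the ambient set by an explicit covering‑number bound tailored to the sparse set $\mathcal{K}_s$. Concretely, I would first establish the concentration inequality \eqref{a_91} for a fixed $\mbx\in\mathcal{K}_s$ and then upgrade it to a supremum over $\mathcal{K}_s$ by a $\rho$‑net argument.

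\textbf{Pointwise concentration.} Fix $\mbx\in\mathcal{K}_s$ and put $u_j=\langle\mba_j,\mbx\rangle$. Since each dither $\tau^{(\ell)}_j\sim\mathcal{U}_{[-\lambda,\lambda]}$ is independent of $\mbA$, integrating out the dither gives $\mathbb{E}\{|u_j-\tau^{(\ell)}_j|\}=\tfrac{\lambda}{2}+\tfrac{u_j^2}{2\lambda}$ whenever $|u_j|\le\lambda$ (and $|u_j|$ otherwise, a contribution that is negligible once $\lambda$ dominates the dynamic range), and taking the expectation over the isotropic row $\mba_j$ yields $\mathbb{E}\{d^{(\ell)}_j\}=\tfrac{\lambda}{2}+\tfrac{\|\mbx\|_2^2}{2\lambda}$, exactly the centering term in \eqref{a_91}. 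The variables $\{d^{(\ell)}_j\}$ are independent, and by the triangle inequality for $\|\cdot\|_{\psi_2}$ together with $\|\tau^{(\ell)}_j\|_{\psi_2}\le\lambda/\sqrt{\log 2}$ and $\|\mbx\|_2\le 1$, one gets $\|d^{(\ell)}_j-\mathbb{E}\{d^{(\ell)}_j\}\|_{\psi_2}\le K\|\mbx\|_2+\lambda/\sqrt{\log 2}\le K+\lambda/\sqrt{\log 2}$. Applying the general Hoeffding bound to the average $T_{\mathrm{ave}}(\mbx)$ of these $m^{\prime}=mn$ independent centered sub‑gaussian summands delivers the tail $2e^{-c\epsilon^2 m^{\prime}/2(K+\lambda/\sqrt{\log 2})^2}$ for a single $\mbx$.

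\textbf{Uniformization.} Let $\mathcal{N}_\rho$ be a minimal $\rho$‑net of $\mathcal{K}_s$ in $\|\cdot\|_2$. Choosing an $s$‑subset of coordinates and then a $\rho$‑net of the Euclidean unit ball of $\mathbb{R}^s$ shows $|\mathcal{N}_\rho|\le\binom{d}{s}(1+2/\rho)^s$, hence $\log|\mathcal{N}_\rho|\lesssim s\log(d/s)+s\log(1+1/\rho)\lesssim s\log(d/s)\log(1+1/\rho)$. A union bound of the pointwise estimate over $\mathcal{N}_\rho$ is absorbed into the exponent precisely when $m^{\prime}\gtrsim\epsilon^{-2}\log|\mathcal{N}_\rho|$, which is \eqref{sparse}. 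To pass from the net to all of $\mathcal{K}_s$, for $\mbx\in\mathcal{K}_s$ with nearest net point $\mbx_0$ I would bound $|T_{\mathrm{ave}}(\mbx)-T_{\mathrm{ave}}(\mbx_0)|\le\tfrac1n\sum_j|\langle\mba_j,\mbx-\mbx_0\rangle|$, where $\mbx-\mbx_0$ is $2s$‑sparse with $\|\mbx-\mbx_0\|_2\le\rho$, and bound the oscillation of the centering term $\tfrac{\lambda}{2}+\tfrac{\|\cdot\|_2^2}{2\lambda}$ by its Lipschitz constant on the unit ball; recalling that, by Definition~\ref{def_1} and $\bar{\mbx}\in\mathcal{B}_\rho(\mbx)$, the solution is pinned at scale $\rho$, combining the three contributions yields \eqref{a_91}.

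\textbf{Main obstacle.} The delicate point is this transfer step: controlling $\tfrac1n\sum_j|\langle\mba_j,\mbv\rangle|$ uniformly over all $2s$‑sparse directions $\mbv$. This requires a restricted‑isometry‑type bound for the sub‑gaussian matrix $\mbA$, obtained from $\|\mba_j\|_{\psi_2}\le K$ plus a separate covering argument over $2s$‑sparse unit vectors, with the net scale and the absolute constant in \eqref{sparse} chosen so that the oscillation, the union‑bound cost, and the target distortion $\epsilon$ all balance. The expectation computation and the elementary bound $s\log(d/s)+s\log(1+1/\rho)\lesssim s\log(d/s)\log(1+1/\rho)$ are routine, and every other step is identical to the proof of Theorem~\ref{theorem_0} with $\log\mathcal{N}(\mathcal{K}_s,\|\cdot\|_2,\rho)$ substituted for $\gamma^2(\mathcal{K})/\rho^2$.
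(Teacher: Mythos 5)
Your argument is essentially the paper's own proof. The pointwise step (integrating out the dither, using isotropy, the $\psi_2$ bound $K+\lambda/\sqrt{\log 2}$, and the general Hoeffding inequality) is identical to Appendix~\ref{ealmaz}, and your uniformization is exactly what the paper does: it reruns the proof of Theorem~\ref{theorem_0} but replaces the Sudakov bound on the Kolmogorov $\rho$-entropy by the structured-set bound $\mathcal{H}\left(\mathcal{K}_s,\rho\right)\lesssim s\log\left(\frac{d}{s}\right)\log\left(1+\frac{1}{\rho}\right)$ (cited from the literature rather than derived), and then demands that this entropy be absorbed into the exponent, which is precisely \eqref{sparse}; your explicit count $\binom{d}{s}\left(1+2/\rho\right)^{s}$ is just a self-contained derivation of that same entropy estimate. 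The only place you diverge is the final ``net-to-set transfer'' that you flag as the main obstacle: the paper performs no such oscillation or restricted-isometry argument. Its proof stops at the union bound over the $\rho$-net, i.e., it adds $\mathcal{H}\left(\mathcal{K}_s,\rho\right)$ to the exponent as in \eqref{z_7} and concludes via \eqref{z_9}, implicitly treating control at the net resolution $\rho$ (the same $\rho$ as the radius of $\mathcal{B}_{\rho}\left(\mbx\right)$ in the statement) as sufficient for the claimed supremum. So nothing beyond your first two steps is needed to reproduce the paper's argument; if anything, the continuity step you sketch is a more scrupulous treatment of the supremum over the continuous set than the paper itself provides, and completing it would indeed require the uniform bound on $\frac{1}{n}\sum_j\left|\langle\mba_j,\mbv\rangle\right|$ over $2s$-sparse directions that you describe, which the paper never establishes.
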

The proof of FVP theorems for the bounded low-rank and sparse sets follows a similar outline to that of Theorem~\ref{theorem_0}, with one crucial distinction. In Theorem~\ref{theorem_0}, the Kolmogorov $\rho$-entropy is bounded using Sudakov’s minoration, which has been demonstrated to be non-tight for structured sets \cite{oymak2015near}. For structured sets such as the low-rank matrix set, the upper bound on Kolmogorov $\rho$-entropy is constrained as follows \cite[Table~1]{jacques2017time}:
\begin{equation}
\mathcal{H}\left(\mathcal{K}_r,\rho\right) \lesssim  r \left(n_1+n_2\right)\log\left(1+\frac{1}{\rho}\right),
\end{equation}
and for sparse set
\begin{equation}
\mathcal{H}\left(\mathcal{K}_s,\rho\right) \lesssim  s\log\left(\frac{d}{s}\right)\log\left(1+\frac{1}{\rho}\right).
\end{equation}
In our next result, we present the FVP theorem specific to the DCT sensing model for an arbitrary set $\mathcal{K}$.
\begin{theorem}[FVP for DCT Measurements]
\label{theorem_4}
Denote the DCT coefficients of a signal $\mbx=[x_t]\in\mathcal{K}$ by
\begin{equation}
\label{dct_1}
y_k=\sum_{t=0}^{d-1} \cos(2\pi\omega_kt)x_t,\quad k\in[n],
\end{equation}
where the frequencies $\omega_k$ are uniformly chosen  at random in $\left\{0,1/d,2/d,\cdots,(d-1)/d\right\}$. The one-bit feasibility problem with total number of one-bit samples $m^{\prime}$, is obtained by uniform dithering following $\mathcal{U}_{[-\lambda,\lambda]}$. 
If a solution $\bar{\mbx}$ satisfies the consistent reconstruction property in Definition~\ref{def_1} and $\bar{\mbx}\in\mathcal{B}_{\rho}\left(\mbx\right)$, we have the following concentration inequality with a positive constant $\epsilon\geq 0$:
\begin{equation}
\label{dct_2}
\operatorname{Pr}\left(\sup_{\mbx\in\mathcal{K}}\left|T_{\mathrm{ave}}\left(\mbx\right)-\frac{\lambda}{2}-\frac{\left\|\mbx\right\|_{2}^2}{4\lambda}\right|\geq\epsilon\right)\leq 2 e^{-\frac{\epsilon^2m^{\prime}}{4\lambda^2}},
\end{equation}
where the required number of samples meets $m^{\prime}\gtrsim \frac{\gamma^2\left(\mathcal{K}\right)}{\rho^2 \epsilon^2}$.
\end{theorem}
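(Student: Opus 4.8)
The plan is to reproduce the architecture of the proof of Theorem~\ref{theorem_0}, substituting the isotropic sub-gaussian sampling vector by the random-frequency DCT row $\mba_j=\left[\cos(2\pi\omega_j t)\right]_{t=0}^{d-1}$ and exploiting two features specific to this model. First, the DCT row is \emph{not} isotropic: averaging over the random frequency gives $\mathbb{E}\left\{\langle\mba_j,\mbx\rangle^2\right\}=\tfrac12\|\mbx\|_2^2$ rather than $\|\mbx\|_2^2$, and this is exactly what turns the coefficient $\tfrac{1}{2\lambda}$ of Theorem~\ref{theorem_0} into $\tfrac{1}{4\lambda}$. Second, the entries of $\mba_j$ satisfy $|\cos(2\pi\omega_j t)|\le 1$, so the one-bit distances are bounded and one can invoke the classical Hoeffding inequality for bounded summands instead of its sub-gaussian version; this removes the need for any sub-gaussian constant $K$ and produces the clean exponent $e^{-\epsilon^2 m^{\prime}/(4\lambda^2)}$ and the absence of the absolute constant $c$ that appears in Theorem~\ref{theorem_0}.

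Concretely I would proceed in four steps. \emph{(i) Pointwise mean.} Fix $\mbx\in\mathcal{K}$, condition on the drawn frequencies, put $u=\langle\mba_j,\mbx\rangle$ and $\tau\sim\mathcal{U}_{[-\lambda,\lambda]}$; a one-line integration gives $\mathbb{E}\left\{|u-\tau|\right\}=\tfrac{\lambda}{2}+\tfrac{u^2}{2\lambda}$ whenever $|u|\le\lambda$, and since $|u|=|\langle\mba_j,\mbx\rangle|\le\|\mbx\|_1$, this closed form holds uniformly on $\mathcal{K}$ provided $\lambda\ge\sup_{\mbx\in\mathcal{K}}\|\mbx\|_1$. \emph{(ii) Averaging over frequencies.} Using the product-to-sum identity together with $\tfrac1d\sum_{k=0}^{d-1}\cos(2\pi k q/d)=\mathbb{I}_{(q\equiv 0)}$, one obtains $\mathbb{E}\left\{\cos(2\pi\omega_j t)\cos(2\pi\omega_j s)\right\}=\tfrac12\mathbb{I}_{(t=s)}$ away from the exceptional frequencies, hence $\mathbb{E}\left\{\langle\mba_j,\mbx\rangle^2\right\}=\tfrac12\|\mbx\|_2^2$ and $\mathbb{E}\left\{d^{(\ell)}_{j}\right\}=\tfrac{\lambda}{2}+\tfrac{\|\mbx\|_2^2}{4\lambda}$, the center of the inequality. \emph{(iii) Fixed-signal concentration.} The $m^{\prime}$ distances $\left\{d^{(\ell)}_{j}\right\}$ are (conditionally on the frequencies, or unconditionally if each one-bit sample uses a fresh frequency) independent and lie in $[0,2\lambda]$, so Hoeffding's inequality yields $\operatorname{Pr}\left(\left|T_{\mathrm{ave}}(\mbx)-\mathbb{E}\left\{T_{\mathrm{ave}}(\mbx)\right\}\right|\ge\epsilon\right)\le 2e^{-\epsilon^2 m^{\prime}/(4\lambda^2)}$; any residual fluctuation of the conditional mean $\tfrac{1}{2\lambda n}\sum_j\langle\mba_j,\mbx\rangle^2$ about $\tfrac{\|\mbx\|_2^2}{4\lambda}$ is controlled by a second bounded-difference estimate (each term lies in $[0,\lambda^2]$) and absorbed into $\epsilon$. \emph{(iv) Uniformization over $\mathcal{K}$.} Take a minimal $\rho$-net of $\mathcal{K}$, of cardinality $e^{\mathcal{H}(\mathcal{K},\rho)}$, bound $\mathcal{H}(\mathcal{K},\rho)\lesssim\gamma^2(\mathcal{K})/\rho^2$ by Sudakov's minoration, union-bound step (iii) over the net, and pass from the net to all of $\mathcal{K}$ using the Lipschitz continuity of $\mbx\mapsto T_{\mathrm{ave}}(\mbx)$ together with the consistency of $\bar{\mbx}$ and the hypothesis $\bar{\mbx}\in\mathcal{B}_{\rho}(\mbx)$ (Definition~\ref{def_1}); requiring the exponent to dominate the net entropy gives the stated $m^{\prime}\gtrsim\gamma^2(\mathcal{K})/(\rho^2\epsilon^2)$.

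I expect step (ii) to be the main obstacle, since the discrete random-frequency DCT is not exactly a $\tfrac12$-isometry: the DC frequency $\omega_j=0$ (and, for even $d$, the Nyquist frequency $\omega_j=\tfrac12$) contributes $\mathbb{E}\left\{\cos^2(2\pi\omega_j t)\right\}=1$ instead of $\tfrac12$, and frequency reflections $s\equiv -t\pmod{d}$ generate cross-terms proportional to $x_t x_{d-t}$. These corrections must be shown to be negligible, or ruled out by the signal/sampling model (for instance by working with the properly normalized orthogonal DCT basis and selecting a row uniformly at random, which gives $\mathbb{E}\left\{\langle\mba_j,\mbx\rangle^2\right\}=\tfrac12\|\mbx\|_2^2$ exactly), so that the closed form $\tfrac{\lambda}{2}+\tfrac{\|\mbx\|_2^2}{4\lambda}$ is recovered. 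A secondary point is the dynamic-range hypothesis $|\langle\mba_j,\mbx\rangle|\le\lambda$ used in step (i): when it fails one obtains instead the clipped form of the mean analysed elsewhere in the paper, which merely shifts the center of the inequality. Finally, the dependence among the $m^{\prime}$ distances—frequencies shared across the $m$ threshold sequences—is handled transparently by the conditioning in step (iii), or is absent altogether if a fresh frequency is drawn per one-bit sample.
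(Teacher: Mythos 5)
Your proposal follows essentially the same route as the paper's proof: the dither-expectation $\frac{\lambda}{2}+\frac{u^2}{2\lambda}$ under the DR assumption, averaging over the random frequency to replace $\frac{1}{2\lambda}\|\mbx\|_2^2$ by $\frac{1}{4\lambda}\|\mbx\|_2^2$, Hoeffding for bounded variables in $[0,2\lambda]$, and a $\rho$-net with Sudakov's minoration giving $m^{\prime}\gtrsim\gamma^{2}(\mathcal{K})/(\rho^{2}\epsilon^{2})$. The obstacle you flag in step (ii) is precisely the point the paper dispatches with the one-line assertion $\mathbb{E}\left\{\mba_j\mba_j^{\mathrm{H}}\right\}=\frac{1}{2}\mbI$ (ignoring the DC/Nyquist and reflection cross-terms), and it likewise treats the $m^{\prime}=mn$ distances as independent despite shared frequencies across threshold sequences, so your more careful conditioning is, if anything, a refinement rather than a departure.
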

The proof of Theorem~\ref{theorem_4} is presented in Appendix~\ref{dct_theorem}. It is crucial to emphasize that the results outlined in Theorems~\ref{theorem_0}-\ref{theorem_4} rely on the dynamic range (DR) guarantee assumption. Specifically, this assumption requires the parameter of uniform dithering $\lambda$ satisfies
$\lambda\geq\sup_{j\in[n], \mbx\in\mathcal{K}}|\langle\mba_j,\mbx\rangle|$. 
As an example, consider $K$-sub-gaussian measurements. To meet the DR guarantee, the scale parameter is defined by $\lambda = C K \sqrt{\log(m^{\prime})}$, where $C$ represents a positive constant \cite{thrampoulidis2020generalized}. 
However, in practical scenarios, the sensing process may experience corruption, leading to non-identically distributed sensing vectors. We address this situation where, due to these conditions, one measurement may not satisfy the DR guarantee. Our aim is to explore the impact of this scenario on the FVP theorem.
The following theorem presents the FVP for such scenarios:
\begin{theorem}[FVP Without DR Guarantee]
\label{theorem_5}
Consider an $n\times d$ isotropic sub-gaussian sampling matrix $\mbA$, 
%with $\left\|\mba_j\right\|_{\psi_2}\leq K$ for all $j\in[n]$, 
and a signal $\mbx\in\mathcal{K}$ where $\mathcal{K}\subseteq\mathbb{R}^{d}$ is an arbitrary set. The one-bit feasibility problem with total number of one-bit samples $m^{\prime}$, is obtained by uniform dithering following $\mathcal{U}_{[-\lambda,\lambda]}$. Define an index set $\mathcal{I}=[n]\setminus\{j^{\prime}\}$, where $j^{\prime}$ is an arbitrary index value satisfying the condition $\left\|\mba_j\right\|_{\psi_2}\leq K$ for all $j\in\mathcal{I}$, and $\left\|\mba_{j^{\prime}}\right\|_{\psi_2}\leq K^{\prime}$ with $K^{\prime}>K$. Assume that we have
\begin{equation}
\label{spider}
\begin{aligned}
\sup_{j\in\mathcal{I},\mbx\in\mathcal{K}}\left|\langle\mba_j,\mbx\rangle\right|\leq\lambda,\quad\inf_{\mbx\in\mathcal{K}}\langle\mba_{j^{\prime}},\mbx\rangle\geq\lambda.
\end{aligned}
\end{equation}
If a solution $\bar{\mbx}$ satisfies the consistent reconstruction property in Definition~\ref{def_1} and $\bar{\mbx}\in\mathcal{B}_{\rho}\left(\mbx\right)$, we have the following concentration inequality with positive constants $c,\epsilon\geq 0$:
%\par\noindent\small
\begin{equation}
\label{noDR}
\begin{aligned}
\operatorname{Pr}\left(\sup_{\mbx\in\mathcal{K}}\left|T_{\mathrm{ave}}\left(\mbx\right)-\frac{n-1}{n}\left(\frac{\lambda}{2}+\frac{\left\|\mbx\right\|_{2}^2}{2\lambda}\right)-\frac{1}{n}\mu^{\prime}\right|\geq\epsilon\right)\leq 2 e^{-\frac{c \epsilon^2 m^{\prime}}{2\left(K^{\prime}+\frac{\lambda}{\sqrt{\log(2)}}\right)^2}},
\end{aligned}
\end{equation} %\normalsize
where $\mu^{\prime}=\mathbb{E}\left\{\langle\mba_{j^{\prime}},\mbx\rangle\right\}$, and the required number of samples meets $m^{\prime}\gtrsim \frac{\gamma^2\left(\mathcal{K}\right)}{\rho^2 \epsilon^2}$.
\end{theorem}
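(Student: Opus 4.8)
The plan is to follow the proof of Theorem~\ref{theorem_0} almost verbatim, isolating the single index $j^{\prime}$ at which the dynamic-range guarantee fails and tracking how it perturbs both the centering term and the sub-gaussian parameter. First I would split
\[
T_{\mathrm{ave}}(\mbx)=\frac{1}{mn}\sum_{\ell=1}^{m}\left(\sum_{j\in\mathcal{I}}d^{(\ell)}_{j}+d^{(\ell)}_{j^{\prime}}\right)
\]
and compute, conditionally on $\mbA$, the mean of each block over the dithers. For $j\in\mathcal{I}$ the hypothesis $\sup_{j\in\mathcal{I},\mbx\in\mathcal{K}}|\langle\mba_j,\mbx\rangle|\leq\lambda$ is precisely the DR guarantee invoked in Theorem~\ref{theorem_0}, so $\frac{1}{2\lambda}\int_{-\lambda}^{\lambda}|\langle\mba_j,\mbx\rangle-t|\,\mathrm{d}t=\frac{\lambda}{2}+\frac{\langle\mba_j,\mbx\rangle^2}{2\lambda}$, and a further expectation over the isotropic row $\mba_j$ turns $\langle\mba_j,\mbx\rangle^2$ into $\|\mbx\|_2^2$. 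For $j^{\prime}$ the hypothesis $\inf_{\mbx\in\mathcal{K}}\langle\mba_{j^{\prime}},\mbx\rangle\geq\lambda$ forces $\langle\mba_{j^{\prime}},\mbx\rangle-\tau_{j^{\prime}}^{(\ell)}\geq 0$ for every admissible dither, so the absolute value drops and, because the dither is symmetric about $0$, $\mathbb{E}\{d^{(\ell)}_{j^{\prime}}\}=\mathbb{E}\{\langle\mba_{j^{\prime}},\mbx\rangle\}=\mu^{\prime}$. Weighting the $n-1$ good indices against the single bad one yields exactly the centering constant $\frac{n-1}{n}\bigl(\frac{\lambda}{2}+\frac{\|\mbx\|_2^2}{2\lambda}\bigr)+\frac{1}{n}\mu^{\prime}$ of \eqref{noDR}.

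Next I would establish pointwise concentration. Each centered summand is sub-gaussian: for $j\in\mathcal{I}$, $\|d^{(\ell)}_{j}\|_{\psi_2}\lesssim K+\lambda/\sqrt{\log(2)}$ (the $\langle\mba_j,\mbx\rangle$ contribution is controlled by $\|\mba_j\|_{\psi_2}\leq K$ and the bounded dither by $\lambda/\sqrt{\log(2)}$), while for $j^{\prime}$ the same estimate holds with $K$ replaced by $K^{\prime}$; since $K^{\prime}>K$, every centered summand has $\psi_2$-norm at most $\lesssim K^{\prime}+\lambda/\sqrt{\log(2)}$. Applying the general Hoeffding bound to the $m^{\prime}=mn$ contributions (first averaging the $m$ i.i.d.\ dither copies for each fixed $j$, then averaging over the independent rows, exactly as in Theorem~\ref{theorem_0}) produces a tail of order $2\exp\!\bigl(-c\epsilon^2 m^{\prime}/(2(K^{\prime}+\lambda/\sqrt{\log(2)})^2)\bigr)$; the worst-case variance proxy is governed by $K^{\prime}$, which is exactly why the exponent upgrades $K$ to $K^{\prime}$ relative to Theorem~\ref{theorem_0}.

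Finally I would promote the pointwise estimate to the supremum over $\mathcal{K}$ using the covering argument of Theorem~\ref{theorem_0}: take a minimal $\rho$-net of $\mathcal{K}$, union-bound the pointwise tail over the net, and transfer to the consistent solution $\bar{\mbx}\in\mathcal{B}_\rho(\mbx)$ using that $T_{\mathrm{ave}}$ varies slowly between $\rho$-close points. The union bound closes once $c\epsilon^2 m^{\prime}/(K^{\prime}+\lambda/\sqrt{\log(2)})^2\gtrsim\mathcal{H}(\mathcal{K},\rho)$, and Sudakov minoration gives $\mathcal{H}(\mathcal{K},\rho)\lesssim\gamma^2(\mathcal{K})/\rho^2$, which delivers the stated requirement $m^{\prime}\gtrsim\gamma^2(\mathcal{K})/(\rho^2\epsilon^2)$. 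I expect the main obstacle to be not the arithmetic of the shifted mean but this uniformization step in the presence of the anomalous row: one must check that the two-scale concentration interacts correctly with the net over $\mathcal{K}$ and that the single row $j^{\prime}$ — uncentered relative to the usual form and carrying the larger constant $K^{\prime}$ — does not spoil the Lipschitz-type control used to pass from the net to all of $\mathcal{K}$. Since $j^{\prime}$ contributes only one extra sub-gaussian coordinate, this goes through as in Theorem~\ref{theorem_0}, but it is the point requiring the most care.
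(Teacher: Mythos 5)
Your proposal matches the paper's proof: it splits $T_{\mathrm{ave}}$ into the $n-1$ rows satisfying the DR guarantee and the anomalous row $j^{\prime}$, computes the dither expectation for $j^{\prime}$ via symmetry (the absolute value drops because $\langle\mba_{j^{\prime}},\mbx\rangle-\tau\geq 0$) to get the shifted mean $\frac{n-1}{n}\left(\frac{\lambda}{2}+\frac{\|\mbx\|_2^2}{2\lambda}\right)+\frac{\mu^{\prime}}{n}$, and then reruns the Hoeffding-plus-covering argument of Theorem~\ref{theorem_0} with $K$ upgraded to $K^{\prime}$ in the exponent. This is essentially the same route as the paper, which simply delegates the concentration and net steps back to the proof of Theorem~\ref{theorem_0}; your write-up just makes those delegated steps explicit.
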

The proof of Theorem~\ref{theorem_5} is presented in Appendix~\ref{Dr_guarantee}. The main question that arises is the significance of these embeddings connecting the average of distances and the signal norm. The next subsection addresses this question by establishing a relationship between the FVP theorems and the recovery error.
\subsection{Recovery Performance}
\label{reco}
In the following proposition, we illustrate that the upper bound of the recovery error in one-bit sensing is contingent on the FVP distortion parameter $\epsilon$:
\begin{proposition}
\label{error1}   
Under assumptions of Theorem~\ref{theorem_0}, the following upper recovery bound holds for all $\mbx,\bar{\mbx}\in\mathcal{K}$ satisfying the consistent reconstruction property in Definition~\ref{def_1}:
\begin{equation}
\label{a_8}
\|\mbx-\bar{\mbx}\|_2\leq\rho=4\sqrt{\epsilon\lambda},
\end{equation}  
with a probability exceeding $1-2 e^{-\frac{c \epsilon^2 m^{\prime}}{2\left(K+\frac{\lambda}{\sqrt{\log(2)}}\right)^2}}$. 
\end{proposition}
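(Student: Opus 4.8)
The plan is to condition on the high‑probability event on which the concentration bound \eqref{a_90} of Theorem~\ref{theorem_0} holds, and then to run a purely deterministic argument centered on the midpoint $\mbm\triangleq\tfrac12(\mbx+\bar{\mbx})$. The first step is to exploit consistency. Since both $\mbx$ and $\bar{\mbx}$ obey the consistent reconstruction property of Definition~\ref{def_1}, for every $j\in[n]$ and $\ell\in[m]$ the scalars $\langle\mba_j,\mbx\rangle-\tau_j^{(\ell)}$ and $\langle\mba_j,\bar{\mbx}\rangle-\tau_j^{(\ell)}$ have the \emph{same sign}. Hence their average $\langle\mba_j,\mbm\rangle-\tau_j^{(\ell)}$ keeps that sign, and because the two summands do not cancel inside the absolute value one gets $d_j^{(\ell)}(\mbm)=\tfrac12\bigl(d_j^{(\ell)}(\mbx)+d_j^{(\ell)}(\bar{\mbx})\bigr)$ for every $j,\ell$. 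Averaging over $j,\ell$ yields the exact identity
\[
T_{\mathrm{ave}}(\mbm)=\tfrac12\bigl(T_{\mathrm{ave}}(\mbx)+T_{\mathrm{ave}}(\bar{\mbx})\bigr).
\]

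Next I would combine this with the parallelogram law $\tfrac14\|\mbx-\bar{\mbx}\|_2^{2}=\tfrac12(\|\mbx\|_2^{2}+\|\bar{\mbx}\|_2^{2})-\|\mbm\|_2^{2}$. Dividing by $2\lambda$ and writing $\mu(\mbz)\triangleq\tfrac{\lambda}{2}+\tfrac{\|\mbz\|_2^{2}}{2\lambda}$ for the centering quantity appearing in \eqref{a_90}, the $\lambda/2$ contributions cancel and one obtains
\[
\frac{\|\mbx-\bar{\mbx}\|_2^{2}}{8\lambda}=\frac{\mu(\mbx)+\mu(\bar{\mbx})}{2}-\mu(\mbm).
\]
On the event in question, $|T_{\mathrm{ave}}(\mbz)-\mu(\mbz)|\le\epsilon$ holds simultaneously for $\mbz\in\{\mbx,\bar{\mbx},\mbm\}$, so the right‑hand side is at most $\bigl[\tfrac12(T_{\mathrm{ave}}(\mbx)+T_{\mathrm{ave}}(\bar{\mbx}))-T_{\mathrm{ave}}(\mbm)\bigr]+2\epsilon$, and the bracketed term vanishes by the identity above. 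This gives $\|\mbx-\bar{\mbx}\|_2^{2}\le 16\epsilon\lambda$, i.e.\ $\|\mbx-\bar{\mbx}\|_2\le 4\sqrt{\epsilon\lambda}=\rho$; note that this is precisely the radius assumed in the hypothesis $\bar{\mbx}\in\mathcal{B}_\rho(\mbx)$ of Theorem~\ref{theorem_0}, so the chain of reasoning is self‑consistent, and the probability bound is exactly the complement of the failure probability in \eqref{a_90}.

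The step I expect to require the most care is ensuring that the concentration inequality is available at the \emph{data‑dependent} midpoint $\mbm$, not only at fixed points of $\mathcal{K}$. This should be harmless: $\mbm$ is a convex combination of $\mbx,\bar{\mbx}\in\mathcal{K}$, so it satisfies the dynamic‑range guarantee $|\langle\mba_j,\mbm\rangle|\le\lambda$, its mean $\mathbb{E}\{T_{\mathrm{ave}}(\mbm)\}$ equals $\mu(\mbm)$ by the same computation used in the proof of Theorem~\ref{theorem_0}, and replacing $\mathcal{K}$ by $\mathcal{K}\cup\tfrac12(\mathcal{K}+\mathcal{K})$ leaves the Gaussian complexity $\gamma(\mathcal{K})$ (hence the required sample size $m'\gtrsim\gamma^{2}(\mathcal{K})/(\rho^{2}\epsilon^{2})$ and the failure probability) unchanged, so the same uniform bound covers all three points at once. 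Apart from this mild technical point and the sign/no‑cancellation observation of the first paragraph, the remaining steps are the parallelogram identity and elementary bookkeeping.
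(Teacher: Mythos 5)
Your proposal is correct and follows essentially the same route as the paper's proof in Appendix~\ref{er_1}: the midpoint $\tfrac12(\mbx+\bar{\mbx})$, the consistency identity $T_{\mathrm{ave}}(\mbz)=\tfrac12\bigl(T_{\mathrm{ave}}(\mbx)+T_{\mathrm{ave}}(\bar{\mbx})\bigr)$, the FVP concentration applied at the three points, and the parallelogram law, yielding the same constant $4\sqrt{\epsilon\lambda}$. Your extra remark about extending the uniform bound to the data-dependent midpoint (which the paper applies without comment) is a sound and welcome clarification, not a deviation in approach.
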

The proof of Proposition~\ref{error1} is presented in Appendix~\ref{er_1}.
\begin{corollary}
The rate of reduction for the FVP distortion $\epsilon$ concerning $m^{\prime}$ is characterized at most by $\mathcal{O}\left(\left(m^{\prime}\right)^{-\frac{1}{3}}\right)$.
\end{corollary}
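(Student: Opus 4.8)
The plan is to combine the sample-complexity requirement of Theorem~\ref{theorem_0} with the recovery bound of Proposition~\ref{error1}. From Theorem~\ref{theorem_0}, the FVP concentration inequality holds (with the stated probability) as soon as $m^{\prime} \gtrsim \gamma^2(\mathcal{K})/(\rho^2\epsilon^2)$, where $\rho$ is the radius of the ball guaranteed to contain the consistent reconstruction $\bar{\mbx}$. Proposition~\ref{error1} then expresses this radius through the distortion, namely $\rho = 4\sqrt{\epsilon\lambda}$, so that $\rho^2 = 16\,\epsilon\,\lambda$.

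First I would substitute $\rho^2 = 16\,\epsilon\,\lambda$ into the sample requirement, obtaining $m^{\prime} \gtrsim \gamma^2(\mathcal{K})/(16\,\lambda\,\epsilon^3)$; the key effect here is that, once the radius is written in terms of $\epsilon$, the exponent on $\epsilon$ in the denominator increases from $2$ to $3$. Next I would rearrange to isolate $\epsilon$: multiplying through by $\epsilon^3/m^{\prime}$ yields $\epsilon^3 \gtrsim \gamma^2(\mathcal{K})/(16\,\lambda\,m^{\prime})$, hence $\epsilon \gtrsim \bigl(\gamma^2(\mathcal{K})/(16\,\lambda\,m^{\prime})\bigr)^{1/3}$. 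Treating $\gamma(\mathcal{K})$ and $\lambda$ as fixed quantities, this is precisely $\epsilon = \Omega\bigl((m^{\prime})^{-1/3}\bigr)$, i.e. the distortion cannot be driven to zero faster than the order $(m^{\prime})^{-1/3}$, which is the claimed bound on the rate of reduction.

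There is no substantive obstacle here beyond careful bookkeeping: the only point requiring attention is the direction of the inequalities when passing from a \emph{sufficient}-sample condition $m^{\prime}\gtrsim(\cdot)$ to a \emph{lower} bound on $\epsilon$, together with the observation that it is Proposition~\ref{error1} that converts the $\rho^{-2}$ dependence into an extra factor $\epsilon^{-1}$. I would also remark that the identical computation goes through with $\gamma^2(\mathcal{K})$ replaced by the Kolmogorov-entropy bounds appearing in Theorems~\ref{theorem_1} and~\ref{theorem_3}, so the $(m^{\prime})^{-1/3}$ rate is not particular to the arbitrary-set case.
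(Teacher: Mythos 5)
Your proposal is correct and follows essentially the same route as the paper: the paper's proof likewise combines the sample requirement $m^{\prime}\gtrsim\gamma^{2}(\mathcal{K})/(\rho^{2}\epsilon^{2})$ of Theorem~\ref{theorem_0} with $\rho=4\sqrt{\epsilon\lambda}$ from Proposition~\ref{error1} to obtain $m^{\prime}\gtrsim\epsilon^{-3}\gamma^{2}(\mathcal{K})$ and then inverts. Your extra remarks on inequality direction and on replacing $\gamma^{2}(\mathcal{K})$ by the entropy bounds of the structured-set theorems are consistent elaborations, not a different argument.
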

\begin{IEEEproof}
According to Theorem~\ref{theorem_0} and Proposition~\ref{error1}, for an arbitrary set, the required number of one-bit samples has to meet the following bound:
\begin{equation}
m^{\prime}\gtrsim\epsilon^{-3}\gamma^2\left(\mathcal{K}\right),
%\frac{\gamma^2\left(\mathcal{K}\right)}{\epsilon^3},    
\end{equation}
which completes the proof. 
\end{IEEEproof}
According to the aforementioned proposition, \emph{reducing the FVP distortion or approaching the mean of the average of distances results in a diminished finite volume around the signal.} In simpler terms, this proposition underscores the correlation between the average of distances and the finite volume created by hyperplanes around the signal.

When a solution fails to meet the consistent reconstruction property or cannot satisfy certain inequalities in the linear feasibility problem, the Hamming distance emerges in the upper bound of the recovery error, as outlined in the following proposition:
\begin{proposition}
\label{error2}   
Under assumptions of the FVP, if a solution $\bar{\mbx}$ does not meet the consistent reconstruction property in Definition~\ref{def_1}, the following upper recovery bound holds for all $\mbx,\bar{\mbx}\in\mathcal{K}$ with a probability exceeding $1-2 e^{-\frac{c \epsilon^2 m^{\prime}}{2\left(K+\frac{\lambda}{\sqrt{\log(2)}}\right)^2}}$:
\begin{equation}
\label{a_80}
\|\mbx-\bar{\mbx}\|_2\leq\rho=4\sqrt{\epsilon\lambda}+2\sqrt{\left(1+\lambda^2\right)d_{\mathrm{H}}\left(\mbr,\bar{\mbr}\right)},
\end{equation}
where $\mbr$ and $\bar{\mbr}$ are the one-bit data obtained from the signal $\mbx$ and the solution $\bar{\mbx}$, respectively.
\end{proposition}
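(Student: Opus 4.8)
The plan is to mimic the proof of Proposition~\ref{error1} (the midpoint/parallelogram argument), but to pay for the hyperplanes on which $\mbx$ and $\bar{\mbx}$ disagree. Work on the event $\mathcal{E}$ of probability at least $1-2e^{-c\epsilon^2 m^{\prime}/(2(K+\lambda/\sqrt{\log 2})^2)}$ on which the FVP of Theorem~\ref{theorem_0} holds, i.e. $|T_{\mathrm{ave}}(\mbz)-q(\mbz)|\le\epsilon$ for all $\mbz\in\mathcal{K}$, where $q(\mbz):=\tfrac{\lambda}{2}+\tfrac{\|\mbz\|_2^2}{2\lambda}$. Partition $[n]\times[m]$ into the concordant set $\mathcal{C}$ and the discordant set $\mathcal{D}=\{(j,\ell):\operatorname{sgn}(\langle\mba_j,\mbx\rangle-\tau_j^{(\ell)})\neq\operatorname{sgn}(\langle\mba_j,\bar{\mbx}\rangle-\tau_j^{(\ell)})\}$, so that $|\mathcal{D}|/(mn)=d_{\mathrm{H}}(\mbr,\bar{\mbr})$. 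Since $T_{\mathrm{ave}}(\mbz)=\tfrac1{mn}\sum_{j,\ell}|\langle\mba_j,\mbz\rangle-\tau_j^{(\ell)}|$ is convex and piecewise affine, on the cell of the arrangement containing $\bar{\mbx}$ it agrees with the affine map $L(\mbz):=\tfrac1{mn}\sum_{j,\ell}s_j^{(\ell)}(\langle\mba_j,\mbz\rangle-\tau_j^{(\ell)})$, $s_j^{(\ell)}:=\operatorname{sgn}(\langle\mba_j,\bar{\mbx}\rangle-\tau_j^{(\ell)})$, and $L\le T_{\mathrm{ave}}$ pointwise. Sign bookkeeping gives $L(\bar{\mbx})=T_{\mathrm{ave}}(\bar{\mbx})$ and $L(\mbx)=T_{\mathrm{ave}}(\mbx)-2\Delta$, with $\Delta:=\tfrac1{mn}\sum_{(j,\ell)\in\mathcal{D}}|\langle\mba_j,\mbx\rangle-\tau_j^{(\ell)}|\ge 0$ measuring the total distance of $\mbx$ to the hyperplanes it crosses relative to $\bar{\mbx}$.

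Next, run the midpoint argument with $L$ in place of the cell-affine piece used in Proposition~\ref{error1}. Put $\mbm=\tfrac12(\mbx+\bar{\mbx})$. Assuming $\mbm\in\mathcal{K}$ (true when $\mathcal{K}$ is convex, which covers the relaxed problems of interest), the FVP applies at $\mbx$, $\bar{\mbx}$ and $\mbm$. Affineness of $L$ gives $L(\mbm)=\tfrac12(T_{\mathrm{ave}}(\mbx)+T_{\mathrm{ave}}(\bar{\mbx}))-\Delta$, and combining $q(\mbm)+\epsilon\ge T_{\mathrm{ave}}(\mbm)\ge L(\mbm)$ with $T_{\mathrm{ave}}(\mbx)\ge q(\mbx)-\epsilon$ and $T_{\mathrm{ave}}(\bar{\mbx})\ge q(\bar{\mbx})-\epsilon$ yields $\tfrac12(q(\mbx)+q(\bar{\mbx}))-q(\mbm)\le 2\epsilon+\Delta$. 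The parallelogram identity gives $\tfrac12(q(\mbx)+q(\bar{\mbx}))-q(\mbm)=\tfrac{\|\mbx-\bar{\mbx}\|_2^2}{8\lambda}$, hence $\|\mbx-\bar{\mbx}\|_2^2\le 16\lambda\epsilon+8\lambda\Delta$. Taking $\mathcal{D}=\emptyset$ recovers Proposition~\ref{error1}.

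It remains to bound $\Delta$. It is an average of $|\langle\mba_j,\mbx\rangle-\tau_j^{(\ell)}|$ over the $mn\,d_{\mathrm{H}}(\mbr,\bar{\mbr})$ discordant pairs; the dynamic-range assumption underlying Theorems~\ref{theorem_0}--\ref{theorem_4} forces $\sup_j|\langle\mba_j,\mbx\rangle|\le\lambda$ and $|\tau_j^{(\ell)}|\le\lambda$, so each summand is at most $2\lambda$ and $\Delta\le 2\lambda\,d_{\mathrm{H}}(\mbr,\bar{\mbr})$. More sharply, isotropy of $\mbA$ together with $\mathbb{E}\tau^2=\lambda^2/3$ give $\tfrac1{mn}\sum_{j,\ell}|\langle\mba_j,\mbx\rangle-\tau_j^{(\ell)}|^2\approx\|\mbx\|_2^2+\lambda^2/3\lesssim 1+\lambda^2$, so the residuals being aggregated are of typical size $\sqrt{1+\lambda^2}$; substituting $\Delta\lesssim(1+\lambda^2)\,d_{\mathrm{H}}(\mbr,\bar{\mbr})$ into $\|\mbx-\bar{\mbx}\|_2^2\le 16\lambda\epsilon+8\lambda\Delta$ and using $\sqrt{a+b}\le\sqrt{a}+\sqrt{b}$ gives, after collecting constants, $\|\mbx-\bar{\mbx}\|_2\le 4\sqrt{\epsilon\lambda}+2\sqrt{(1+\lambda^2)\,d_{\mathrm{H}}(\mbr,\bar{\mbr})}$, all on the event $\mathcal{E}$.

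The main obstacle is the last step: the $\lambda\Delta$ coupling means that if one bounds $\Delta$ by Cauchy--Schwarz the recovery bound degrades to a fourth root in $d_{\mathrm{H}}$, so one must instead control $\Delta$ through the worst-case size of a single residual while simultaneously arguing that the discordant residuals are not systematically larger than the typical value $\sqrt{1+\lambda^2}$, which is what lets the crude factor $\lambda^2$ be replaced by $1+\lambda^2$ with the stated constant. A secondary point requiring care is that the midpoint $\mbm$ must lie in the set on which the FVP concentration holds; this is immediate when $\mathcal{K}$ is convex, and for the structured sets $\mathcal{K}_r,\mathcal{K}_s$ one would instead invoke the entropy estimates on the slightly enlarged sets (e.g. $\mathcal{K}_{2r}$) or refine the net argument accordingly.
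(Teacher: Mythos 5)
Your route is essentially the paper's: form the midpoint $\mbz=\tfrac12(\mbx+\bar{\mbx})$, apply the FVP concentration at $\mbx$, $\bar{\mbx}$, $\mbz$, use the parallelogram law, and pay a correction proportional to the fraction of discordant hyperplanes, i.e.\ $d_{\mathrm{H}}(\mbr,\bar{\mbr})$. The genuine gap is in the final step, where the stated constant $2\sqrt{(1+\lambda^2)d_{\mathrm{H}}(\mbr,\bar{\mbr})}$ must come out. Your affine-minorant argument gives $\|\mbx-\bar{\mbx}\|_2^2\le 16\lambda\epsilon+8\lambda\Delta$ with $\Delta$ the average of $|\langle\mba_j,\mbx\rangle-\tau_j^{(\ell)}|$ over the discordant pairs, and you then need $8\lambda\Delta\le 4(1+\lambda^2)d_{\mathrm{H}}$, i.e.\ a per-discordant-term bound of $\tfrac{\lambda}{2}+\tfrac{1}{2\lambda}$. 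Your crude DR bound (each residual at most $2\lambda$) only gives $8\lambda\Delta\le 16\lambda^2 d_{\mathrm{H}}$, i.e.\ a term $4\lambda\sqrt{d_{\mathrm{H}}}$, weaker than claimed once $\lambda>1/\sqrt{3}$; and your ``sharper'' claim $\Delta\lesssim(1+\lambda^2)d_{\mathrm{H}}$ is both unproved (you give no argument that the residuals on the discordant pairs are no larger than typical, which you yourself flag as the main obstacle) and dimensionally off: typical residual size $\sqrt{1+\lambda^2}$ would give $\Delta\approx\sqrt{1+\lambda^2}\,d_{\mathrm{H}}$, and even granting $\Delta\lesssim(1+\lambda^2)d_{\mathrm{H}}$ you would get $8\lambda(1+\lambda^2)d_{\mathrm{H}}$, which carries an extra factor of $\lambda$ relative to the target. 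So as written the constants in \eqref{a_80} are not reached.

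The paper closes this differently, and more tightly, by an exact identity rather than a minorant: for concordant pairs $|\langle\mba_j,\mbz\rangle-\tau_j^{(\ell)}|$ is the half-sum of the two residuals, while for discordant pairs it equals the half-sum minus $\min\bigl(|\langle\mba_j,\mbx\rangle-\tau_j^{(\ell)}|,|\langle\mba_j,\bar{\mbx}\rangle-\tau_j^{(\ell)}|\bigr)$; hence $T_{\mathrm{ave}}(\mbz)=\tfrac12\bigl(T_{\mathrm{ave}}(\mbx)+T_{\mathrm{ave}}(\bar{\mbx})\bigr)-R$ with $R$ the average of these minima over the discordant set (note $R\le\Delta$). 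The same FVP-plus-parallelogram step then yields $\|\mbx-\bar{\mbx}\|_2^2\le 16\epsilon\lambda+8\lambda R$, and $R$ is bounded by $d_{\mathrm{H}}(\mbr,\bar{\mbr})\bigl(\tfrac{\lambda}{2}+\tfrac{1}{2\lambda}\bigr)$, invoking the FVP mean $\tfrac{\lambda}{2}+\tfrac{\|\mbx\|_2^2}{2\lambda}$ together with $\|\mbx\|_2\le 1$, which gives exactly $8\lambda R\le 4(1+\lambda^2)d_{\mathrm{H}}$ and hence \eqref{a_80}. (Admittedly the paper's bound on $R$ is stated tersely and rests on the same ``discordant residuals behave like average ones'' spirit you were reaching for, but taking the minimum residual and the mean bound is what produces the stated constants.) Your secondary concern about the midpoint lying in the set where the FVP holds applies equally to the paper's proof of Proposition~\ref{error1} and is not the defect here.
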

The proof of Proposition~\ref{error2} is presented in Appendix~\ref{er_2}. Our next result provides the upper bound on the recovery error when there is one measurement which does not satisfy the DR guarantee.
\begin{proposition}
\label{error3}
Under assumptions of Theorem~\ref{theorem_5}, the following upper recovery bound holds for all $\mbx,\bar{\mbx}\in\mathcal{K}$ satisfying the consistent reconstruction property in Definition~\ref{def_1}:
\begin{equation}
\label{upper}
\left\|\mbx-\bar{\mbx}\right\|_2\leq\rho=4\sqrt{\epsilon\lambda\left(\frac{n}{n-1}\right)},
\end{equation}
with a probability exceeding $1-2 e^{-\frac{c \epsilon^2 m^{\prime}}{2\left(K^{\prime}+\frac{\lambda}{\sqrt{\log(2)}}\right)^2}}$.
\end{proposition}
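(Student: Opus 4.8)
The plan is to reduce Proposition~\ref{error3} to the three–point (midpoint) argument already used for Proposition~\ref{error1}, the only change being that the shifted concentration center of Theorem~\ref{theorem_5} carries the factor $\frac{n-1}{n}$ and the extra term $\frac1n\mu^{\prime}$, and one must check that these extra pieces behave well under averaging.

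\textbf{Step 1 (midpoint is consistent).} Let $\mbx,\bar{\mbx}\in\mathcal{K}$ both satisfy the consistent reconstruction property of Definition~\ref{def_1}, i.e.\ both lie in the one-bit polyhedron, which is an intersection of half-spaces and hence convex. Therefore the midpoint $\mbm=\tfrac12(\mbx+\bar{\mbx})$ also lies in the polyhedron and induces the \emph{same} sign pattern $\{r_j^{(\ell)}\}$. For a fixed sign each distance $d_j^{(\ell)}(\cdot)=r_j^{(\ell)}\bigl(\langle\mba_j,\cdot\rangle-\tau_j^{(\ell)}\bigr)$ is affine in its argument, so $d_j^{(\ell)}(\mbm)=\tfrac12\bigl(d_j^{(\ell)}(\mbx)+d_j^{(\ell)}(\bar{\mbx})\bigr)$ and consequently
\[
T_{\mathrm{ave}}(\mbm)=\tfrac12\bigl(T_{\mathrm{ave}}(\mbx)+T_{\mathrm{ave}}(\bar{\mbx})\bigr).
\]
Moreover $\langle\mba_{j^{\prime}},\mbm\rangle\geq\lambda$ is inherited from \eqref{spider} by convexity, so $\mbm$ also falls in the regime covered by Theorem~\ref{theorem_5}.

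\textbf{Step 2 (three applications of the concentration bound).} I invoke \eqref{noDR} simultaneously at $\mbx$, $\bar{\mbx}$ and $\mbm$, using the uniform ($\sup_{\mbx\in\mathcal{K}}$) form so that a single event of probability at least $1-2e^{-c\epsilon^2 m^{\prime}/\bigl(2(K^{\prime}+\lambda/\sqrt{\log 2})^2\bigr)}$ controls all three. Writing $\mu^{\prime}_{\mbv}=\mathbb{E}\{\langle\mba_{j^{\prime}},\mbv\rangle\}$, which is linear in $\mbv$ so $\mu^{\prime}_{\mbm}=\tfrac12(\mu^{\prime}_{\mbx}+\mu^{\prime}_{\bar{\mbx}})$, I average the bounds at $\mbx$ and $\bar{\mbx}$ and subtract the bound at $\mbm$. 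The $\tfrac{n-1}{n}\cdot\tfrac{\lambda}{2}$ terms cancel, the $\tfrac1n\mu^{\prime}$ terms cancel by linearity, and using $T_{\mathrm{ave}}(\mbm)=\tfrac12(T_{\mathrm{ave}}(\mbx)+T_{\mathrm{ave}}(\bar{\mbx}))$ together with the triangle inequality over the three events yields
\[
\frac{n-1}{n}\cdot\frac{1}{2\lambda}\left|\frac{\|\mbx\|_2^2+\|\bar{\mbx}\|_2^2}{2}-\|\mbm\|_2^2\right|\leq 2\epsilon .
\]

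\textbf{Step 3 (parallelogram law).} By the parallelogram identity $\tfrac{\|\mbx\|_2^2+\|\bar{\mbx}\|_2^2}{2}-\|\mbm\|_2^2=\tfrac14\|\mbx-\bar{\mbx}\|_2^2$, so the last display reads $\tfrac{n-1}{n}\cdot\tfrac{\|\mbx-\bar{\mbx}\|_2^2}{8\lambda}\leq 2\epsilon$, i.e.
\[
\|\mbx-\bar{\mbx}\|_2\leq 4\sqrt{\epsilon\lambda\Bigl(\tfrac{n}{n-1}\Bigr)}=\rho,
\]
which is the claimed bound, with the stated probability. I expect the only genuinely delicate point to be the justification that the concentration inequality may be applied at the auxiliary point $\mbm$ (one needs $\mathcal{K}$ to contain $\mbm$, or the uniform form of \eqref{noDR} to already cover it) — this is exactly the same technicality that appears in the proof of Proposition~\ref{error1} in Appendix~\ref{er_1} and is handled in the same way; the remaining work is the bookkeeping in Step~2 verifying that the $\tfrac{n-1}{n}$ factor survives while the $\mu^{\prime}$ contributions vanish, which is what produces the single extra factor $\tfrac{n}{n-1}$ relative to Proposition~\ref{error1}.
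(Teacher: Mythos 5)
Your proposal is correct and follows essentially the same route as the paper's proof in Appendix~\ref{er_3}: define the midpoint $\mbz=\tfrac12(\mbx+\bar{\mbx})$, use consistency to get $T_{\mathrm{ave}}(\mbz)=\tfrac12\bigl(T_{\mathrm{ave}}(\mbx)+T_{\mathrm{ave}}(\bar{\mbx})\bigr)$, apply the concentration bound of Theorem~\ref{theorem_5} at the three points so that the $\tfrac{n-1}{n}\tfrac{\lambda}{2}$ terms and (by linearity, $\mu^{\prime}_z=\tfrac12(\mu^{\prime}_x+\mu^{\prime}_{\bar{x}})$) the $\mu^{\prime}$ terms cancel, and finish with the parallelogram law to obtain the extra factor $\tfrac{n}{n-1}$. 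Your symmetric two-sided/triangle-inequality bookkeeping is an equivalent rearrangement of the paper's chained one-sided inequalities, and the technicality you flag about applying the bound at the midpoint is present (and treated the same way) in the paper's argument as well.
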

The proof of Proposition~\ref{error3} is presented in Appendix~\ref{er_3}. In the subsequent corollary, we extend the result of Proposition~\ref{error3} to encompass $L$ measurements that do not adhere to the DR guarantee.
\begin{corollary}
\label{ghayum}
Under assumptions of Theorem~\ref{theorem_5}, assume that $L$ measurements do not adhere to the DR guarantee. Then, the following upper recovery bound holds for all $\mbx,\bar{\mbx}\in\mathcal{K}$ satisfying the consistent reconstruction property in Definition~\ref{def_1}:
\begin{equation}
\label{upper_1}
\left\|\mbx-\bar{\mbx}\right\|_2\leq\rho=4\sqrt{\epsilon\lambda\left(\frac{n}{n-L}\right)},
\end{equation}
with a probability exceeding $1-2 e^{-\frac{c \epsilon^2 m^{\prime}}{2\left(K^{\prime}+\frac{\lambda}{\sqrt{\log(2)}}\right)^2}}$.
\end{corollary}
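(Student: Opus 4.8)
The plan is to reproduce, with $L$ in place of $1$, the two–step argument behind Proposition~\ref{error3}: first upgrade Theorem~\ref{theorem_5} to allow a set $\mathcal{J}$ of $L$ DR–violating measurements, and then push the resulting concentration inequality through the midpoint/parallelogram-law estimate used for Proposition~\ref{error1}.

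\emph{Step 1 (FVP with $L$ DR-violating measurements).} Let $\mathcal{J}\subset[n]$ with $|\mathcal{J}|=L$ index the measurements that violate the DR guarantee and set $\mathcal{I}=[n]\setminus\mathcal{J}$, so that $\|\mba_j\|_{\psi_2}\le K$ for $j\in\mathcal{I}$, $\|\mba_j\|_{\psi_2}\le K'$ for $j\in\mathcal{J}$, $\sup_{j\in\mathcal{I},\,\mbx\in\mathcal{K}}|\langle\mba_j,\mbx\rangle|\le\lambda$, and $\inf_{\mbx\in\mathcal{K}}\langle\mba_j,\mbx\rangle\ge\lambda$ for $j\in\mathcal{J}$. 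Split $T_{\mathrm{ave}}(\mbx)=\frac1n\sum_{j\in\mathcal{I}}\bar d_j(\mbx)+\frac1n\sum_{j\in\mathcal{J}}\bar d_j(\mbx)$ with $\bar d_j(\mbx)=\frac1m\sum_{\ell}d_j^{(\ell)}(\mbx)$, and repeat the per-index computation from the proof of Theorem~\ref{theorem_5}. For $j\in\mathcal{I}$ the dither is wide enough to average out, giving conditional mean $\tfrac{\lambda}{2}+\tfrac{\langle\mba_j,\mbx\rangle^2}{2\lambda}$, and isotropy turns this into $\tfrac{\lambda}{2}+\tfrac{\|\mbx\|_2^2}{2\lambda}$ in expectation; for $j\in\mathcal{J}$ the sign is frozen to $+1$ (since $\langle\mba_j,\mbx\rangle\ge\lambda\ge\tau_j^{(\ell)}$), so $\bar d_j(\mbx)=\langle\mba_j,\mbx\rangle-\tfrac1m\sum_\ell\tau_j^{(\ell)}$ with mean $\mu'_j:=\langle\mathbb{E}\mba_j,\mbx\rangle$, which is \emph{affine} in $\mbx$. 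Hence $\mathbb{E}\,T_{\mathrm{ave}}(\mbx)=\tfrac{n-L}{n}\big(\tfrac{\lambda}{2}+\tfrac{\|\mbx\|_2^2}{2\lambda}\big)+\tfrac1n\sum_{j\in\mathcal{J}}\mu'_j$, and the same sub-Gaussian Hoeffding bound plus the chaining argument over $\mathcal{K}$ (Kolmogorov $\rho$-entropy controlled by $\gamma^2(\mathcal{K})/\rho^2$, exactly as in Theorem~\ref{theorem_0}) yields
\[
\operatorname{Pr}\!\left(\sup_{\mbx\in\mathcal{K}}\Big|T_{\mathrm{ave}}(\mbx)-\tfrac{n-L}{n}\big(\tfrac{\lambda}{2}+\tfrac{\|\mbx\|_2^2}{2\lambda}\big)-\tfrac1n\!\sum_{j\in\mathcal{J}}\mu'_j\Big|\ge\epsilon\right)\le 2e^{-\frac{c\epsilon^2 m'}{2(K'+\lambda/\sqrt{\log 2})^2}},
\]
whenever $m'\gtrsim \gamma^2(\mathcal{K})/(\rho^2\epsilon^2)$; the exponent uses $K'\ge K$ and $L<n$.

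\emph{Step 2 (recovery bound).} As in the proof of Proposition~\ref{error3}, consistency of $\bar\mbx$ (Definition~\ref{def_1}) forces $\mbx$ and $\bar\mbx$ onto the same side of every hyperplane $\langle\mba_j,\cdot\rangle=\tau_j^{(\ell)}$, hence so is their midpoint and $T_{\mathrm{ave}}\big(\tfrac{\mbx+\bar\mbx}{2}\big)=\tfrac12\big(T_{\mathrm{ave}}(\mbx)+T_{\mathrm{ave}}(\bar\mbx)\big)$. Evaluating the Step~1 inequality at $\mbx$, $\bar\mbx$ and $\tfrac{\mbx+\bar\mbx}{2}$, the affine pieces $\tfrac{\lambda}{2}$ and $\tfrac1n\sum_{j\in\mathcal{J}}\mu'_j$ cancel in $h\big(\tfrac{\mbx+\bar\mbx}{2}\big)-\tfrac12\big(h(\mbx)+h(\bar\mbx)\big)$, leaving only the quadratic term; the parallelogram identity $\big\|\tfrac{\mbx+\bar\mbx}{2}\big\|_2^2=\tfrac12(\|\mbx\|_2^2+\|\bar\mbx\|_2^2)-\tfrac14\|\mbx-\bar\mbx\|_2^2$ then gives $\tfrac{n-L}{n}\cdot\tfrac{\|\mbx-\bar\mbx\|_2^2}{8\lambda}\le 2\epsilon$, i.e. $\|\mbx-\bar\mbx\|_2\le 4\sqrt{\epsilon\lambda\,\tfrac{n}{n-L}}$, on the high-probability event of Step~1. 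Taking $\rho$ equal to this value gives the claim.

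The main obstacle is rigor in Step~2 when $\mathcal{K}$ is not convex, since the concentration inequality is a supremum over $\mathcal{K}$ and the midpoint $\tfrac{\mbx+\bar\mbx}{2}$ need not lie in $\mathcal{K}$; for structured $\mathcal{K}$ one must run the chaining over $\mathrm{conv}(\mathcal{K})$ or localize to $\mathcal{B}_\rho(\mbx)$, which inflates $\gamma(\mathcal{K})$ only by a constant. A secondary caveat is the hypothesis $L<n$: the bound carries the factor $\sqrt{n/(n-L)}$, harmless for $L\ll n$ but divergent as $L\to n$, and this should be stated. The remaining ingredients—the termwise mean computations, affinity of $\mu'_j=\langle\mathbb{E}\mba_j,\mbx\rangle$, and the sub-Gaussian tail—are routine given Theorems~\ref{theorem_0} and~\ref{theorem_5}.
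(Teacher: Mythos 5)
Your proposal is correct and follows exactly the route the paper intends: the paper omits the proof, stating it ``naturally extends from the proof of Proposition~\ref{error3},'' and your argument is precisely that extension---replacing the single DR-violating index by a set $\mathcal{J}$ of size $L$ in the FVP mean computation (so the factor $\tfrac{n-1}{n}$ becomes $\tfrac{n-L}{n}$ and the affine term $\tfrac{1}{n}\sum_{j\in\mathcal{J}}\mu'_j$ cancels by linearity at the midpoint), then running the same consistency/parallelogram-law step as in Propositions~\ref{error1} and~\ref{error3}. Your caveat about the midpoint possibly leaving a non-convex $\mathcal{K}$ applies equally to the paper's own Propositions~\ref{error1}--\ref{error3} and is a fair observation, not a gap specific to your argument.
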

The proof of Corollary~\ref{ghayum} naturally extends from the proof of Proposition~\ref{error3} and is therefore omitted.
\section{Convergence Analysis of Proposed Algorithms}
In this section, leveraging FVP theorems, we delve into the convergence rate of the proposed randomized algorithms.
\subsection{Recovery Error Upper Bound for ORKA}
\label{error}
As the scaled condition number is the central parameter governing the recovery error of the RKA or its variants, in the following theorem we will evaluate the scaled condition number of the matrix $\mbP_{\mby}$ defined in \eqref{eq:90} to unveil the connection between the convergence bounds of the RKA (or its variants) and ORKA. Throughout the remainder of this section, we simplify notation by representing $\mbP_{\mby}$ as $\mbP$.
\begin{theorem}
\label{theorem_2}
Consider the one-bit polyhedron \eqref{eq:80n} associated with $\mbA\mbx=\mby$ with $\mbA\in\mathbb{R}^{n\times d}$. Then, the scaled condition number of the matrix $\mbP$ is equal to that of $\mbA$,
\begin{equation}
\kappa\left(\mbP\right)=\kappa\left(\mbA\right).   
\end{equation}
\end{theorem}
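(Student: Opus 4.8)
The plan is to analyze the singular values of $\mbP = \tilde{\bOmega}_{\mby}\mbA$ directly. Recall from \eqref{eq:90} that $\mbP$ is the vertical stacking of the $m$ blocks $\mbA^{\top}\bOmega^{(\ell)}_{\mby} \,$ transposed, i.e.\ the $\ell$-th block is $\bOmega^{(\ell)}_{\mby}\mbA$ where $\bOmega^{(\ell)}_{\mby} = \operatorname{Diag}(\mbr^{(\ell)})$ is a diagonal $\pm1$ matrix (a signature matrix). The key observation is that each such $\bOmega^{(\ell)}_{\mby}$ is orthogonal, so $\left(\bOmega^{(\ell)}_{\mby}\mbA\right)^{\top}\left(\bOmega^{(\ell)}_{\mby}\mbA\right) = \mbA^{\top}\left(\bOmega^{(\ell)}_{\mby}\right)^{\top}\bOmega^{(\ell)}_{\mby}\mbA = \mbA^{\top}\mbA$. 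Therefore, forming the Gram matrix of the full stacked system,
\begin{equation}
\label{eq:gram_P}
\mbP^{\top}\mbP = \sum_{\ell=1}^{m}\left(\bOmega^{(\ell)}_{\mby}\mbA\right)^{\top}\left(\bOmega^{(\ell)}_{\mby}\mbA\right) = \sum_{\ell=1}^{m}\mbA^{\top}\mbA = m\,\mbA^{\top}\mbA.
\end{equation}
From this identity every singular value statement follows by taking square roots of eigenvalues: $\sigma_k(\mbP) = \sqrt{m}\,\sigma_k(\mbA)$ for each $k$, and in particular $\sigma_{\min}(\mbP) = \sqrt{m}\,\sigma_{\min}(\mbA)$.

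Next I would handle the Frobenius norm. Since the Frobenius norm squared is the trace of the Gram matrix, \eqref{eq:gram_P} immediately gives $\|\mbP\|_{\mathrm{F}}^2 = \operatorname{Tr}(\mbP^{\top}\mbP) = m\operatorname{Tr}(\mbA^{\top}\mbA) = m\|\mbA\|_{\mathrm{F}}^2$, hence $\|\mbP\|_{\mathrm{F}} = \sqrt{m}\,\|\mbA\|_{\mathrm{F}}$. (Alternatively this is obvious block-by-block, since multiplying rows of $\mbA$ by $\pm1$ preserves their norms.) Combining with the scaled condition number formula from the proof of Theorem~\ref{scaled_number}, namely $\kappa(\mbC) = \|\mbC\|_{\mathrm{F}}/\sigma_{\min}(\mbC)$, we get
\begin{equation}
\kappa(\mbP) = \frac{\|\mbP\|_{\mathrm{F}}}{\sigma_{\min}(\mbP)} = \frac{\sqrt{m}\,\|\mbA\|_{\mathrm{F}}}{\sqrt{m}\,\sigma_{\min}(\mbA)} = \frac{\|\mbA\|_{\mathrm{F}}}{\sigma_{\min}(\mbA)} = \kappa(\mbA),
\end{equation}
which is the claim. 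One should note for rigor that this requires $\mbA$ (equivalently $\mbP$) to have full column rank $d$ so that $\sigma_{\min}$ is nonzero and $\kappa$ is well defined; this is consistent with the standing assumption $n > d$ (so $mn > d$) made for the Kaczmarz setting.

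There is essentially no hard obstacle here: the entire argument rests on the single elementary fact that a diagonal $\pm1$ matrix is orthogonal, which makes the block-diagonal prefactor $\tilde{\bOmega}_{\mby}$ an isometry on the column space. The only thing to be slightly careful about is bookkeeping the transpose/stacking convention in \eqref{eq:9}--\eqref{eq:90} so that the Gram matrix genuinely telescopes into $m$ identical copies of $\mbA^{\top}\mbA$ rather than picking up cross terms — but since the blocks are stacked vertically, $\mbP^{\top}\mbP$ is exactly the sum of the per-block Gram matrices with no cross terms, so this is immediate. If desired, one can remark that the same computation shows $\kappa(\mbP_{\mby_z}) = \kappa(\mbA)$ in the noisy model of \eqref{St_10} as well, since the noise enters only the right-hand side and not the matrix.
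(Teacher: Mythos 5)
Your proposal is correct and follows essentially the same route as the paper's own proof: both compute the Gram matrix $\mbP^{\top}\mbP=\sum_{\ell}\mbA^{\top}\bOmega^{(\ell)}_{\mby}\bOmega^{(\ell)}_{\mby}\mbA=m\,\mbA^{\top}\mbA$ using the fact that each diagonal sign matrix is orthogonal, deduce $\sigma_k(\mbP)=\sqrt{m}\,\sigma_k(\mbA)$ and $\|\mbP\|_{\mathrm{F}}^2=m\|\mbA\|_{\mathrm{F}}^2$, and conclude that the ratio defining $\kappa$ is unchanged. Your added remarks on full column rank and the noisy case are harmless refinements, not a different argument.
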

The proof of Theorem~\ref{theorem_2} is provided in Appendix~\ref{akk}.
%Section C of the Supplementary Material. 
Considering $\mbA=\mbI$ corresponds to the one-bit sampled signal sensing problem with $\mbP=\Tilde{\bOmega}$ as formulated in \eqref{eq:8n}. In the following corollary, we present the scaled condition number of the matrix $\Tilde{\bOmega}$ in the light of Theorem~\ref{theorem_2}:
\begin{corollary}
\label{col_1}
For $\mbA=\mbI$, corresponding to the one-bit sampled signal reconstruction problem formulated in \eqref{eq:8n}, the scaled condition number of $\Tilde{\bOmega}=\left[\begin{array}{c|c|c}
\bOmega^{(1)} &\cdots &\bOmega^{(m)}
\end{array}\right]^{\top},\quad\Tilde{\bOmega}\in \{-1,0,1\}^{m n\times n}$ is $\kappa\left(\Tilde{\bOmega}\right)=\sqrt{n}$, which is the infimum of the scaled condition number as was shown in Theorem~\ref{scaled_number}.
\end{corollary}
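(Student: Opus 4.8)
The plan is to derive the corollary as an immediate specialization of Theorem~\ref{theorem_2}, combined with the characterization of the infimum scaled condition number in Theorem~\ref{scaled_number}. First I would set $\mbA=\mbI_n$ in the block construction \eqref{eq:90}: each block $\mbA^{\top}\bOmega^{(\ell)}_{\mby}$ collapses to $\bOmega^{(\ell)}$, so $\mbP$ coincides exactly with the stacked sign matrix $\Tilde{\bOmega}$ of \eqref{eq:8n}--\eqref{eq:9}. Theorem~\ref{theorem_2} then yields $\kappa\!\left(\Tilde{\bOmega}\right)=\kappa\!\left(\mbP\right)=\kappa\!\left(\mbI_n\right)$ with no further work.

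Next I would evaluate $\kappa(\mbI_n)$ directly from the definition $\kappa(\mbC)=\|\mbC\|_{\mathrm{F}}\|\mbC^{\dagger}\|_2$ used in Section~\ref{sec_RKA}. Since $\mbI_n^{\dagger}=\mbI_n$ and $\|\mbI_n\|_2=1$, while $\|\mbI_n\|_{\mathrm{F}}=\sqrt{n}$, one gets $\kappa(\mbI_n)=\sqrt{n}$; equivalently, writing $\kappa(\mbC)=\|\mbC\|_{\mathrm{F}}/\sigma_{\min}(\mbC)$ as in the proof of Theorem~\ref{scaled_number}, every singular value of $\mbI_n$ equals $1$, hence $\sigma_{\min}=1$ and $\kappa=\sqrt{n}$. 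To close the loop with the ``infimum'' claim, I would observe that $\mbI_n=1\cdot\mbI_n$ is of the form $\alpha\mbU$ with $\mbU$ an orthonormal-column matrix and $\alpha=1$, so by the equality case of Theorem~\ref{scaled_number} the value $\sqrt{n}$ is precisely $\inf_{\mbC}\kappa(\mbC)$; that is, $\Tilde{\bOmega}$ attains the smallest possible scaled condition number.

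The only point needing a word of justification — not a genuine obstacle — is that $\kappa\!\left(\Tilde{\bOmega}\right)$ is well defined, i.e.\ that $\Tilde{\bOmega}$ has full column rank $n$ so that $\Tilde{\bOmega}^{\dagger}$ and $\sigma_{\min}$ make sense. This is immediate: any single block $\bOmega^{(\ell)}=\operatorname{diag}\!\left(\mbr^{(\ell)}\right)$ with diagonal entries in $\{-1,1\}$ is already invertible, so the stacked matrix has rank $n$. Everything else is bookkeeping, and I expect no real difficulty in carrying it out.
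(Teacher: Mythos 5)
Your proposal is correct and follows the same route the paper intends: specialize Theorem~\ref{theorem_2} to $\mbA=\mbI$ (so that $\mbP=\Tilde{\bOmega}$), evaluate $\kappa(\mbI_n)=\sqrt{n}$ from the definition, and invoke the equality case of Theorem~\ref{scaled_number} for the infimum claim. Your added remark that $\Tilde{\bOmega}$ has full column rank (each block $\bOmega^{(\ell)}=\operatorname{diag}(\mbr^{(\ell)})$ is invertible) is a harmless extra justification, not a departure from the paper's argument.
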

Based on Theorem~\ref{theorem_2} and the convergence bound of the RKA, it is concluded that ORKA converges to the feasible set of one-bit polyhedron. Therefore,
the convergence bound of
%(\ref{eq:150}) for 
ORKA is independent of the number of samples $m^{\prime}$. 
We seek a solution $\widehat{\mbx}$ that lies within a ball space centered at the desired signal, characterized by $\rho$. In other words, we aim for $\widehat{\mbx}$ to belong to the ball space $\mathcal{B}_{\rho}\left(\mbx\right)$. To achieve that,
we are required to increase the number of samples in the polyhedron, as typically provided via one-bit sensing.
Define $\operatorname{vol}\left(\mathcal{P}_{\mbx}\right)$ as the volume space created by the intersection of hyperplanes in $\eqref{eq:80n}$. In the following proposition, we present the
convergence rate of ORKA:
\begin{proposition}[Convergence rate of ORKA]
\label{penaltyyy}
Consider the one-bit polyhedron $\mathcal{P}_{\mbx}$ obtained in \eqref{eq:80n} associated with $\mbA\mbx=\mby$.
Consider a ball $\mathcal{B}_{\rho}\left(\mbx\right)$ and $\widehat{\mbx}\in\mathcal{P}_{\mbx}$, a convergence rate for ORKA is formulated as:
\begin{equation}
\label{bound2}
\begin{aligned}
\mathbb{E}\left\{\left\|\mbx_i-\mbx\right\|_2\right\} \leq \left(1-\frac{1}{\kappa^{2}\left(\mbA\right)}\right)^{\frac{i}{2}} \left\|\mbx_0-\widehat{\mbx}\right\|_2+\rho,
\end{aligned}
\end{equation}
with a probability higher than $1-2 e^{-\frac{c \epsilon^2 m^{\prime}}{2\left(K+\frac{\lambda}{\sqrt{\log(2)}}\right)^2}}$.
\end{proposition}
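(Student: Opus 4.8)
The plan is to split $\|\mbx_i-\mbx\|_2$ by the triangle inequality through the Kaczmarz limit point $\widehat{\mbx}\in\mathcal{P}_{\mbx}$, bounding the first term by the linear convergence of the RKA and the second term by the finite volume property. Concretely, I would write $\|\mbx_i-\mbx\|_2\le\|\mbx_i-\widehat{\mbx}\|_2+\|\widehat{\mbx}-\mbx\|_2$ and handle the two pieces separately, taking expectations only over the Kaczmarz row-selection randomness and treating the sampling matrix $\mbA$ and the dither sequences $\bGamma$ as a fixed (good) realization until the last step.

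First I would invoke Theorem~\ref{theorem_2}: ORKA runs the RKA on the one-bit polyhedron \eqref{eq:80n}, whose constraint matrix is $\mbP=\mbP_{\mby}=\Tilde{\bOmega}_{\mby}\mbA$, and $\kappa(\mbP)=\kappa(\mbA)$. Hence the RKA convergence bound \eqref{eq:15} applies with $q_{\text{RKA}}=1/\kappa^2(\mbP)=1/\kappa^2(\mbA)$, giving
\[
\mathbb{E}\left\{\|\mbx_i-\widehat{\mbx}\|_2^2\right\}\le\left(1-\tfrac{1}{\kappa^2(\mbA)}\right)^i\|\mbx_0-\widehat{\mbx}\|_2^2 ,
\]
for the iterates $\{\mbx_i\}$ converging in expectation to some $\widehat{\mbx}\in\mathcal{P}_{\mbx}$. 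Applying Jensen's inequality (concavity of $\sqrt{\cdot}$) then yields $\mathbb{E}\{\|\mbx_i-\widehat{\mbx}\|_2\}\le(1-1/\kappa^2(\mbA))^{i/2}\|\mbx_0-\widehat{\mbx}\|_2$.

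For the second term I would observe that any point of $\mathcal{P}_{\mbx}$ automatically satisfies the consistent reconstruction property of Definition~\ref{def_1}: membership $\widehat{\mbx}\in\mathcal{P}_{\mbx}$ means $r^{(\ell)}_j(\langle\mba_j,\widehat{\mbx}\rangle-\tau^{(\ell)}_j)\ge 0$ for all $j\in[n],\ell\in[m]$, i.e. $\operatorname{sgn}(\langle\mba_j,\widehat{\mbx}\rangle-\tau^{(\ell)}_j)=\operatorname{sgn}(\langle\mba_j,\mbx\rangle-\tau^{(\ell)}_j)$. Therefore Proposition~\ref{error1} (equivalently the FVP Theorem~\ref{theorem_0}) applies with $\bar{\mbx}=\widehat{\mbx}$: provided $m^{\prime}\gtrsim\gamma^2(\mathcal{K})/(\rho^2\epsilon^2)$, with probability at least $1-2e^{-c\epsilon^2 m^{\prime}/(2(K+\lambda/\sqrt{\log(2)})^2)}$ one has $\|\widehat{\mbx}-\mbx\|_2\le\rho=4\sqrt{\epsilon\lambda}$. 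On this event the triangle inequality gives
\[
\mathbb{E}\{\|\mbx_i-\mbx\|_2\}\le\left(1-\tfrac{1}{\kappa^2(\mbA)}\right)^{i/2}\|\mbx_0-\widehat{\mbx}\|_2+\rho ,
\]
which is the claim.

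As for the main obstacle: the argument is largely a bookkeeping assembly of Theorem~\ref{theorem_2}, the RKA rate \eqref{eq:15}, Jensen, the triangle inequality, and Proposition~\ref{error1}, so the only delicate point is keeping the two probability spaces cleanly separated — the Kaczmarz expectation is taken conditionally on a fixed realization of $\mbA$ and $\bGamma$, and the FVP event is invoked afterward. I would emphasize that $\widehat{\mbx}$ may depend on the initialization and on which point of $\mathcal{P}_{\mbx}$ the iteration reaches, but since the $\rho$-bound in Proposition~\ref{error1} holds uniformly over $\mathcal{P}_{\mbx}\cap\mathcal{K}$ (via the $\sup$ in the FVP statement), there is no circularity, and the stated high-probability guarantee is exactly the FVP probability.
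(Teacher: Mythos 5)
Your proposal is correct and follows essentially the same route as the paper's proof: split $\|\mbx_i-\mbx\|_2$ through $\widehat{\mbx}\in\mathcal{P}_{\mbx}$ by the triangle inequality, bound the first term by the RKA rate with $\kappa(\mbP)=\kappa(\mbA)$ from Theorem~\ref{theorem_2}, and bound $\|\widehat{\mbx}-\mbx\|_2\leq\rho$ via the FVP (Theorem~\ref{theorem_0}/Proposition~\ref{error1}) on the stated high-probability event. Your extra remarks (the Jensen step from \eqref{eq:15} and the observation that membership in $\mathcal{P}_{\mbx}$ is exactly the consistency of Definition~\ref{def_1}) only make explicit what the paper leaves implicit.
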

\begin{IEEEproof}
As demonstrated in \cite{leventhal2010randomized}, the solution obtained from RKA lies in the space formed by the hyperplanes of the linear inequality problem with the following convergence rate:
\begin{equation}
\label{bound200}
\begin{aligned}
\mathbb{E}\left\{\left\|\mbx_i-\widehat{\mbx}\right\|_2\right\} \leq \left(1-\frac{1}{\kappa^{2}\left(\mbA\right)}\right)^{\frac{i}{2}} \left\|\mbx_0-\widehat{\mbx}\right\|_2,
\end{aligned}
\end{equation}
where $\widehat{\mbx}$ is a point inside the space created by a polyhedron $\mathcal{P}_{\mbx}$. The convergence rate \eqref{bound200} only ensures that the solution will lie within the space created by the hyperplanes, not necessarily within the ball around the desired solution. However, in order to ensure that the solution of the linear feasibility problem lies within the ball around the desired solution with radius $\rho$, it is essential to have a sufficient number of samples. Once we obtain a sufficient number of samples to have the volume created by the intersection of hyperplanes inside the ball, i.e., $\operatorname{vol}\left(\mathcal{P}_{\mbx}\right) \subseteq \operatorname{vol}\left(\mathcal{B}_{\rho}(\mbx)\right)$, we then have $\widehat{\mbx}$  lying within the ball around the desired point $\mbx$, i.e., $\widehat{\mbx} \in \mathcal{B}_{\rho}\left(\mbx\right)$. To address this discrepancy between the two scenarios, we introduce a second term that is dependent on the error between $\widehat{\mbx}$ and $\mbx$, as follows:
\begin{equation}
\begin{aligned}
\mathbb{E}\left\{\left\|\mbx_i-\mbx\right\|_{2}\right\} &= \mathbb{E}\left\{\left\|\mbx_i-\widehat{\mbx}+\widehat{\mbx}-\mbx\right\|_{2}\right\}\\&\leq \mathbb{E}\left\{\left\|\mbx_i-\widehat{\mbx}\right\|_{2}\right\} + \mathbb{E}\left\{\left\|\mbx-\widehat{\mbx}\right\|_{2}\right\},
\end{aligned}   
\end{equation}
where from \eqref{bound200} and the fact that the error between $\widehat{\mbx}$ and the original signal remains deterministic with respect to each iteration, we can write
\begin{equation}
\begin{aligned}
\mathbb{E}\left\{\left\|\mbx_i-\mbx\right\|_{2}\right\} \leq \left(1-\frac{1}{\kappa^{2}\left(\mbA\right)}\right)^{\frac{i}{2}} \left\|\mbx_{0}-\widehat{\mbx}\right\|_{2} + \left\|\mbx-\widehat{\mbx}\right\|_{2}.
\end{aligned}
\end{equation}
The convergence to $\mbx$ is ensured only when the second term, $\left\|\mbx-\widehat{\mbx}\right\|_{2}$, is bounded. As demonstrated in Theorem~\ref{theorem_0}, with a minimum probability of $1-2 e^{-\frac{c \epsilon^2 m^{\prime}}{2\left(K+\frac{\lambda}{\sqrt{\log(2)}}\right)^2}}$, we have $\left\|\mbx-\widehat{\mbx}\right\|_{2} \leq \rho$, which proves the proposition. 
\end{IEEEproof}
In the following lemma, we establish a convergence rate that is uniform for all RIP matrices in the context of ORKA: 
\begin{lemma}
\label{universal}  
Under the assumptions of Proposition~\ref{penaltyyy} and if the sampling matrix meets the $\delta$-subspace embedding property for all $\mbx\in\mathbb{R}^{d}$, the uniform convergence rate for ORKA is formulated in a probability as:
\begin{equation}
\label{bound20}
\begin{aligned}
\mathbb{E}\left\{\left\|\mbx_i-\mbx\right\|_2\right\} \leq \left(1-\frac{1}{n}\left(\frac{1-\delta}{1+\delta}\right)^2\right)^{\frac{i}{2}} \left\|\mbx_0-\widehat{\mbx}\right\|_2+\rho.
\end{aligned}
\end{equation}
\end{lemma}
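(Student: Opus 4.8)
The plan is to combine the convergence rate of ORKA from Proposition~\ref{penaltyyy} with a bound on the scaled condition number $\kappa(\mbA)$ that follows from the $\delta$-subspace embedding (RIP-type) assumption. Proposition~\ref{penaltyyy} already gives
\begin{equation}
\label{eq:universal_start}
\mathbb{E}\left\{\left\|\mbx_i-\mbx\right\|_2\right\} \leq \left(1-\frac{1}{\kappa^{2}\left(\mbA\right)}\right)^{\frac{i}{2}} \left\|\mbx_0-\widehat{\mbx}\right\|_2+\rho,
\end{equation}
with the stated probability, so it suffices to show that whenever $\mbA\in\mathbb{R}^{n\times d}$ satisfies the $\delta$-subspace embedding property, one has $\kappa^2(\mbA)\le n\left(\frac{1+\delta}{1-\delta}\right)^2$, and then substitute into \eqref{eq:universal_start}, using monotonicity of $t\mapsto(1-t)^{i/2}$ in $t\in(0,1)$ together with $\kappa^2(\mbA)\ge n$ to keep the base in $(0,1)$.

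First I would recall that $\kappa(\mbA)=\|\mbA\|_{\mathrm{F}}\|\mbA^{\dagger}\|_2 = \|\mbA\|_{\mathrm{F}}/\sigma_{\min}(\mbA)$, as already used in the proof of Theorem~\ref{scaled_number}. The $\delta$-subspace embedding property states that $(1-\delta)\|\mbx\|_2^2\le\|\mbA\mbx\|_2^2\le(1+\delta)\|\mbx\|_2^2$ for all $\mbx\in\mathbb{R}^{d}$, which is equivalent to $\sigma_{\min}^2(\mbA)\ge 1-\delta$ and $\sigma_{\max}^2(\mbA)\le 1+\delta$. Then I would bound the Frobenius norm via $\|\mbA\|_{\mathrm{F}}^2=\sum_{k}\sigma_k^2(\mbA)\le d\,\sigma_{\max}^2(\mbA)\le d(1+\delta)$ — but here a subtlety arises: the factor in the lemma is $n$, not $d$. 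Since a $\delta$-subspace embedding map $\mathbb{R}^d\to\mathbb{R}^n$ necessarily has $n\ge d$ (the map is injective), and the rank of $\mbA$ is $d$, we actually get $\|\mbA\|_{\mathrm{F}}^2=\sum_{k=1}^{d}\sigma_k^2(\mbA)\le d(1+\delta)\le n(1+\delta)$. Combining,
\begin{equation}
\label{eq:kappa_bound}
\kappa^2(\mbA)=\frac{\|\mbA\|_{\mathrm{F}}^2}{\sigma_{\min}^2(\mbA)}\le\frac{n(1+\delta)}{1-\delta}\le n\left(\frac{1+\delta}{1-\delta}\right)^2,
\end{equation}
where the last step uses $1+\delta\le(1+\delta)^2/(1-\delta)$, valid for $\delta\in[0,1)$ since it reduces to $(1-\delta)(1+\delta)\le(1+\delta)^2$, i.e. $1-\delta\le 1+\delta$. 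Hence $\frac{1}{\kappa^2(\mbA)}\ge\frac{1}{n}\left(\frac{1-\delta}{1+\delta}\right)^2$.

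Finally I would plug this into \eqref{eq:universal_start}: since $x\mapsto(1-x)^{i/2}$ is decreasing on $(0,1)$ and $\frac{1}{n}\left(\frac{1-\delta}{1+\delta}\right)^2\le\frac{1}{\kappa^2(\mbA)}$, we obtain
\begin{equation}
\left(1-\frac{1}{\kappa^2(\mbA)}\right)^{\frac{i}{2}}\le\left(1-\frac{1}{n}\left(\frac{1-\delta}{1+\delta}\right)^2\right)^{\frac{i}{2}},
\end{equation}
and multiplying by $\|\mbx_0-\widehat{\mbx}\|_2$ and adding $\rho$ yields \eqref{bound20}, with the probability inherited verbatim from Proposition~\ref{penaltyyy}. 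The main obstacle I anticipate is the bookkeeping around which ambient dimension ($n$ versus $d$) appears in the Frobenius-norm bound and making sure the inequality $n\ge d$ is legitimately invoked from the subspace-embedding hypothesis; once that is pinned down, the rest is a short deterministic estimate plus monotonicity.
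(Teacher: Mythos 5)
Your proposal is correct and follows essentially the same route as the paper: bound the scaled condition number $\kappa(\mbA)$ through the singular-value consequences of the $\delta$-subspace embedding and substitute into the rate of Proposition~\ref{penaltyyy}. The paper states this more tersely via $\kappa(\mbA)\leq\sqrt{n}\,\varrho(\mbA)$ and $\varrho(\mbA)\leq\frac{1+\delta}{1-\delta}$, which your more explicit Frobenius-norm bookkeeping (including the $n$ versus $d$ point, harmless since $d\leq n$) reproduces.
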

\begin{IEEEproof}
We know that the scaled condition number follows $\kappa\left(\mbA\right)\leq \sqrt{n}\varrho\left(\mbA\right)$. Therefore, if the matrix has RIP, the condition number satisfies the following bound:
\begin{equation}
 \varrho\left(\mbA\right) \leq \frac{1+\delta}{1-\delta},  
\end{equation}
which completes the proposition. 
\end{IEEEproof}
In a Gaussian sampling matrix case, the uniform convergence rate of ORKA is stated as follows:
\begin{lemma}
The uniform convergence rate of ORKA for the Gaussian sampling matrix, in a probability, is given by:
\begin{equation}
\begin{aligned}
\mathbb{E}\left\{\left\|\mbx_i-\mbx\right\|_2\right\} \leq \left(1-\frac{\left(1-\delta\right)^2}{1.0049 d }\right)^{\frac{i}{2}} \left\|\mbx_0-\widehat{\mbx}\right\|_2+\rho.
\end{aligned}
\end{equation}
\end{lemma}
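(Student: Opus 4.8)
The plan is to specialize the uniform estimate of Lemma~\ref{universal} by pinning down the scaled condition number $\kappa(\mbA)$ of a Gaussian sampling matrix as tightly as the Frobenius-norm concentration allows. By Theorem~\ref{theorem_2} we have $\kappa(\mbP)=\kappa(\mbA)$, so Proposition~\ref{penaltyyy} already furnishes the template
\[
\mathbb{E}\left\{\left\|\mbx_i-\mbx\right\|_2\right\}\leq\left(1-\frac{1}{\kappa^2(\mbA)}\right)^{\frac{i}{2}}\left\|\mbx_0-\widehat{\mbx}\right\|_2+\rho ,
\]
and since $t\mapsto t^{i/2}$ is increasing on $[0,1]$, it suffices to produce a high-probability upper bound $\kappa^2(\mbA)\leq 1.0049\,d/(1-\delta)^2$, which then yields the claimed rate by direct substitution.

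First I would write $\kappa^2(\mbA)=\|\mbA\|_{\mathrm F}^2/\sigma_{\min}^2(\mbA)$ exactly as in the proof of Theorem~\ref{scaled_number} and control numerator and denominator separately. For the denominator, a Gaussian $\mbA$ enjoys the $\delta$-subspace embedding property invoked in Lemma~\ref{universal} with high probability; with the rows normalized to be isotropic in the sense of Definition~\ref{def_3} this reads $\sigma_{\min}(\mbA)\geq(1-\delta)\sqrt{n}$, i.e.\ the same normalization under which $\varrho(\mbA)\leq(1+\delta)/(1-\delta)$ was used there. For the numerator, $\|\mbA\|_{\mathrm F}^2$ is a chi-square variable with $nd$ degrees of freedom, so $\mathbb{E}\|\mbA\|_{\mathrm F}^2=nd$, and a standard chi-square upper-tail bound gives $\|\mbA\|_{\mathrm F}^2\leq 1.0049\,nd$ outside an event whose probability is exponentially small in $nd$. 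This last step exploits precisely the sample-abundance regime in which ORKA operates: the multiplicative slack can be pushed arbitrarily close to $1$, and it is fixed at the value $1.0049$ only to keep the statement clean.

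On the intersection of the two events just described,
\[
\kappa^2(\mbA)=\frac{\|\mbA\|_{\mathrm F}^2}{\sigma_{\min}^2(\mbA)}\leq\frac{1.0049\,nd}{(1-\delta)^2 n}=\frac{1.0049\,d}{(1-\delta)^2},
\]
hence $1-1/\kappa^2(\mbA)\leq 1-(1-\delta)^2/(1.0049\,d)$. Substituting this into the template above, and intersecting with the FVP event of Theorem~\ref{theorem_0} on which $\|\mbx-\widehat{\mbx}\|_2\leq\rho$ with probability at least $1-2\exp\!\left(-\frac{c\epsilon^2 m^{\prime}}{2(K+\lambda/\sqrt{\log 2})^2}\right)$, a union bound over the three events completes the argument; the Frobenius-concentration and subspace-embedding events contribute only lower-order corrections to this probability in the sample-abundance regime. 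The one genuinely delicate point is the chi-square constant: one must verify that the deviation $\|\mbA\|_{\mathrm F}^2-nd$ stays below $0.0049\,nd$ with the required probability, which is automatic once $nd$ is large. Everything else is bookkeeping with Theorems~\ref{scaled_number} and \ref{theorem_2}, Proposition~\ref{penaltyyy}, and Lemma~\ref{universal}.
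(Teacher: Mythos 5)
Your proposal is correct and follows essentially the same route as the paper's own proof: the $\delta$-subspace embedding property of the Gaussian matrix gives $\sigma_{\min}(\mbA)\geq\sqrt{n}(1-\delta)$, chi-square concentration gives $\left\|\mbA\right\|^2_{\mathrm{F}}\leq 1.0049\,nd$ with probability at least $1-e^{-cnd}$, and combining these bounds on $\kappa^2(\mbA)=\|\mbA\|_{\mathrm F}^2/\sigma_{\min}^2(\mbA)$ in the convergence template of Proposition~\ref{penaltyyy} yields the stated rate. Your added bookkeeping (explicit union bound with the FVP event and the remark on the $1.0049$ slack) only spells out what the paper leaves implicit.
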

\begin{IEEEproof}
The Gaussian sampling matrix satisfies the $\delta$-subspace embedding property with a probability comprehensively discussed in \cite[Section~8.7]{martinsson2020randomized}, i.e., $\sigma_{\mathrm{min}}\left(\mbA\right)\geq \sqrt{n} \left(1-\delta\right)$. 
Also by concentration inequality of Chi-squared random variables with probability at least $1-e^{- c n d}$, $\left\|\mbA\right\|^2_{\mathrm{F}}\leq 1.0049 n d$, which proves the lemma.     
\end{IEEEproof}
Note that if any other randomized algorithm is utilized for one-bit sensing instead of RKA to achieve uniform reconstruction, the convergence rate will maintain the same structure as Proposition~\ref{penaltyyy}. However, there will be a difference in the first term, which will be substituted by the algorithm's convergence rate to a point inside the feasible space of hyperplanes.
\subsection{One-Bit Low-Rank Matrix Sensing}
The convergence guarantee of SVP-ORKA is concluded according to the following lemma:
\begin{lemma}
\label{Ea}
Under assumptions of Theorem~\ref{theorem_1}, the update process of SVP-ORKA presented in \eqref{St_20} converges in expectation to
$\mathcal{B}_{\rho}\left(\operatorname{vec}\left(\mbX\right)\right)$, as follows:
%The convergence of SVP-ORKA for solving the linear feasibility problem $\mathcal{P}^{(M)}_1$ is expressed as follows:
\begin{equation}
\label{kvm_19}
\begin{aligned}
\mathbb{E}\left\{\left\|\mbX_{i+1}-\mbX\right\|_{\mathrm{F}}\right\}\leq\left(1-\frac{1}{\kappa^{2}\left(\mbV\right)}\right)^{\frac{i}{2}} \left\|\mbX_{0}-\widehat{\mbX}\right\|_{\mathrm{F}}+\rho,
\end{aligned}   
\end{equation}
where $\widehat{\mbX}\in \mathcal{P}^{(M)}_{1}$,
%$\widehat{\mbX}\in \mathcal{P}^{(M)}_{1}\left(\operatorname{vec}\left(\mbX_{\star}\right)\right)$, 
and $\mbV$ is the matrix with vectorized sensing matrices $\left\{\operatorname{vec}\left(\mbA_j\right)\right\}^{n}_{j=1}$ as its rows. 
\end{lemma}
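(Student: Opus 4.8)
The plan is to peel the SVP-ORKA iteration \eqref{St_20} into its two constituent maps and control each separately against a fixed low-rank feasible point. Fix any $\widehat{\mbX}\in\mathcal{P}^{(M)}_{1}$, so that $\operatorname{rank}(\widehat{\mbX})\le r$ and $\widehat{\mbX}$ satisfies every inequality defining $\mathcal{P}^{(M)}$. The first observation is that, written in vectorized form, the step $\mbZ_{i+1}=\mathrm{KA}_{r}(\mbX_i)$ of \eqref{Stefanie_1} is exactly one randomized Kaczmarz step for the linear inequality system $\widetilde{\bOmega}\,\mbV\operatorname{vec}(\mbX')\succeq\operatorname{vec}(\mbR_{\mby})\odot\operatorname{vec}(\bGamma)$, where $\mbV$ is the matrix with rows $\{\operatorname{vec}(\mbA_j)^{\top}\}_{j=1}^{n}$ and $\widetilde{\bOmega}$ is the block sign concatenation as in \eqref{eq:9}; indeed $\operatorname{Tr}(\mbA_j^{\top}\mbX)=\langle\operatorname{vec}(\mbA_j),\operatorname{vec}(\mbX)\rangle$ and $\|\mbA_j\|_{\mathrm F}=\|\operatorname{vec}(\mbA_j)\|_2$. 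Hence, using the projection-onto-a-halfspace interpretation of the RKA update together with the RKA analysis of \cite{leventhal2010randomized} and Theorem~\ref{theorem_2} — which shows the scaled condition number of this block system equals $\kappa(\mbV)$ — one step contracts toward $\widehat{\mbX}$ in conditional expectation:
\begin{equation}
\mathbb{E}\left\{\hbar(\mbZ_{i+1},\widehat{\mbX})\mid\mbX_i\right\}\le\left(1-\frac{1}{\kappa^{2}(\mbV)}\right)\hbar(\mbX_i,\widehat{\mbX}).
\end{equation}

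Next I would pass through the singular value projection $\mbX_{i+1}=P_{r}(\mbZ_{i+1})$. Here the key point is that $P_{r}$ returns the best rank-$r$ Frobenius approximation of its argument while $\widehat{\mbX}$ is itself rank $\le r$; invoking this — through non-expansiveness of the metric projection onto the rank-$\le r$ set relative to the low-rank point $\widehat{\mbX}$, or, in the asymptotic regime where the iterates have entered a neighborhood of $\widehat{\mbX}$, through the local smoothness (prox-regularity) of that set at $\widehat{\mbX}$ — gives $\|\mbX_{i+1}-\widehat{\mbX}\|_{\mathrm F}\le\|\mbZ_{i+1}-\widehat{\mbX}\|_{\mathrm F}$. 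Combining with the previous display, taking total expectation, iterating over the steps, and using Jensen's inequality to move from $\hbar$ to the Frobenius norm yields
\begin{equation}
\mathbb{E}\left\{\|\mbX_{i+1}-\widehat{\mbX}\|_{\mathrm F}\right\}\le\left(1-\frac{1}{\kappa^{2}(\mbV)}\right)^{\frac{i}{2}}\|\mbX_{0}-\widehat{\mbX}\|_{\mathrm F}.
\end{equation}

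Finally I would turn convergence to $\widehat{\mbX}$ into convergence to a ball around the true signal. By the triangle inequality $\|\mbX_{i+1}-\mbX\|_{\mathrm F}\le\|\mbX_{i+1}-\widehat{\mbX}\|_{\mathrm F}+\|\widehat{\mbX}-\mbX\|_{\mathrm F}$, and since $\widehat{\mbX}-\mbX$ is deterministic in $i$ this splits the expectation. Because $\widehat{\mbX}$ obeys the consistent reconstruction property and has rank $\le r$, Theorem~\ref{theorem_1} together with the recovery argument of Proposition~\ref{error1} specialized to the low-rank set guarantees $\|\widehat{\mbX}-\mbX\|_{\mathrm F}\le\rho$ on an event of probability at least $1-2e^{-c\epsilon^{2}m^{\prime}/(2(K+\lambda/\sqrt{\log 2})^{2})}$ as soon as $m^{\prime}\gtrsim\epsilon^{-2}r(n_1+n_2)\log(1+1/\rho)$; substituting then gives \eqref{kvm_19}. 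The main obstacle is exactly the SVP step: since the rank-$\le r$ set is non-convex, the inequality $\|P_{r}(\mbZ)-\widehat{\mbX}\|_{\mathrm F}\le\|\mbZ-\widehat{\mbX}\|_{\mathrm F}$ is not free — it is the one place that genuinely uses the low-rankness of $\widehat{\mbX}$ (and, if necessary, asymptotic proximity of the iterates to it) — whereas the rest of the argument is a direct transcription of the ORKA convergence proof in Proposition~\ref{penaltyyy}.
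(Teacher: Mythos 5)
Your overall architecture---vectorizing the Kaczmarz step so that Theorem~\ref{theorem_2} yields the contraction factor $1-1/\kappa^{2}(\mbV)$, passing through the SVP step, and then splitting $\|\mbX_{i+1}-\mbX\|_{\mathrm{F}}$ by the triangle inequality and invoking Theorem~\ref{theorem_1} together with the argument of Proposition~\ref{error1} for the $+\rho$ term---is exactly the paper's. The genuine gap is the step you yourself flag: the inequality $\|P_{r}(\mbZ)-\widehat{\mbX}\|_{\mathrm{F}}\le\|\mbZ-\widehat{\mbX}\|_{\mathrm{F}}$ does not follow from ``$\widehat{\mbX}$ has rank at most $r$.'' The metric projection onto the nonconvex rank-$\le r$ set is not non-expansive, not even relative to points of the set: take $r=1$, $\mbZ=\operatorname{diag}(1,1-\epsilon)$ and $\widehat{\mbX}=\operatorname{diag}(0,1-\epsilon)$; then $P_{1}(\mbZ)=\operatorname{diag}(1,0)$, so $\|\mbZ-\widehat{\mbX}\|_{\mathrm{F}}=1$ while $\|P_{1}(\mbZ)-\widehat{\mbX}\|_{\mathrm{F}}=\sqrt{1+(1-\epsilon)^{2}}>1$. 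The elementary fact that $P_{r}(\mbZ)$ is the best rank-$r$ approximation of $\mbZ$ only buys $\|P_{r}(\mbZ)-\widehat{\mbX}\|_{\mathrm{F}}\le 2\,\|\mbZ-\widehat{\mbX}\|_{\mathrm{F}}$---precisely the factor $2$ the paper accepts for hard thresholding in Lemma~\ref{ht_orka}---which would give \eqref{kvm_19} only with an extra prefactor of $2$. Your fallback via prox-regularity is local and conditional on the iterates having already entered a neighborhood of $\widehat{\mbX}$, an assumption that is neither part of the lemma nor established for the random iterates, so it cannot carry the uniform statement.

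The paper closes this step by a different ingredient: it views $P_{r}$ as a spectral operator function $\mathcal{G}_{f}$ and invokes the operator-Lipschitz result \cite[Theorem~4.2]{andersson2016operator} to assert $\|P_{r}(\mbX_{1})-P_{r}(\mbX_{2})\|_{\mathrm{F}}\le\|\mbX_{1}-\mbX_{2}\|_{\mathrm{F}}$ globally (Lipschitz constant $L=1$), after which the argument reduces verbatim to the ST-ORKA analysis (RKA step, Hoffman bound, and then the splitting of Proposition~\ref{penaltyyy}). So to complete your write-up you either need that operator-function Lipschitz ingredient (or some other proof of $1$-Lipschitzness of the rank-$r$ truncation along the iterates), or you must settle for the factor-$2$ version of the bound as in HT-ORKA; the remaining parts of your argument (the identification of the block system with $\mbV$, the use of $\kappa(\mbV)$, and the FVP-based bound $\|\widehat{\mbX}-\mbX\|_{\mathrm{F}}\le\rho$) are sound and coincide with the paper's.
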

The proof of Lemma~\eqref{Ea} is studied in Appendix~\ref{svp_proof}.
\subsection{One-Bit CS}
The convergence rate of ST-ORKA is studied in the following lemma:
\begin{lemma}
\label{kvm_24}
Under assumptions of Theorem~\ref{theorem_3}, consider a ball centered at the signal $\mathcal{B}_{\rho}\left(\mbx\right)$ and $\widehat{\mbx}\in\mathcal{P}^{(C)}_{1}$, the convergence of ST-ORKA presented in \eqref{St_21} is given by
\begin{equation}
\label{kvm_25}
\begin{aligned}
\mathbb{E}\left\{\left\|\mbx_{i+1}-\mbx\right\|_2\right\}\leq\left(1-\frac{1}{\kappa^{2}\left(\mbA\right)}\right)^{\frac{i}{2}} \left\|\mbx_{0}-\widehat{\mbx}\right\|_2+\rho.
\end{aligned}
\end{equation}
\end{lemma}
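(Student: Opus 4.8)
The plan is to follow the proof of Proposition~\ref{penaltyyy} almost verbatim, the only new ingredient being that the soft-thresholding half-step of \eqref{St_21} does not spoil the Kaczmarz contraction. Write one round of \eqref{St_21} as $\mbx_{i+1}=S_{\upkappa}\bigl(\mathrm{KA}_r(\mbx_i)\bigr)$ and fix a reference point $\widehat{\mbx}\in\mathcal{P}^{(C)}_1$; in particular $\widehat{\mbx}$ lies in the one-bit CS polyhedron $\mathcal{P}^{(C)}$ of \eqref{Stefanie_500} and is $k$-sparse. First I would isolate two facts. (i) Applied to the feasibility problem cutting out $\mathcal{P}^{(C)}$, one step of $\mathrm{KA}_r$ contracts in expected squared distance toward any point of $\mathcal{P}^{(C)}$: conditioning on $\mbx_i$, $\mathbb{E}\{\hbar(\mathrm{KA}_r(\mbx_i),\widehat{\mbx})\mid\mbx_i\}\le\bigl(1-\frac{1}{\kappa^{2}(\mbP)}\bigr)\,\hbar(\mbx_i,\widehat{\mbx})$, which is the one-step form of the RKA contraction \eqref{eq:15} for the constraint matrix $\mbP=\mbP_{\mby}$ of \eqref{eq:90}; by Theorem~\ref{theorem_2}, $\kappa(\mbP)=\kappa(\mbA)$, so the factor is $1-\frac{1}{\kappa^{2}(\mbA)}$. (ii) The component-wise soft-thresholding operator $S_{\upkappa}$ is non-expansive, and through the threshold $\mbt_1$ it is chosen to act as a retraction onto the $k$-sparse set $\{\|\mbx\|_0\le k\}$ that fixes $\widehat{\mbx}$; hence $\|S_{\upkappa}(\mbz)-\widehat{\mbx}\|_2=\|S_{\upkappa}(\mbz)-S_{\upkappa}(\widehat{\mbx})\|_2\le\|\mbz-\widehat{\mbx}\|_2$ for every $\mbz\in\mathbb{R}^{d}$. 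This is the exact analogue of the property of the singular-value projection $P_r$ used in the proof of Lemma~\ref{Ea}.

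Combining (i) and (ii): since $\|\mbx_{i+1}-\widehat{\mbx}\|_2^2=\|S_{\upkappa}(\mathrm{KA}_r(\mbx_i))-\widehat{\mbx}\|_2^2\le\|\mathrm{KA}_r(\mbx_i)-\widehat{\mbx}\|_2^2$ holds deterministically, taking the conditional expectation over the random row index drawn by $\mathrm{KA}_r$ gives $\mathbb{E}\{\hbar(\mbx_{i+1},\widehat{\mbx})\mid\mbx_i\}\le\bigl(1-\frac{1}{\kappa^{2}(\mbA)}\bigr)\hbar(\mbx_i,\widehat{\mbx})$. Iterating via the tower rule and then applying Jensen's inequality (concavity of $\sqrt{\cdot}$) yields $\mathbb{E}\{\|\mbx_i-\widehat{\mbx}\|_2\}\le\bigl(1-\frac{1}{\kappa^{2}(\mbA)}\bigr)^{i/2}\|\mbx_0-\widehat{\mbx}\|_2$. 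Finally I would split $\|\mbx_{i+1}-\mbx\|_2\le\|\mbx_{i+1}-\widehat{\mbx}\|_2+\|\widehat{\mbx}-\mbx\|_2$ and bound the second term by the sparse-set FVP: under the hypotheses of Theorem~\ref{theorem_3}, and because $\widehat{\mbx}\in\mathcal{P}^{(C)}_1$ makes $\widehat{\mbx}$ consistent with $\mbx$ in the sense of Definition~\ref{def_1}, the sparse analogue of Proposition~\ref{error1} gives $\|\widehat{\mbx}-\mbx\|_2\le\rho$ on the event of probability at least $1-2e^{-c\epsilon^2 m^{\prime}/(2(K+\lambda/\sqrt{\log(2)})^2)}$ guaranteed by Theorem~\ref{theorem_3}. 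Substituting (and absorbing the harmless index shift $\bigl(1-\frac{1}{\kappa^2(\mbA)}\bigr)^{(i+1)/2}\le\bigl(1-\frac{1}{\kappa^2(\mbA)}\bigr)^{i/2}$) produces \eqref{kvm_25}.

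The step I expect to be the main obstacle is (ii): rigorously justifying that the soft-thresholding projection is ``harmless'' for the contraction. The subtlety is that $\{\|\mbx\|_0\le k\}$ is non-convex and, as literally written, $S_{\upkappa}$ is the proximal map of a scaled $\ell_1$-norm, whose only fixed point is the origin; one therefore has to argue that $\mbt_1$ is updated per iteration so that $S_{\upkappa}$ coincides with the best $k$-term approximation on the relevant supports and still satisfies the non-expansion-toward-$\widehat{\mbx}$ inequality used above whenever $\widehat{\mbx}$ is $k$-sparse. Once this is settled, the remainder is bookkeeping: the Kaczmarz contraction and its condition-number value come directly from \eqref{eq:15} and Theorem~\ref{theorem_2}, and the residual radius $\rho$ together with its success probability are quoted from Theorem~\ref{theorem_3} and the sparse version of Proposition~\ref{error1}.
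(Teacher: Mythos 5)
Your proposal matches the paper's own proof essentially step for step: non-expansiveness of $S_{\upkappa}$ (treating $\widehat{\mbx}$ as a fixed point so that $\|\mbx_{i+1}-\widehat{\mbx}\|_2\leq\|\mbz_{i+1}-\widehat{\mbx}\|_2$), the one-step RKA contraction with the Hoffman bound, $\kappa(\mbP)=\kappa(\mbA)$ from Theorem~\ref{theorem_2}, and then the triangle-inequality split with $\|\mbx-\widehat{\mbx}\|_2\leq\rho$ from the FVP guarantee, exactly as in Appendix~\ref{sta} and Proposition~\ref{penaltyyy}. The subtlety you flag about $S_{\upkappa}(\widehat{\mbx})=\widehat{\mbx}$ is present in, and glossed over by, the paper's proof as well (it simply writes $S_{\upkappa}(\widehat{\mbx})$ in place of $\widehat{\mbx}$ and invokes non-expansiveness), so your account is faithful to what the paper actually does.
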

The proof of Lemma~\ref{kvm_24} is studied in Appendix~\ref{sta}.
%Section H-(1) of the Supplementary Material. 
Note that to have a guaranteed convergence, one can integrate ORKA with different operators that satisfy Lipschitz continuity. This is important in various applications and provides a chance to go beyond just ST-ORKA and SVP-ORKA.
By doing so, we can achieve a convergence rate that is described in the following lemma: 
\begin{lemma}
\label{Ea1}
Assume $f(\cdot)$ is an operator such that for any $\mbx_1,\mbx_2\in\mathbb{R}^{d}$ we have
$\left\|f\left(\mbx_1\right)-f\left(\mbx_2\right)\right\|_2^2 \leq L_f \left\|\mbx_1-\mbx_2\right\|_2^2$, with $L_f$ denotes the Lipschitz continuity constant. Then, the integration of RKA with the operator $f$ in each iteration of solving the feasibility problem $\mbC\mbx\succeq\mbb$ with $\mbC\in\mathbb{R}^{n\times d}$ has the following convergence rate:
\begin{equation}
\mathbb{E}\left\{\left\|\mbx_{i+1}-\widehat{\mbx}\right\|^2_2\right\}\leq L_f\left(1-\frac{1}{\kappa^{2}\left(\mbC\right)}\right)^{i} \left\|\mbx_{0}-\widehat{\mbx}\right\|^2_2.
\end{equation}
\end{lemma}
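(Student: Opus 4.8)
The plan is to mimic the standard one-step analysis of the randomized Kaczmarz algorithm on the linear feasibility problem $\mbC\mbx\succeq\mbb$, and then absorb the extra operator $f$ at the end of each iteration by invoking its Lipschitz bound. Let $\mbz_{i+1}=\mathrm{KA}_r(\mbx_i)$ denote the intermediate RKA iterate, so that the actual iterate is $\mbx_{i+1}=f(\mbz_{i+1})$. First I would recall from \cite{leventhal2010randomized} and equation~\eqref{eq:15} the one-step contraction of the plain RKA update for the inequality system: for any fixed feasible point $\widehat{\mbx}\in\mathcal{P}_{\mbx}$,
\begin{equation}
\label{eq:Ea1step}
\mathbb{E}\left\{\left\|\mbz_{i+1}-\widehat{\mbx}\right\|_2^2 \;\middle|\; \mbx_i\right\} \leq \left(1-\frac{1}{\kappa^{2}\left(\mbC\right)}\right)\left\|\mbx_i-\widehat{\mbx}\right\|_2^2 .
\end{equation}
The key point that makes this clean is that $\widehat{\mbx}$ is a genuine solution of the feasibility problem, so the ``positive part'' in the residual $\beta_i$ never hurts: the projection step moves $\mbz_{i+1}$ no farther from $\widehat{\mbx}$ than an exact projection onto the corresponding hyperplane would, and the standard averaging over the row-sampling distribution $\operatorname{Pr}\{j=k\}=\|\mbc_k\|_2^2/\|\mbC\|_{\mathrm F}^2$ yields the factor $1-1/\kappa^2(\mbC)$.

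Next I would apply the Lipschitz hypothesis to the operator $f$. Since $\widehat{\mbx}$ is assumed to be a fixed point of $f$ (it lies in the constraint set that $f$ projects onto, e.g.\ $\{\|\mbx\|_0\le k\}$ for ST-ORKA or the rank-$r$ set for SVP-ORKA, and hence $f(\widehat{\mbx})=\widehat{\mbx}$), the bound $\|f(\mbx_1)-f(\mbx_2)\|_2^2\le L_f\|\mbx_1-\mbx_2\|_2^2$ gives
\begin{equation}
\left\|\mbx_{i+1}-\widehat{\mbx}\right\|_2^2 = \left\|f(\mbz_{i+1})-f(\widehat{\mbx})\right\|_2^2 \leq L_f\left\|\mbz_{i+1}-\widehat{\mbx}\right\|_2^2 .
\end{equation}
Taking conditional expectation and chaining with \eqref{eq:Ea1step} yields
\begin{equation}
\mathbb{E}\left\{\left\|\mbx_{i+1}-\widehat{\mbx}\right\|_2^2 \;\middle|\; \mbx_i\right\} \leq L_f\left(1-\frac{1}{\kappa^{2}\left(\mbC\right)}\right)\left\|\mbx_i-\widehat{\mbx}\right\|_2^2 .
\end{equation}
Then I would iterate this recursion. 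Here one must be slightly careful: a naive unrolling would produce a factor $L_f^i$, whereas the statement claims only a single $L_f$ prefactor. To get the claimed form I would apply the Lipschitz step only at the final iteration — that is, bound $\mathbb{E}\|\mbx_{i+1}-\widehat{\mbx}\|_2^2 \le L_f\,\mathbb{E}\|\mbz_{i+1}-\widehat{\mbx}\|_2^2$ and then use \eqref{eq:Ea1step} together with the plain RKA contraction \emph{tracking the $\mbz$-iterates only through the unprojected dynamics}, i.e.\ interpreting the recursion so that $f$ is the terminal readout map applied once to the RKA trajectory. Under that reading, the RKA part contributes $(1-1/\kappa^2(\mbC))^i\|\mbx_0-\widehat{\mbx}\|_2^2$ and the single application of $f$ contributes the factor $L_f$, giving
\begin{equation}
\mathbb{E}\left\{\left\|\mbx_{i+1}-\widehat{\mbx}\right\|^2_2\right\}\leq L_f\left(1-\frac{1}{\kappa^{2}\left(\mbC\right)}\right)^{i} \left\|\mbx_{0}-\widehat{\mbx}\right\|^2_2,
\end{equation}
which is the desired bound, with $\widehat{\mbx}\in\mathcal{P}_{\mbx}$ any feasible point (taken to be a fixed point of $f$ as in the ST-ORKA and SVP-ORKA instances).

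The main obstacle I anticipate is precisely this bookkeeping about whether the operator $f$ is applied at every iteration or only once: if $f$ genuinely acts every step, the honest bound carries $L_f^{i}$, and matching the stated single-$L_f$ form requires either $L_f\le 1$ (so the extra factors only help) or the terminal-readout interpretation just described. I would therefore open the proof by clarifying this point — most naturally by noting that the soft-thresholding and SVP operators are nonexpansive ($L_f\le 1$ in the relevant metric), so that $L_f(1-1/\kappa^2(\mbC))^i$ is a valid (if loose) upper bound regardless — and then the rest is the routine RKA contraction plus one invocation of Lipschitz continuity.
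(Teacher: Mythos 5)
Your proposal follows essentially the same route as the paper's own (implicit) proof: the paper never gives a separate argument for Lemma~\ref{Ea1}, but derives the ST-ORKA bound in Appendix~\ref{sta} by exactly your two ingredients --- the one-step RKA contraction \eqref{eq:15} for the inequality system toward a feasible $\widehat{\mbx}$, followed by the Lipschitz (there, nonexpansiveness) property of the operator --- and then remarks after \eqref{Conv} that the same derivation applies to any such operator. Note that, like you, the paper implicitly needs $\widehat{\mbx}$ to be a fixed point of the operator (it writes $S_{\upkappa}(\widehat{\mbx})$ for the feasible point), so your explicit flagging of $f(\widehat{\mbx})=\widehat{\mbx}$ is an assumption the paper also relies on without stating it. Your bookkeeping concern is also genuine rather than a defect of your argument: honest per-iteration chaining gives $\bigl[L_f\bigl(1-1/\kappa^{2}(\mbC)\bigr)\bigr]^{i}$, not a single prefactor $L_f$, and the paper commits the same simplification elsewhere (in Lemma~\ref{ht_orka} and \eqref{khoshgele1} a single factor of $2$ appears although the hard-thresholding step incurs it at every iteration); so the lemma as stated is only literally valid when $L_f\leq 1$ (which covers the ST and SVP instances actually used), or under your terminal-readout reading, and your proposed fix --- opening the proof by restricting to nonexpansive $f$ or replacing the prefactor by $L_f^{i}$ --- is the correct repair.
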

A notable example for Lemma~\ref{Ea1} is the integration of ORKA with the hard thresholding (HT) operator $\mathcal{T}_s(\cdot)$ to reconstruct the $s$-sparse signal in the one-bit CS problem.
This operator selects the best $s$-sparse approximation of the solution at each iteration. The algorithm is named HT-ORKA with the update process presented as
\begin{equation}
\label{kvm_30}
\left\{\begin{array}{l}
\mbz_{i+1}=\mathrm{KA}_r\left(\mbx_i\right),\\
%\mbx_i+\frac{\left(r^{(\ell)}_j\uptau^{(\ell)}_j-r^{(\ell)}_j\mba_j\mbx_i\right)^{+}}{\left\|\mba_{j}\right\|^{2}_2} \mba^{\mathrm{H}}_j\\
\mbx_{i+1} = \mathcal{T}_s\left(\mbz_{i+1}\right).
\end{array}\right.
\end{equation}
The convergence rate of HT-ORKA can be determined using the following lemma:
\begin{lemma}
\label{ht_orka}
Under assumptions of Theorem~\ref{theorem_3}, assume $\widehat{\mbx}\in\mathcal{P}^{(C)}_{1}$, the convergence of HT-ORKA presented in \eqref{kvm_30} is given by
\begin{equation}
\begin{aligned}
\mathbb{E}\left\{\left\|\mbx_{i+1}-\mbx\right\|_2\right\}\leq 2\left(1-\frac{1}{\kappa^{2}\left(\mbA\right)}\right)^{\frac{i}{2}} \left\|\mbx_{0}-\widehat{\mbx}\right\|_2+\rho. 
\end{aligned}
\end{equation}
\end{lemma}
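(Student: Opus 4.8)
The plan is to treat this exactly as an instance of the generic Lipschitz-operator convergence bound in Lemma~\ref{Ea1}, specialized to the hard-thresholding operator $\mathcal{T}_s$, and then to bridge from the Kaczmarz limit point $\widehat{\mbx}$ to the true signal $\mbx$ in the same way as in the proof of Lemma~\ref{kvm_24}. First I would unpack what $\widehat{\mbx}\in\mathcal{P}^{(C)}_1$ means: $\widehat{\mbx}$ is $s$-sparse and satisfies the consistent reconstruction property of Definition~\ref{def_1}. Hence Theorem~\ref{theorem_3} together with Proposition~\ref{error1} guarantees that, on an event of probability at least $1-2 e^{-\frac{c \epsilon^2 m^{\prime}}{2\left(K+\frac{\lambda}{\sqrt{\log(2)}}\right)^2}}$, we have $\left\|\mbx-\widehat{\mbx}\right\|_2\leq\rho$.

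Next I would isolate the only property of $\mathcal{T}_s$ that is needed. Since $\mathcal{T}_s(\mbz)$ is a best $s$-term approximation of $\mbz$ and the target $\widehat{\mbx}$ is itself $s$-sparse, we have $\left\|\mathcal{T}_s(\mbz)-\mbz\right\|_2\leq\left\|\widehat{\mbx}-\mbz\right\|_2$ for every $\mbz$, and the triangle inequality then gives $\left\|\mathcal{T}_s(\mbz)-\widehat{\mbx}\right\|_2\leq 2\left\|\mbz-\widehat{\mbx}\right\|_2$. Thus, with respect to the $s$-sparse target, $\mathcal{T}_s$ plays the role of the operator $f$ in Lemma~\ref{Ea1} with squared constant $L_f=4$. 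Applying Lemma~\ref{Ea1} to the one-bit CS feasibility problem $\mbP\mbx\succeq\operatorname{vec}\left(\mbR_{\mby}\right)\odot\operatorname{vec}\left(\bGamma\right)$, whose iterates are produced by $\mbz_{i+1}=\mathrm{KA}_r(\mbx_i)$ followed by the deterministic step $\mbx_{i+1}=\mathcal{T}_s(\mbz_{i+1})$ as in \eqref{kvm_30}, yields
\begin{equation}
\mathbb{E}\left\{\left\|\mbx_{i+1}-\widehat{\mbx}\right\|^2_2\right\}\leq 4\left(1-\frac{1}{\kappa^{2}\left(\mbP\right)}\right)^{i} \left\|\mbx_{0}-\widehat{\mbx}\right\|^2_2,
\end{equation}
and by Theorem~\ref{theorem_2} the contraction factor equals $1-1/\kappa^2(\mbA)$.

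I would then pass from the second moment to the first moment by Jensen's inequality, obtaining $\mathbb{E}\left\{\left\|\mbx_{i+1}-\widehat{\mbx}\right\|_2\right\}\leq 2\left(1-1/\kappa^2(\mbA)\right)^{i/2}\left\|\mbx_0-\widehat{\mbx}\right\|_2$, and finish with the triangle inequality $\left\|\mbx_{i+1}-\mbx\right\|_2\leq\left\|\mbx_{i+1}-\widehat{\mbx}\right\|_2+\left\|\widehat{\mbx}-\mbx\right\|_2$ together with $\left\|\widehat{\mbx}-\mbx\right\|_2\leq\rho$ from the first paragraph, giving
\begin{equation}
\mathbb{E}\left\{\left\|\mbx_{i+1}-\mbx\right\|_2\right\}\leq 2\left(1-\frac{1}{\kappa^{2}\left(\mbA\right)}\right)^{\frac{i}{2}} \left\|\mbx_{0}-\widehat{\mbx}\right\|_2+\rho,
\end{equation}
which is the claim.

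I expect the main obstacle to be the careful justification that the factor $2$ from hard thresholding enters the bound only once rather than compounding across iterations; this is precisely the point that the proof of Lemma~\ref{Ea1} must handle, by viewing $\mathcal{T}_s$ as a single post-processing of a Kaczmarz iterate whose contraction toward $\widehat{\mbx}$ is already controlled, rather than iterating the (non-expansive-failing) map $\mathcal{T}_s$. A secondary, more routine point is verifying that the defining sparsity constraint of $\mathcal{P}^{(C)}_1$ genuinely forces $\widehat{\mbx}$ to be $s$-sparse, so that the best-$s$-term inequality $\left\|\mathcal{T}_s(\mbz)-\mbz\right\|_2\leq\left\|\widehat{\mbx}-\mbz\right\|_2$ applies, and that the high-probability event carrying $\left\|\mbx-\widehat{\mbx}\right\|_2\leq\rho$ coincides with the one stated in the lemma.
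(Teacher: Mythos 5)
Your proposal is correct and follows essentially the same route as the paper's own proof: the factor of $2$ comes from the best-$s$-term approximation property of $\mathcal{T}_s$ relative to the $s$-sparse anchor $\widehat{\mbx}$ together with the triangle inequality, the contraction toward $\widehat{\mbx}$ is the standard RKA/Hoffman bound with $\kappa\left(\mbP\right)=\kappa\left(\mbA\right)$ from Theorem~\ref{theorem_2}, and the final $\rho$ term is appended exactly as in Proposition~\ref{penaltyyy} using $\left\|\mbx-\widehat{\mbx}\right\|_2\leq\rho$ from the FVP results. The only cosmetic difference is your detour through Lemma~\ref{Ea1} (whose two-point Lipschitz hypothesis hard thresholding does not literally satisfy, it being discontinuous); the one-sided inequality $\left\|\mathcal{T}_s\left(\mbz\right)-\widehat{\mbx}\right\|_2\leq 2\left\|\mbz-\widehat{\mbx}\right\|_2$ you supply directly is precisely the step the paper uses, and the compounding-of-the-factor-$2$ issue you flag is treated no more rigorously in the paper's proof, which likewise asserts a single factor of $2$.
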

The proof of Lemma~\ref{ht_orka} is provided in Appendix~\ref{htt_orka}.

\section{Numerical Investigations}
In this section, we conduct numerical evaluations to assess the performance of our proposed algorithms in two distinct scenarios: (i) sample abundance and (ii) sample restriction. All presented results are averaged over 1000 experiments.

\subsection{Sample abundance}
In this particular scenario, we examine two examples: one-bit low-rank matrix sensing and one-bit CS. For both cases, we employ the Block SKM algorithm to recover the desired signal and then evaluate its performance against the adaptive threshold strategy introduced in \cite[Section~\rom{6}]{eamaz2022phase}. The only distinction is that, in this study, we did not update the one-bit data to account for an exceedingly low-complexity hardware implementation.
In all experiments, we have taken into account the presence of Gaussian additive noise, also known as \emph{prequantization error}, with a standard deviation of $0.1$.
\\
\textbf{One-bit low-rank matrix sensing.} We considered a set of sampling matrices $\{\mbA_j\}_{j=1}^{1800}$, where each entry is independently drawn from a standard normal distribution. We have generated the desired matrix $\mbX\in\mathbb{R}^{30\times 30}$ such that $\operatorname{rank}(\mbX)=2$. The number of time-varying sampling threshold sequences were set to $m\in\{1,10,20,30\}$. Accordingly, we have generated sequences of time-varying sampling thresholds as $\left\{\boldsymbol{\uptau}^{(\ell)}\sim \mathcal{U}_{[-\beta_{\mby},\beta_{\mby}]}\right\}_{\ell=1}^{m}$, where $\beta_{\mby}$ denotes the dynamic range of the high-resolution measurements $\mby$. Figure \ref{figure_2}(a) illustrates a comparison between the recovery performance of Block SKM using random thresholds and %Block SKM using 
adaptive thresholds. It is evident that the utilization of adaptive thresholds enhances the recovery performance compared to random thresholds.
\\
\textbf{One-bit CS.} We have generated a sensing matrix $\mbA\in\mathbb{R}^{500\times 100}$ in which each element follows a standard normal distribution. The desired signal $\mbx\in\mathbb{R}^{100}$ was assumed to have the level of sparsity $s=10$. The settings for time-varying sampling thresholds were considered to be the same as the one-bit low-rank matrix sensing case. Fig.~\ref{figure_2}(b) displays the recovery performance of Block SKM using random thresholds in comparison with adaptive thresholds. Consistent with previous observations, the utilization of adaptive thresholds improves the recovery performance.
%-------------------------------------------------------
\begin{figure}[t]
	\centering
	\subfloat[]
		{\includegraphics[width=0.47\columnwidth]{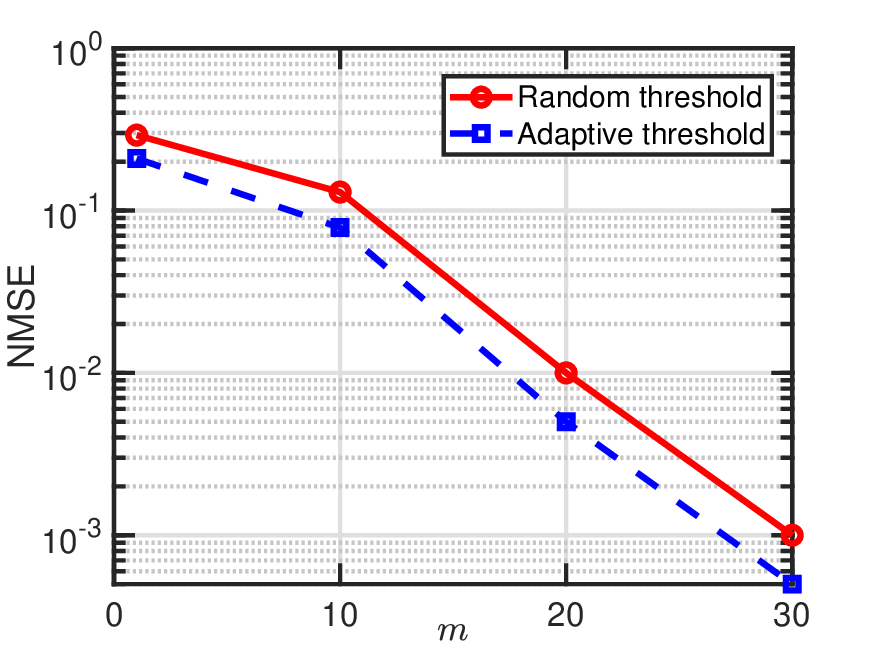}}
    \subfloat[]
		{\includegraphics[width=0.47\columnwidth]{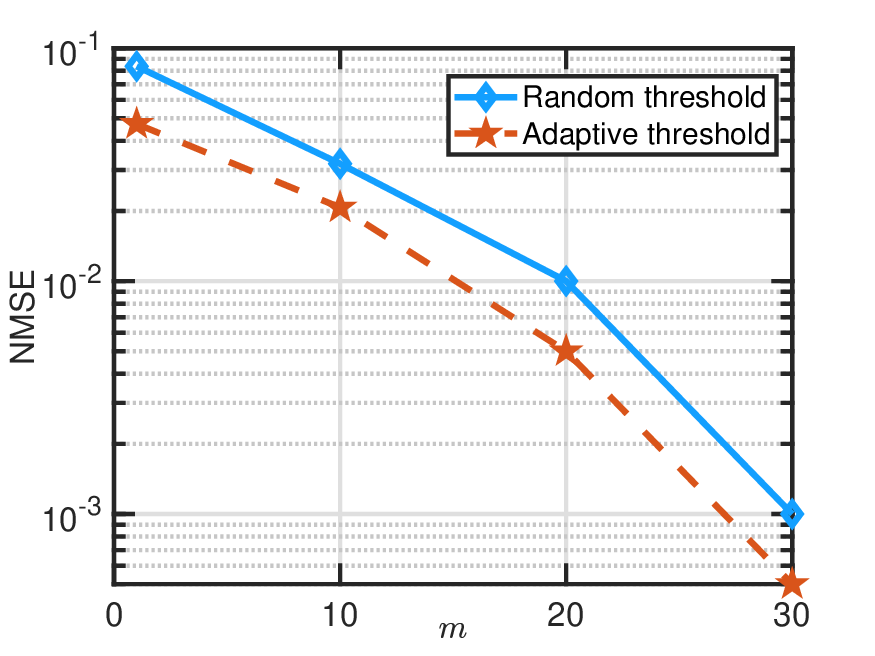}}
	\caption{Comparison between the recovery performance of Block SKM using random thresholds and adaptive thresholds in the sample abundance scenario for (a) one-bit low-rank matrix sensing, and (ii) one-bit CS.
  \vspace{-15pt}
 }
	\label{figure_2}
\end{figure}
%-------------------------------------------------------
\subsection{Sample Scarcity}
Similar to sample abundance, herein we investigate the performance of our proposed methods for one-bit low-rank matrix sensing and one-bit CS when we have a limited number of samples. Note that in all experiments, the high-resolution measurements were contaminated by the additive Gaussian noise with the standard deviation $0.1$ except the case related to low-rank matrix sensing by SVP-ORKA which was considered to be noiseless.
\\
\textbf{One-bit low-rank matrix sensing.} We generated a collection of sampling matrices $\{\mbA_j\}_{j=1}^{n}$, where each entry is independently sampled from a standard normal distribution. The desired matrix $\mbX\in\mathbb{R}^{30\times 30}$ was generated with $\operatorname{rank}(\mbX)=2$. Define the oversampling factor as $\lambda=\frac{n}{n_1r}=\frac{n}{60}$. In our experiments, we have set 
%the oversampling factor to 
$\log(\lambda)\in\{3,4,5,6\}$. The number of time-varying sampling threshold sequences was fixed at $m=1$. The generation of time-varying sampling thresholds followed the same procedure as in the previous cases. Fig.~\ref{figure_3}(a) compares the recovery performance of SVP-ORKA with hard singular value thresholding (HSVT) algorithm \cite{foucart2019recovering} in the noiseless scenario. As can be observed, SVP-ORKA outperforms HSVT over different values of the oversampling factor. In another experimental setting aimed at assessing the performance of Algorithm~\ref{algorithm_200} (ORKA with low-rank matrix factorization), we generated the desired matrix $\mbX\in\mathbb{R}^{30\times 30}$ with $\operatorname{rank}(\mbX)=1$. The remaining parameter settings were identical to the previous example. Note that in this case we define the oversampling factor as $\beta=\frac{n}{n_1^2r}=\frac{n}{900}$. In our simulations, we have set $\beta=\{5,10,15,20\}$. As can be seen in Fig.~\ref{figure_3}(b), the recovery performance of Algorithm~\ref{algorithm_200} enhances as the value of the oversampling factor $\beta$ grows large.
%-------------------------------------------------------
\begin{figure}[t]
	\centering
	\subfloat[]
		{\includegraphics[width=0.47\columnwidth]{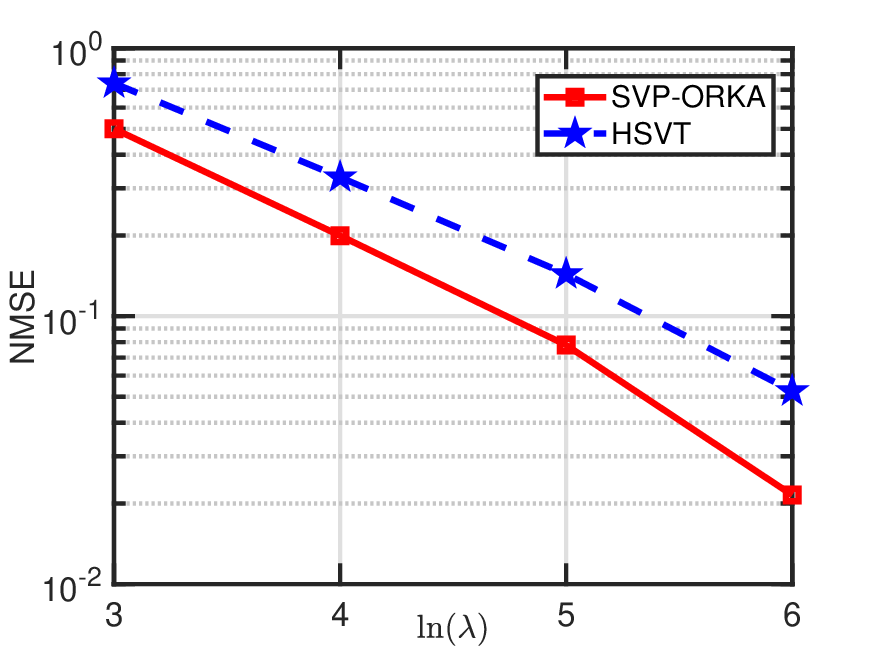}}
    \subfloat[]
		{\includegraphics[width=0.47\columnwidth]{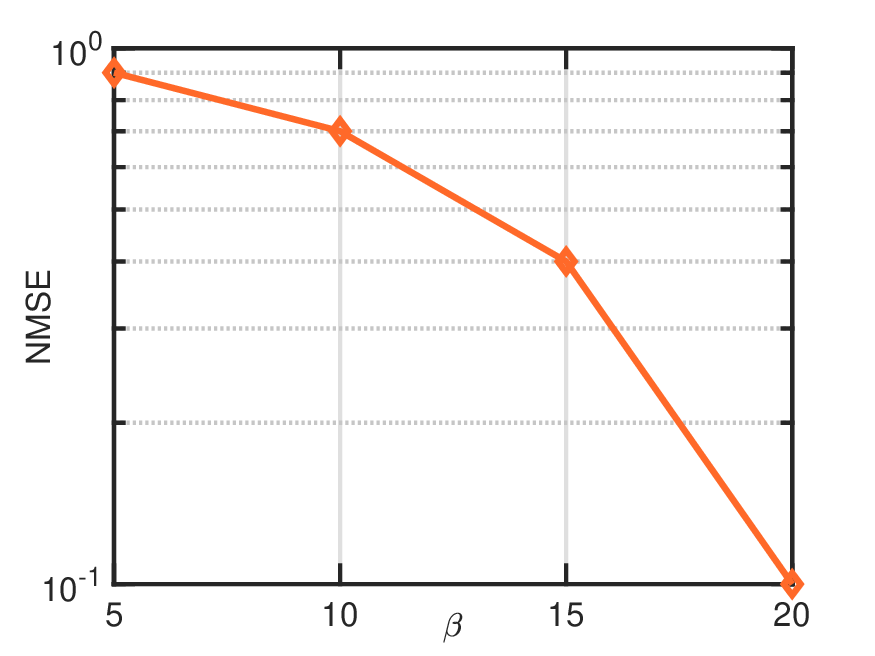}}
	\caption{(a) Comparison between the recovery performance of SVP-ORKA and HSVT algorithm over different values of oversampling factor $\lambda$. (b) Recovery performance of Algorithm~\ref{algorithm_200} (ORKA with low-rank matrix factorization) over different values of sampling factor $\beta$.
  \vspace{-15pt}
 }
	\label{figure_3}
\end{figure}
%-------------------------------------------------------
%-------------------------------------------------------
\begin{figure}[t]
	\centering
	\subfloat[]
		{\includegraphics[width=0.47\columnwidth]{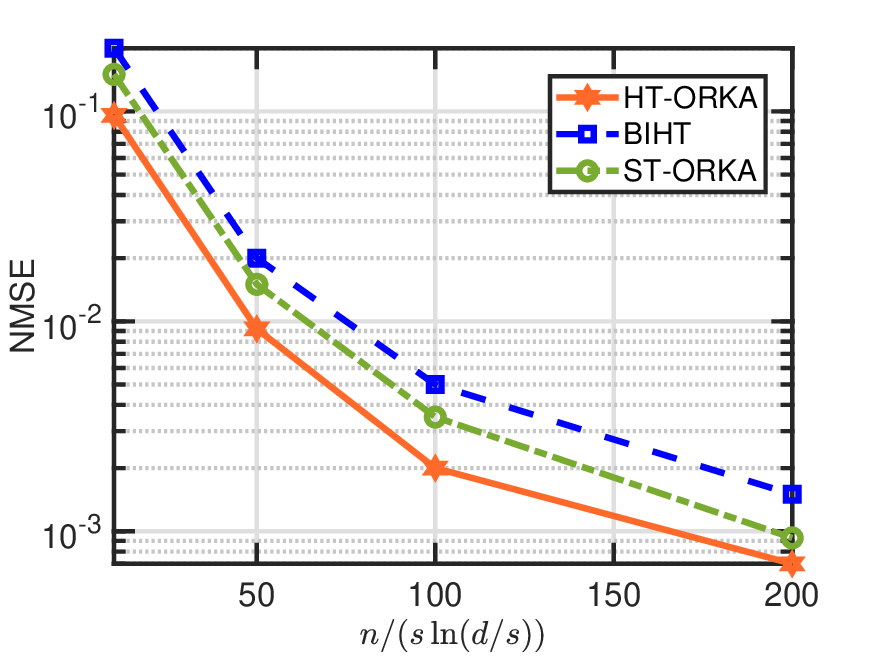}}
    \subfloat[]
		{\includegraphics[width=0.47\columnwidth]{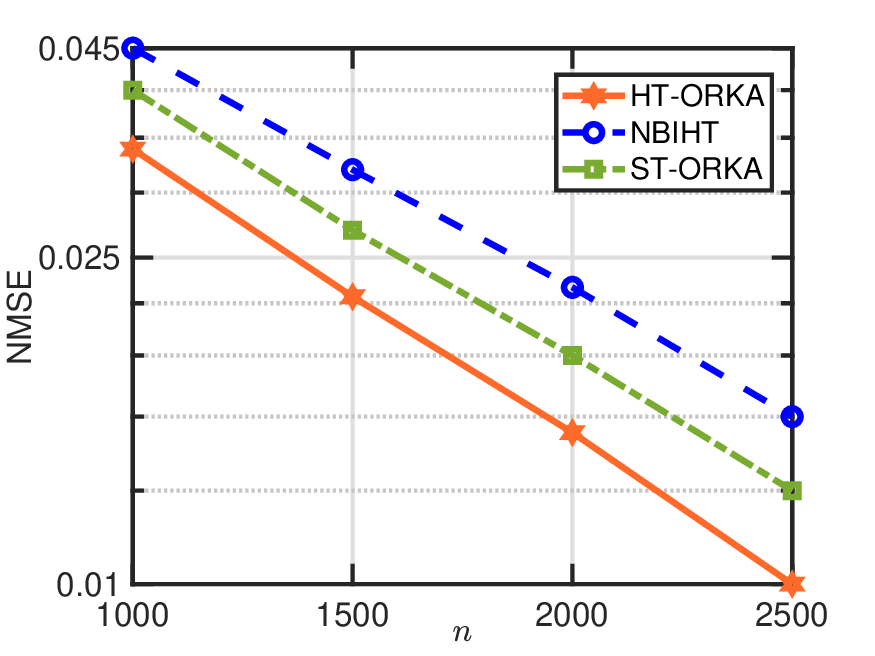}}
	\caption{Comparison between the recovery performance of HT-ORKA, ST-ORKA and (a) BIHT with random thresholds, and (b) NBIHT in ditherless scenario.
  \vspace{-15pt}
 }
	\label{figure_4}
\end{figure}
%-------------------------------------------------------
\\
\textbf{One-bit CS.} Each element of the sensing matrix $\mbA\in\mathbb{R}^{n\times 100}$ was independently drawn from a standard normal distribution. The desired signal $\mbx\in\mathbb{R}^{100}$ was assumed to have a sparsity level of $s=15$. Define the oversampling factor as $n/(s\log(d/s))$. In our experiments, we have set the oversampling factor to the values $\{10,50,100,200\}$. The number of time-varying sampling threshold sequences was fixed at $m=1$, and the generation of time-varying sampling thresholds followed the same procedure as in the previous cases. In Fig.~\ref{figure_4}(a), we compare the recovery performance of HT-ORKA, ST-ORKA, and BIHT algorithm with random time-varying sampling thresholds \cite{baraniuk2017exponential}. It is evident that HT-ORKA outperforms both ST-ORKA and BIHT in recovering the $s$-sparse signal in the one-bit CS problem. In the next example, we examine the effectiveness of our proposed algorithms, HT-ORKA and ST-ORKA, in recovering a sparse signal in a ditherless scenario. The sensing matrix $\mbA\in\mathbb{R}^{n\times 256}$ was generated in the same manner as in the previous example. The number of high-resolution samples was considered to be $n\in\{1000,1500,2000,2500\}$. The desired signal $\mbx\in\mathbb{R}^{256}$ was assumed to have a sparsity level of $s=25$. In Fig.~\ref{figure_4}(b), we compare the recovery performance of HT-ORKA, ST-ORKA and the NBIHT algorithm \cite{friedlander2021nbiht}. Once again, similar to the previous example, HT-ORKA exhibits superior recovery performance compared to ST-ORKA and the NBIHT method.

\section{Discussion}
In this paper, we have established the theoretical guarantees for uniform reconstruction in dithered one-bit sensing. Our approach involves transforming the one-bit signal reconstruction problem into a linear feasibility problem. We then introduced the FVP theorem to analyze the possibility of creating a finite volume formed by the hyperplanes around the original signal.
The FVP theorem allows us to determine the required number of samples to find a solution inside a cell centered at the signal. What sets this theorem apart from others, such as the random hyperplane tessellations theorem, is that it approaches one-bit sensing from the perspective of a linear feasibility problem. It investigates the number of samples needed to capture the original signal within the space of hyperplanes.
An intriguing aspect of the FVP theorem is its capability to provide guarantees not only for restricted sampling matrices considered in previous efforts but deterministic matrices like DCT as well. The only restriction in the FVP theorem is that the sampling matrix must be isotropic. 
Additionally, utilizing the FVP, we established theoretical guarantees for dithered one-bit sensing even when the scalar parameter cannot dominate the DR of measurements.

This work represents a first effort in the literature, as it explores the performance of \emph{randomized algorithms} in one-bit sensing. We introduced two novel variations of the Kaczmarz algorithm, PrSKM and Block SKM, which served as the foundation for our proposed algorithm, ORKA.
In our investigation of ORKA, we analyzed its upper recovery bound and demonstrated that it decays concerning the number of measurements. These findings contribute valuable insights into the potential of randomized algorithms in one-bit sensing applications. 
To the best of our knowledge, we are the first to derive the convergence rate of RKA for a noisy linear \emph{inequality} system. This novel finding highlights the robustness of the algorithm, even in the presence of noise.

We have introduced an improved update process for designing the thresholding process. Unlike the sigma-delta thresholding design discussed in references \cite{baraniuk2017exponential,foucart2019recovering}, our approach does not require updating the one-bit data, eliminating the need for feedback in the sampling scheme. Through numerical demonstrations, we showed that this adaptive thresholding process enhances signal reconstruction performance from one-bit data.
Through numerical experiments, we experimentally demonstrated that the proposed algorithm exhibits superior reconstruction performance in one-bit CS compared to the NBIHT for the ditherless scenario and BIHT adapted with dithering for the dithered scenario. Furthermore, our results show that the proposed SVP-ORKA outperforms the HSVT algorithm in terms of recovery accuracy. 

In addition to the common situation of sample abundance in dithered one-bit quantization, we also addressed scenarios with sample restrictions. For these scenarios, we further developed the proposed randomized algorithms into storage-friendly approaches, such as random sketching and low-rank matrix factorization. This extension allows for efficient handling of limited samples, broadening the applicability of the algorithms to a wider range of practical settings. The convergence rate of ORKA with low-rank matrix factorization remains an open problem. This is because it employs the idea of cyclic algorithms or AltMin, whose convergence is still an ongoing topic of research in the literature. 

%\newpage

%\newpage
%\section*{SUPPLEMENTARY MATERIAL}
\appendices
%\subsection{yeganeh}

\section{Proof of Theorem~\ref{stprskm}}
\label{ak}
Since $\mbN\in\mathbb{R}^{m\times s}$ is a Gaussian matrix with sketch size $s=\mathcal{O}\left(n/\epsilon^{2}\right)$ and $\epsilon=\frac{1}{2}$, the $\epsilon$-subspace embedding property holds with high probability for all $\mbx\in\mathbb{R}^{n}$ \cite[Theorem~8.4]{martinsson2020randomized},
\begin{equation}
\label{stprskm2}
\left(1-\epsilon\right)\left\|\mbC\mbx\right\|_{2}^{2}\leq\|\mbN^{\top}\mbC\mbx\|_{2}^{2}\leq (1+\epsilon)\|\mbC\mbx\|_{2}^{2}.
\end{equation}
We can use this result to show that the preconditioned system $\mbC\mbR_{s}^{-1}$ preserves the length of vectors in the range of $\mbC$. To see how, substitute $\mbx$ in \eqref{stprskm2} with $\mbR_{s}^{-1}\mbz$. By the subspace embedding property,
\begin{equation}
\label{stprskm3}
\frac{1}{1+\epsilon}\|\mbN^{\top}\mbC\mbR_{s}^{-1}\mbz\|_{2}^{2}\leq\|\mbC\mbR_{s}^{-1}\mbz\|_{2}^{2}\leq\frac{1}{1-\epsilon}\|\mbN^{\top}\mbC\mbR_{s}^{-1}\mbz\|_{2}^{2}.
\end{equation}
Now using $\|\mbN^{\top}\mbC\mbR_{s}^{-1}\mbz\|_{2}^{2}=\|\mbQ_{s}\mbz\|_{2}^{2}=\|\mbz\|_{2}^{2}$, the relation \eqref{stprskm3} becomes
\begin{equation}
\label{stprskm4}
\frac{1}{1+\epsilon}\|\mbz\|_{2}^{2}\leq\|\mbC\mbR_{s}^{-1}\mbz\|_{2}^{2}\leq\frac{1}{1-\epsilon}\|\mbz\|_{2}^{2},
\end{equation}
which bounds the condition number $\varrho\left(\mbC\mbR_{s}^{-1}\right)\leq\sqrt{\frac{1+\epsilon}{1-\epsilon}}$. Setting $\epsilon=\frac{1}{2}$, we have $\varrho\left(\mbC\mbR_{s}^{-1}\right)\leq\sqrt{3}$ and based on \eqref{gorang}, $\kappa\left(\mbC\mbR_{s}^{-1}\right)\leq \sqrt{3n}$ . Therefore, according to \eqref{eq:15}
%, \eqref{gorang} 
and the result that we have obtained here, one can conclude the convergence bound \eqref{stprskm1}.

\section{Convergence analysis of Block SKM with the sparse Gaussian sketch \eqref{St_3}}
\label{A}
The Block SKM algorithm can be considered to be a special case of the more general \emph{sketch-and-project} method with a sparse block sketch matrix as defined in \cite{derezinski2022sharp}. 
\begin{equation}
\label{neg_2}
\mbx_{i+1}= \underset{\mbx}{\textrm{argmin}}~\left\|\mbx-\mbx_{i}\right\|^{2}_{2}\quad \textrm{subject to}\quad \mbS^{\top}\mbB\mbx\succeq\mbS^{\top}\mbb,
\end{equation}
where $\mbS\in\mathbb{R}^{m n\times n}$ is the sketch matrix choosing a block uniformly at random from the main matrix $\mbB$ similar to step $1$ of Algorithm~\ref{algorithm_2}. The second step of sketch-and-project method follows the Motzkin sampling where the index $j_{\star}$ is chosen in $i$-th iteration as 
\begin{equation}
j_{\star}=\underset{j}{\textrm{argmax}}\left\{ \left(\left(\mbS^{\top}\mbb\right)_{j}-\left(\mbS^{\top}\mbB\right)_{j}\mbx_{i}\right)^{+}\right\},\end{equation}
with $(\cdot)_{j}$ denoting the $j$-th row of the matrix/vector argument (this step is similar to steps $2-4$ of Algorithm~\ref{algorithm_2} with $k^{\prime}=1$).
%Note that here we consider $k^{\prime}=1$ in Algorithm~\ref{algorithm_2}. 
In the Block SKM algorithm, the sketch matrix is given by
\begin{equation}
\label{St_2}
\mbS=\left[\begin{array}{c|c|c}
\mathbf{0}_{n\times p} &\mbI_{n} &\mathbf{0}_{n\times (m n-n-p)}
\end{array}\right]^{\top}, ~\mbS\in\left\{0,1\right\}^{m n\times n},
\end{equation}
where 
%$k^{\prime}$ is the block size and 
$p=n\alpha,~\alpha\in\left\{1,\cdots,m-1\right\}$. Note that the literature does not offer any theoretical guarantees for the convergence of the Block SKM with the identity matrix~\cite{rebrova2021block}. To derive our theoretical guarantees for the Block SKM algorithm,
%used to solve the one-bit QCS, 
we change the sketch matrix to the sparse \emph{Gaussian} sketch matrix as follows:
\begin{equation}
\label{St_3}
\mbS=\left[\begin{array}{c|c|c}
\mathbf{0}_{n\times p} &\mbG &\mathbf{0}_{n\times (m n-n-p)}
\end{array}\right]^{\top}, ~ \mbS\in\mathbb{R}^{m n\times n},
\end{equation}
where $\mbG$ is a $n\times n$ Gaussian matrix, whose entries are i.i.d. following the distribution $\mathcal{N}\left(0,1\right)$. In this framework, we are able to provide some theoretical guarantees by taking advantage of the favorable properties of Gaussian random variables.
Assume that $\mathcal{P}_{\mbx}$ denotes a nonempty solution set of $\mbB\mbx\succeq\mbb$ and $\widehat{\mbx}\in\mathcal{P}_{\mbx}$.
Considering the sparse Gaussian sketch \eqref{St_3} which satisfies the set of inequalities $\mbS^{\top}\mbB\mbx\succeq\mbS^{\top}\mbb$, we have,
\par\noindent\small
\begin{equation}
\label{St_200}
\begin{aligned}
\left\|\mbx_{i+1}-\widehat{\mbx}\right\|_2^2&=\left\|\left(\mbx_i-\widehat{\mbx}\right)+\frac{\left(\left(\mbS^{\top} \mbb\right)_{j_{\star}}-\left(\mbS^{\top}\mbB\right)_{j_{\star}}\mbx_{i}\right)^{+} \left(\mbS^{\top}\mbB\right)^{\mathrm{H}}_{j_{\star}}}{\left\|\left(\mbS^{\top}\mbB\right)_{j_{\star}}\right\|^2_2}\right\|^2_2\\
&=\left\|\mbx_i-\widehat{\mbx}\right\|^2_2+\frac{\left(\left(\left(\mbS^{\top} \mbb\right)_{j_{\star}}-\left(\mbS^{\top}\mbB\right)_{j_{\star}}\mbx_{i}\right)^{+}\right)^2}{\left\|\left(\mbS^{\top}\mbB\right)_{j_{\star}}\right\|^2_2}+\frac{2\left(\left(\mbS^{\top} \mbb\right)_{j_{\star}}-\left(\mbS^{\top}\mbB\right)_{j_{\star}}\mbx_{i}\right)^{+}\left(\mbS^{\top}\mbB\right)_{j_{\star}}}{\left\|\left(\mbS^{\top}\mbB\right)_{j_{\star}}\right\|^2_2}\left(\mbx_i-\widehat{\mbx}\right).
\end{aligned}
\end{equation} \normalsize
We can observe that $\left(\mbS^{\top}\mbB\right)_{j_{\star}}\left(\mbx_i-\widehat{\mbx}\right)\leq\left(\mbS^{\top}\mbB\right)_{j_{\star}}\mbx_i-\left(\mbS^{\top} \mbb\right)_{j_{\star}}$, therefore, the left-hand side of \eqref{St_200} is less than or equal to,
\begin{equation}
\label{obi_one}
\begin{aligned}
&\left\|\mbx_i-\widehat{\mbx}\right\|^2_2+\frac{\left(\left(\left(\mbS^{\top} \mbb\right)_{j_{\star}}-\left(\mbS^{\top}\mbB\right)_{j_{\star}}\mbx_{i}\right)^{+}\right)^2}{\left\|\left(\mbS^{\top}\mbB\right)_{j_{\star}}\right\|^2_2}+\frac{2\left(\left(\mbS^{\top} \mbb\right)_{j_{\star}}-\left(\mbS^{\top}\mbB\right)_{j_{\star}}\mbx_{i}\right)^{+}\left(\left(\mbS^{\top}\mbB\right)_{j_{\star}}\mbx_{i}-\left(\mbS^{\top} \mbb\right)_{j_{\star}}\right)}{\left\|\left(\mbS^{\top}\mbB\right)_{j_{\star}}\right\|^2_2}=\\&\left\|\mbx_i-\widehat{\mbx}\right\|^2_2-\frac{\left(\left(\left(\mbS^{\top} \mbb\right)_{j_{\star}}-\left(\mbS^{\top}\mbB\right)_{j_{\star}}\mbx_{i}\right)^{+}\right)^2}{\left\|\left(\mbS^{\top}\mbB\right)_{j_{\star}}\right\|^2_2}.
\end{aligned}
\end{equation}
Based on the definition of $j_{i}^{\star}$, one can rewrite \eqref{obi_one} as,
\begin{equation}
\label{St_100}
\begin{aligned}
\left\|\mbx_{i+1}-\widehat{\mbx}\right\|_2^2\leq\left\|\mbx_i-\widehat{\mbx}\right\|_2^2-\frac{\left\|\left(\mbS^{\top} \mbb-\mbS^{\top}\mbB\mbx_i\right)^{+} \right\|_{\infty}^2}{\left\|\left(\mbS^{\top}\mbB\right)_{j_{\star}}\right\|_2^2}.
\end{aligned}
\end{equation}
By taking the expectation over the error, we have
%\par\noindent\small
\begin{equation}
\label{St_110}
\begin{aligned}
\mathbb{E}_{\mbS}\left\{\left\|\mbx_{i+1}-\widehat{\mbx}\right\|_2^2\right\}\leq\left\|\mbx_i-\widehat{\mbx}\right\|_2^2-\mathbb{E}_{\mbS} \left\{\frac{\left\|\left(\mbS^{\top} \mbb-\mbS^{\top}\mbB\mbx_i\right)^{+} \right\|_{\infty}^2}{\left\|\left(\mbS^{\top}\mbB\right)_{j_{\star}}\right\|_2^2}\right\}.
\end{aligned}
\end{equation} %\normalsize
In addition, we have that
\begin{equation}
\label{St_1020}
\begin{aligned}
\mathbb{E}_{\mbS}\left\{\left\|\left(\mbS^{\top} \mbB\right)_{j_{\star}}\right\|_2^2\right\}=\sum_{k=1}^{d} \mathbb{E}_{\mbS}\left\{\left(\sum_{l=1}^{m n} \mbS_{j_{i}^{\star}l}\mbB_{l k}\right)^2\right\},
\end{aligned}
\end{equation}
or equivalently, in terms of $\mbG$ in \eqref{St_3},
\begin{equation}
\label{St_120}
\begin{aligned}
\sum_{k=1}^{d} \mathbb{E}_{\mbG}\left\{\left(\sum_{l=1}^{n} \mbG^{\top}_{j_{i}^{\star}l} \mbB_{l k}\right)^2\right\}=
\sum_{k=1}^{d} \sum_{l=1}^{n} \mathbb{E}_{\mbG}\left\{\left(\mbG^{\top}_{j_{i}^{\star} l}\right)^{2}\right\} \mbB^{2}_{l k},
\end{aligned}
\end{equation}
with $\mathbb{E}_{\mbG}\left\{\left(\mbG^{\top}_{j_{i}^{\star} l}\right)^{2}\right\}=1$, which helps to simplify \eqref{St_120} as
\begin{equation}
\label{St_1200}
\begin{aligned}
\sum_{k=1}^{d}\sum_{l=1}^{n} \mbB^{2}_{l k }=\|\widehat{\mbB}\|_{\mathrm{F}}^2,
\end{aligned}
\end{equation}
where $\widehat{\mbB}$ is the $n\times d$ submatrix of $\mbB$ (one of the candidates of $\mbB_{j}$ in Algorithm~\ref{algorithm_2}). Due to the fact that the second term in the right-hand side of \eqref{St_110} is an expectation over the convex function $f(x,y)=x^{2}/y$, we can apply Jensen's inequality as follows: %\par\noindent\small
\begin{equation}
\label{St_1300}
\begin{aligned}
\mathbb{E}_{\mbS}\left\{ \frac{\left\|\left(\mbS^{\top} \mbb-\mbS^{\top}\mbB\mbx_i\right)^{+} \right\|_{\infty}^2}{\left\|\left(\mbS^{\top} \mbB\right)_{j_{\star}}\right\|_2^2}\right\}\geq\frac{\left(\mathbb{E}_{\mbS}\left\{\left\|\left(\mbS^{\top} \mbb-\mbS^{\top}\mbB\mbx_i\right)^{+} \right\|_{\infty}\right\}\right)^2}{\mathbb{E}_{\mbS}\left\{\left\|\left(\mbS^{\top} \mbB\right)_{j_{\star}}\right\|_2^2\right\}}.
\end{aligned}
\end{equation} %\normalsize
Consider the following lemma regarding the estimate 
%We can additionally take advantage of the estimate for 
of the maximum of independent normal random variables:
\begin{lemma}\cite[Section~2.5.2]{vershynin2018high}
\label{ghoorak}
Let $X_1,\cdots,X_n$ be independent $\mathcal{N}(0,1)$ random variables. Then we have
\begin{equation}
\label{B1_4}
\mathbb{E}\left\{\max_{i\leq n}X_i\right\}\geq c\sqrt{\log n},
\end{equation}
where $c$ is an absolute constant.
\end{lemma}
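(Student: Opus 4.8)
The plan is to combine a sharp lower bound on the Gaussian tail with the standard device for turning a ``with high probability'' lower bound into a bound in expectation. \textbf{Step 1: a lower tail estimate.} First I would record that for a standard Gaussian $g$ and every $t\ge 1$,
\begin{equation*}
\operatorname{Pr}\left(g>t\right)=\frac{1}{\sqrt{2\pi}}\int_{t}^{\infty}e^{-s^{2}/2}\,\mathrm{d}s\ge\frac{1}{\sqrt{2\pi}}\int_{t}^{t+1/t}e^{-s^{2}/2}\,\mathrm{d}s\ge\frac{c_{1}}{t}\,e^{-t^{2}/2},
\end{equation*}
where the last step uses $(t+1/t)^{2}\le t^{2}+3$ and $c_{1}>0$ is an absolute constant. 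This is the only Gaussian-specific input.

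\textbf{Step 2: choosing the threshold.} I would then take $t_{n}=\tfrac12\sqrt{\log n}$, which is $\ge 1$ once $n$ exceeds an absolute constant, and set $p_{n}=\operatorname{Pr}(g>t_{n})$. Step~1 gives $n p_{n}\gtrsim n^{7/8}/\sqrt{\log n}\to\infty$, so for all $n$ past some absolute $n_{0}$ we have $n p_{n}\ge 1$. Using independence of the $X_{i}$ and $1-u\le e^{-u}$,
\begin{equation*}
\operatorname{Pr}\left(\max_{i\le n}X_{i}\le t_{n}\right)=(1-p_{n})^{n}\le e^{-np_{n}}\le e^{-1},
\end{equation*}
hence $\operatorname{Pr}\left(\max_{i\le n}X_{i}>t_{n}\right)\ge 1-e^{-1}>\tfrac12$ for $n\ge n_{0}$.

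\textbf{Step 3: from tail bound to expectation.} Splitting $\max_{i}X_{i}=(\max_{i}X_{i})^{+}-(\max_{i}X_{i})^{-}$, the positive part obeys $\mathbb{E}\{(\max_{i}X_{i})^{+}\}\ge t_{n}\operatorname{Pr}(\max_{i}X_{i}>t_{n})\ge t_{n}/2$. For the negative part I would observe that if all $X_{i}<0$ then $-\max_{i}X_{i}=\min_{i}|X_{i}|\le|X_{1}|$, and otherwise it equals $0$; so $(\max_{i}X_{i})^{-}\le|X_{1}|$ pointwise and $\mathbb{E}\{(\max_{i}X_{i})^{-}\}\le\mathbb{E}|X_{1}|=\sqrt{2/\pi}$. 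Therefore $\mathbb{E}\{\max_{i\le n}X_{i}\}\ge\tfrac14\sqrt{\log n}-\sqrt{2/\pi}$ for $n\ge n_{0}$. Finally, since $n\mapsto\mathbb{E}\{\max_{i\le n}X_{i}\}$ is nondecreasing and strictly positive for $n\ge 2$ (it exceeds $\mathbb{E}\{X_{1}\}=0$) while $\sqrt{\log n}$ is bounded on the finite range $2\le n<n_{0}$ (and the inequality is trivial at $n=1$), the claimed bound $\mathbb{E}\{\max_{i\le n}X_{i}\}\ge c\sqrt{\log n}$ follows for a sufficiently small absolute constant $c$.

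The routine parts are Steps~1 and~2. \textbf{The main obstacle is Step~3}: one must control the (random, possibly negative) value of $\max_{i}X_{i}$ on the low-probability complementary event, and then absorb the additive constant together with the finitely many small-$n$ exceptions into a single absolute constant $c$; this is where care is needed to keep the constant genuinely absolute rather than $n$-dependent.
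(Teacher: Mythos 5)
Your proof is correct. The paper does not actually prove this lemma---it is invoked as a known fact with a citation to Vershynin's book (Section 2.5.2), where it appears as an exercise---and your argument is essentially the standard one behind that citation: a pointwise Gaussian lower tail bound $\operatorname{Pr}(g>t)\gtrsim t^{-1}e^{-t^2/2}$, independence giving $\operatorname{Pr}(\max_i X_i\le t_n)\le e^{-np_n}\le e^{-1}$ at $t_n=\tfrac12\sqrt{\log n}$, and the truncation step bounding the negative part by $\mathbb{E}|X_1|$ before absorbing the additive constant and the finitely many small-$n$ cases (via monotonicity of $n\mapsto\mathbb{E}\{\max_{i\le n}X_i\}$ and its strict positivity for $n\ge 2$) into a single absolute constant $c$. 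All steps, including the constant bookkeeping you flag as the delicate part, check out.
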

By taking advantage of Lemma~\ref{ghoorak}, we have
%\par\noindent\small
\begin{equation}
\label{St_1400}
\begin{aligned}
\mathbb{E}_{\mbS}\left\{\left\|\left(\mbS^{\top} \mbb-\mbS^{\top}\mbB\mbx_i\right)^{+}\right\|_{\infty}\right\}&=\mathbb{E}_{\mbS}\left\{\max_{t\in[n]}\left(\left\langle \mbs_t, \mbb-\mbB\mbx_{i}\right\rangle\right)^{+}\right\}\\&\geq
\mathbb{E}_{\mbS}\left\{\max_{t\in[n]}\left\langle \mbs_t, \left(\mbb-\mbB\mbx_{i}\right)^{+}\right\rangle\right\}\\&=
\mathbb{E}_{\mbG}\left\{\max_{t\in[n]}\left\langle \mbg_t, \left(\widehat{\mbb}-\widehat{\mbB}\mbx_{i}\right)^{+}\right\rangle\right\}\\&\geq c\left\|\left(\widehat{\mbb}-\widehat{\mbB}\mbx_{i}\right)^{+}\right\|_2 \sqrt{\log n},
\end{aligned}
\end{equation}%\normalsize
where $\widehat{\mbb}\in\mathbb{R}^{n}$ is a block of $\mbb$, $\mbs_{t}$ and $\mbg_{t}$ are the $t$-th columns of $\mbS$ and $\mbG$, respectively, and $c$ is a positive value.
By plugging the inequality \eqref{St_1400} into \eqref{St_110} and using the Hoffman bound \cite[Theorem~4.2]{leventhal2010randomized}, we have
\begin{equation}
\label{St_1600}
\begin{aligned}
\mathbb{E}\left\{\left\|\mbx_{i+1}-\widehat{\mbx}\right\|_2^2\right\}&\leq\left\|\mbx_i-\widehat{\mbx}\right\|_2^2-\frac{c\|(\widehat{\mbb}-\widehat{\mbB}\mbx_{i})^{+}\|_2^2 \log n}{\|\widehat{\mbB}\|_{\mathrm{F}}^2}\\
\quad &\leq\left\|\mbx_i-\widehat{\mbx}\right\|_2^2-\frac{c \sigma_{\min }^2(\widehat{\mbB}) \log n}{\|\widehat{\mbB}\|_{\mathrm{F}}^2}\left\|\mbx_i-\widehat{\mbx}\right\|_2^2\\
\quad &\leq \left(1-\frac{c \sigma_{\min }^2(\widehat{\mbB}) \log n}{\|\widehat{\mbB}\|_{\mathrm{F}}^2}\right)\left\|\mbx_i-\widehat{\mbx}\right\|_2^2,
\end{aligned}
\end{equation}
which can be recast as the following \emph{convergence rate}, after $(i+1)$ updates:
\begin{equation}
\label{St_170}
\begin{aligned}
\mathbb{E}\left\{\left\|\mbx_{i+1}-\widehat{\mbx}\right\|_2^2\right\}\leq \left(1-\frac{c \sigma_{\min }^2(\widehat{\mbB}) \log n}{\|\widehat{\mbB}\|_{\mathrm{F}}^2}\right)^{i+1}\left\|\mbx_0-\widehat{\mbx}\right\|_2^2.
\end{aligned}
\end{equation}

\section{Proof of Proposition~\ref{prop_noisyRKA}}
\label{A5}
We begin the proof by presenting the following lemma:
\begin{lemma}
\label{A5_1}
Let $\mathcal{H}_{j}=\left\{\mbx:\langle\mbc_{j},\mbx\rangle\geq b_{j}\right\}$ be the solution spaces of the unperturbed linear inequalities. Let $\bar{\mathcal{H}}_{j}=\left\{\mbx:\langle\mbc_{j},\mbx\rangle\geq b_{j}-n_{j}\right\}$ be the solution spaces of the noisy linear inequalities run by the noisy RKA. Then $\bar{\mathcal{H}}_{j}=\left\{\mbx+\alpha_{j}\mbc_{j}:\mbx\in\mathcal{H}_{j}\right\}$, where $\alpha_{j}=\frac{-n_{j}}{\left\|\mbc_{j}\right\|_{2}^{2}}$.
\end{lemma}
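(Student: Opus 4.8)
The statement is purely geometric: $\bar{\mathcal{H}}_{j}$ is the half-space obtained from $\mathcal{H}_{j}$ by sliding it along its normal direction $\mbc_{j}$, and the only thing to pin down is the magnitude of that slide. The plan is to establish the set identity $\bar{\mathcal{H}}_{j}=\left\{\mbx+\alpha_{j}\mbc_{j}:\mbx\in\mathcal{H}_{j}\right\}$ by a double inclusion, using nothing more than linearity of the inner product and the explicit value $\alpha_{j}=-n_{j}/\left\|\mbc_{j}\right\|_{2}^{2}$.

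First I would handle the inclusion ``$\supseteq$ after translation'': take an arbitrary $\mbx\in\mathcal{H}_{j}$, so $\langle\mbc_{j},\mbx\rangle\geq b_{j}$, and compute
\begin{equation}
\label{A5_trans1}
\langle\mbc_{j},\mbx+\alpha_{j}\mbc_{j}\rangle=\langle\mbc_{j},\mbx\rangle+\alpha_{j}\left\|\mbc_{j}\right\|_{2}^{2}=\langle\mbc_{j},\mbx\rangle-n_{j}\geq b_{j}-n_{j},
\end{equation}
which shows $\mbx+\alpha_{j}\mbc_{j}\in\bar{\mathcal{H}}_{j}$; hence $\left\{\mbx+\alpha_{j}\mbc_{j}:\mbx\in\mathcal{H}_{j}\right\}\subseteq\bar{\mathcal{H}}_{j}$. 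For the reverse inclusion, take $\mby\in\bar{\mathcal{H}}_{j}$, define $\mbx=\mby-\alpha_{j}\mbc_{j}$, and verify
\begin{equation}
\label{A5_trans2}
\langle\mbc_{j},\mbx\rangle=\langle\mbc_{j},\mby\rangle-\alpha_{j}\left\|\mbc_{j}\right\|_{2}^{2}=\langle\mbc_{j},\mby\rangle+n_{j}\geq\left(b_{j}-n_{j}\right)+n_{j}=b_{j},
\end{equation}
so $\mbx\in\mathcal{H}_{j}$ and $\mby=\mbx+\alpha_{j}\mbc_{j}$, giving $\bar{\mathcal{H}}_{j}\subseteq\left\{\mbx+\alpha_{j}\mbc_{j}:\mbx\in\mathcal{H}_{j}\right\}$. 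Combining the two inclusions yields the claimed identity.

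There is no genuine obstacle here; the only point worth stating explicitly (to motivate the choice of $\alpha_{j}$, rather than pulling it out of a hat) is that one wants a translation vector $\mbt$ with $\langle\mbc_{j},\mbt\rangle=-n_{j}$, and the shortest such vector — the one aligned with the normal $\mbc_{j}$ — is exactly $\mbt=-\tfrac{n_{j}}{\left\|\mbc_{j}\right\|_{2}^{2}}\mbc_{j}=\alpha_{j}\mbc_{j}$. I would include this one-line remark before the double inclusion so the value of $\alpha_{j}$ appears naturally. This lemma then feeds directly into the proof of Proposition~\ref{prop_noisyRKA}: since each noisy solution space is a known translate of the corresponding unperturbed one, a step of the noisy RKA can be compared with a step of the exact RKA up to the shift $\alpha_{j}\mbc_{j}$, whose norm is $\left\|\alpha_{j}\mbc_{j}\right\|_{2}=|n_{j}|/\left\|\mbc_{j}\right\|_{2}=\gamma_{j}$, which is where the additive term $\kappa\max_{j}\gamma_{j}$ in \eqref{St_15000} originates.
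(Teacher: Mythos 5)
Your proposal is correct and follows essentially the same route as the paper: a double inclusion establishing that translating $\mathcal{H}_{j}$ by $\alpha_{j}\mbc_{j}$ lands in $\bar{\mathcal{H}}_{j}$, and conversely that any $\mby\in\bar{\mathcal{H}}_{j}$ is such a translate via $\mbx=\mby-\alpha_{j}\mbc_{j}$, using only linearity of the inner product and the value of $\alpha_{j}$. The added remark motivating $\alpha_{j}$ as the normal-direction shift satisfying $\langle\mbc_{j},\mbt\rangle=-n_{j}$, and the observation that $\left\|\alpha_{j}\mbc_{j}\right\|_{2}=\gamma_{j}$ feeds the additive term in Proposition~\ref{prop_noisyRKA}, are accurate but do not change the argument.
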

\begin{IEEEproof}
Assume $\mbx\in\mathcal{H}_{j}$, then we can write
\begin{equation}
\label{A5_2}
\langle\mbc_{j},\mbx+\alpha_{j}\mbc_{j}\rangle\geq b_{j}+\alpha_{j}\left\|\mbc_{j}\right\|_{2}^{2}.
\end{equation}
By plugging $\alpha_{j}=\frac{-n_{j}}{\left\|\mbc_{j}\right\|_{2}^{2}}$ in \eqref{A5_2}, we can obtain
\begin{equation}
\label{A5_3}
\langle\mbc_{j},\mbx+\alpha_{j}\mbc_{j}\rangle\geq b_{j}-n_j,
\end{equation}
which implies $\mbx+\alpha_{j}\mbc_{j}\in\bar{\mathcal{H}}_{j}$. Next let $\mbu\in\bar{\mathcal{H}}_{j}$. Set $\mbx=\mbu-\alpha_j\mbc_j$. Then we can write
\begin{equation}
\label{A5new}
\langle\mbc_j,\mbx\rangle=\langle\mbc_j,\mbu-\alpha_j\mbc_j\rangle\geq b_j-n_j-\alpha_j\|\mbc_j\|_2^2=b_j,
\end{equation}
which implies $\mbx\in\mathcal{H}_j$.
\end{IEEEproof}
Assume $\bar{\mbx}_{i}$ denotes the $i$-th iterate of the noisy RKA run with $\mbC\mbx+\mbn\succeq\mbb$. Denote $\widehat{\mbx}\in\cap_{j=1}^{m}\mathcal{H}_j$. Based on Lemma~\ref{A5_1}, we can write
\begin{equation}
\label{A5_4}
\bar{\mbx}_{i}-\widehat{\mbx}=\mbx_{i}-\widehat{\mbx}+\alpha_{j}\mbc_{j},
\end{equation}
where $\mbx_{i}\in\mathcal{H}_j$. Taking the norm-$2$ of \eqref{A5_4} leads to
\begin{equation}
\label{A5_5}
\begin{aligned}
\left\|\bar{\mbx}_{i}-\widehat{\mbx}\right\|_{2}&=\left\|\mbx_{i}-\widehat{\mbx}+\alpha_{j}\mbc_{j}\right\|_{2}\\&\leq\left\|\mbx_{i}-\widehat{\mbx}\right\|_{2}+\left\|\alpha_{j}\mbc_{j}\right\|_{2}\\&=\left\|\mbx_{i}-\widehat{\mbx}\right\|_{2}+\frac{|n_{j}|}{\left\|\mbc_{j}\right\|_{2}}.
\end{aligned}
\end{equation}
Denote the scaled condition number of the matrix $\mbC$ by $\kappa(\mbC)=\sqrt{R}$. Then by considering the convergence rate of the RKA in \eqref{eq:15} and drawing inspiration from \cite[Lemma~2.2]{needell2010randomized} in the context of linear system of inequalities, we have
\begin{equation}
\label{A5_6}
\begin{aligned}
\mathbb{E}\left\{\left\|\bar{\mbx}_{i}-\widehat{\mbx}\right\|_2\right\}\leq\left(1-\frac{1}{R}\right)^{\frac{1}{2}}\left\|\bar{\mbx}_{i-1}-\widehat{\mbx}\right\|_{2}+\frac{|n_{j}|}{\left\|\mbc_{j}\right\|_{2}}\leq\left(1-\frac{1}{R}\right)^{\frac{1}{2}}\left\|\bar{\mbx}_{i-1}-\widehat{\mbx}\right\|_{2}+\max_{j}\gamma_{j},
\end{aligned}
\end{equation}
where $\gamma_{j}$ is defined in Proposition~\ref{prop_noisyRKA}, and the expectation is conditioned upon the choice of the random selections in the first $i-1$ iterations. Then applying this recursive relation iteratively and taking full expectation, we have
\begin{equation}
\label{A5_7}
\begin{aligned}
\mathbb{E}\left\{\left\|\bar{\mbx}_{i}-\widehat{\mbx}\right\|_{2}\right\}&\leq\left(1-\frac{1}{R}\right)^{\frac{i}{2}}\left\|\mbx_{0}-\widehat{\mbx}\right\|_{2}\\&+\sum_{k=0}^{i}\left(1-\frac{1}{R}\right)^{\frac{k}{2}}\max_{j}\gamma_{j}\\&\leq\left(1-\frac{1}{R}\right)^{\frac{i}{2}}\left\|\mbx_{0}-\widehat{\mbx}\right\|_{2}+\sqrt{R}\max_{j}\gamma_{j},
\end{aligned}
\end{equation}
which completes the proof.

\section{Proof of Theorem~\ref{theorem_0}}
\label{ealmaz}
Denote $d_j^{(\ell)}$ in \eqref{dist_1} by $d_j^{(\ell)}=|f_j-\tau_{j,\ell}|$, where $\tau_{j,\ell}\overset{\mathrm{iid}}{\sim}\mathcal{U}_{[-\lambda,\lambda]}$ and $f_j=\langle\mba_j,\mbx\rangle$. Assume $\lambda$ is chosen such that
\begin{equation}
\label{z_1}
\lambda\geq\sup_{j\in[n]}|f_j|.
\end{equation}
Then, we can write %\par\noindent\small
\begin{equation}
\label{z_2}
\begin{aligned}
\mathbb{E}_{\tau}\left\{d_j^{(\ell)}\right\}&=\frac{1}{2\lambda}\int_{-\lambda}^{\lambda}\left|f_j-\tau_{j,\ell}\right|\,d\tau_{j,\ell}\\&=\frac{1}{2\lambda}\left[\int_{-\lambda}^{f_j}(f_j-\tau_{j,\ell})\,d\tau_{j,\ell}+\int_{f_j}^{\lambda}(\tau_{j,\ell}-f_j)\,d\tau_{j,\ell}\right]\\&=\frac{1}{2\lambda}\left(\lambda^2+f_j^2\right).
\end{aligned}
\end{equation} %\normalsize
Since we have assumed that the sensing matrix $\mbA$ is isotropic according to Definition~\ref{def_3}, a simple calculation shows
\begin{equation}
\label{z_3}
\mathbb{E}\left\{d_j^{(\ell)}\right\}=\frac{1}{2\lambda}\left(\lambda^2+\|\mbx\|_2^2\right).
\end{equation}
In the following lemma, we present the general Hoeffding's inequality:
\begin{lemma}\cite[Theroem~2.6.2]{vershynin2018high}
\label{hoff}
Let $X_1,\cdots,X_N$ be independent, mean zero, sub-gaussian random variables. Then, for every $t\geq 0$, we have
\begin{equation}
\label{z_4}
\operatorname{Pr}\left(\left|\sum_{i=1}^{N}X_i\right|\geq t\right)\leq2e^{-\frac{ct^2}{\sum_{i=1}^{N}\|X_i\|_{\psi_2}^2}},
\end{equation}
where $c$ is a positive constant.
\end{lemma}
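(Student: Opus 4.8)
The plan is to prove this two-sided tail bound by the classical exponential Markov (Chernoff) argument, with the sub-gaussian moment generating function (MGF) estimate as the central ingredient. Throughout, write $S=\sum_{i=1}^{N}X_i$ and set $\sigma^2=\sum_{i=1}^{N}\|X_i\|_{\psi_2}^2$.

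The first and most essential step is to show that a mean-zero sub-gaussian random variable admits a quadratic MGF bound of the form $\mathbb{E}\left\{e^{\theta X}\right\}\leq e^{C\theta^2\|X\|_{\psi_2}^2}$ for all $\theta\in\mathbb{R}$, where $C$ is an absolute constant. Starting from the $\psi_2$-norm definition in \eqref{STARIAN}, the inequality $\mathbb{E}\left\{e^{X^2/\|X\|_{\psi_2}^2}\right\}\leq 2$ yields uniform control on all moments, namely $\mathbb{E}\left\{|X|^p\right\}\leq (C_1\|X\|_{\psi_2})^p\, p^{p/2}$. Substituting these moment bounds into the Taylor expansion of $e^{\theta X}$ and using $\mathbb{E}\left\{X\right\}=0$ to eliminate the linear term gives the claimed MGF estimate after summing the resulting series (using $p!\geq (p/e)^p$). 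This equivalence of sub-gaussian characterizations is standard and may be invoked directly from \cite{vershynin2018high}, so no fresh computation is really required here.

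Next, I would exploit independence to factorize the MGF of the sum, $\mathbb{E}\left\{e^{\theta S}\right\}=\prod_{i=1}^{N}\mathbb{E}\left\{e^{\theta X_i}\right\}\leq\prod_{i=1}^{N}e^{C\theta^2\|X_i\|_{\psi_2}^2}=e^{C\theta^2\sigma^2}$. Applying the exponential Markov inequality then gives, for every $\theta>0$, $\operatorname{Pr}(S\geq t)\leq e^{-\theta t}\,\mathbb{E}\left\{e^{\theta S}\right\}\leq e^{-\theta t+C\theta^2\sigma^2}$. Optimizing the exponent over $\theta>0$, with minimizer $\theta^{\star}=t/(2C\sigma^2)$, produces the one-sided estimate $\operatorname{Pr}(S\geq t)\leq e^{-t^2/(4C\sigma^2)}$.

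Finally, I would run the identical argument for $-S$, whose summands $-X_i$ are again independent, mean-zero, and have the same $\psi_2$-norms, to obtain the matching lower-tail bound; combining the two via a union bound yields $\operatorname{Pr}(|S|\geq t)\leq 2e^{-c t^2/\sigma^2}$ with $c=1/(4C)$, which is exactly \eqref{z_4}. The only genuine obstacle is the first step, converting the Orlicz-norm definition into the quadratic MGF bound, since the remaining Chernoff optimization is routine; this obstacle is, however, fully resolved by the standard sub-gaussian equivalences in \cite{vershynin2018high}, so the lemma may in fact simply be cited.
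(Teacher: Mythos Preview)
The paper does not prove this lemma at all; it is stated with a citation to \cite[Theorem~2.6.2]{vershynin2018high} and then used as a black box in the proof of Theorem~\ref{theorem_0}. Your Chernoff-method sketch is correct and is precisely the standard argument behind that cited result, and your closing remark that the lemma may simply be cited matches exactly what the paper does.
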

Since we have assumed $\tau_{j,\ell}\overset{\mathrm{iid}}{\sim}\mathcal{U}_{[-\lambda,\lambda]}$, and the sensing matrix $\mbA$ is sub-gaussian, then each random variable $d_j^{(\ell)}$ is sub-gaussian with the following bound on its sub-gaussian norm:
\begin{equation}
\label{z_5}
\begin{aligned}
\left\|d_j^{(\ell)}\right\|_{\psi_2}\leq\|\mba_j\|_{\psi_2}\|\mbx\|_2+\|\tau_{j,\ell}\|_{\psi_2}\leq K+\frac{\lambda}{\sqrt{\log(2)}}.
\end{aligned}
\end{equation}
Combining \eqref{z_3} and \eqref{z_5} with Lemma~\ref{hoff} leads to
\begin{equation}
\label{z_6}
\operatorname{Pr}\left(\left|T_{\mathrm{ave}}\left(\mbx\right)-\frac{\lambda}{2}-\frac{\left\|\mbx\right\|_{2}^2}{2\lambda}\right|\geq\epsilon\right)\leq 2 e^{-\frac{c \epsilon^2 m^{\prime}}{\left(K+\frac{\lambda}{\sqrt{\log(2)}}\right)^2}}.
\end{equation}
As we consider the supremum over all $\mbx\in\mathcal{K}$, it is necessary to multiply the resulting probability by the minimum number of $\rho$-net completely covering the set $\mathcal{K}$ whose logarithm is the same as Kolmogorov $\rho$-entropy of $\mathcal{K}$. Therefore, we have %\par\noindent\small
\begin{equation}
\label{z_7}
\begin{aligned}
\operatorname{Pr}\left(\sup_{\mbx\in\mathcal{K}}\left|T_{\mathrm{ave}}\left(\mbx\right)-\frac{\lambda}{2}-\frac{\left\|\mbx\right\|_{2}^2}{2\lambda}\right|\geq\epsilon\right)\leq2 e^{-\frac{c \epsilon^2 m^{\prime}}{\left(K+\frac{\lambda}{\sqrt{\log(2)}}\right)^2}+\mathcal{H}(\mathcal{K},\rho)}.
\end{aligned}
\end{equation} %\normalsize
According to Sudakov's inequality, we can upper bound the Kolmogorov $\rho$-entropy as
\begin{equation}
\label{z_8}
\mathcal{H}(\mathcal{K},\rho)\lesssim\frac{\gamma^2\left(\mathcal{K}\right)}{\rho^2}.
\end{equation}
To achieve the failure probability at most $2 e^{-\frac{c \epsilon^2 m^{\prime}}{2\left(K+\frac{\lambda}{\sqrt{\log(2)}}\right)^2}}$, it is sufficient to write
\begin{equation}
\label{z_9}
\frac{\gamma^2\left(\mathcal{K}\right)}{\rho^2}\lesssim \frac{c \epsilon^2 m^{\prime}}{2\left(K+\frac{\lambda}{\sqrt{\log(2)}}\right)^2},
\end{equation}
which completes the proof.

\section{Proof of Theorem~\ref{theorem_4}}
\label{dct_theorem}
Rewrite the linear model \eqref{dct_1} in the matrix form as $\mby=\mbA\mbx$, where $\mbA$ denotes the DCT matrix. Assume the parameter of uniform dithering $\lambda$ is chosen similar to \eqref{z_1}. According to \eqref{z_2}, we obtain
\begin{equation}
\label{y_1}
\mathbb{E}_{\tau}\left\{d_j^{(\ell)}\right\}=\frac{\lambda}{2}+\frac{|\langle\mba_j,\mbx\rangle|^2}{2\lambda}.
\end{equation}
Based on our model \eqref{dct_1}, a simple calculation shows that $\mathbb{E}\left\{\mba_j\mba_j^{\mathrm{H}}\right\}=\frac{1}{2}\mbI$ for all $j\in[n]$, where the expected value is taken over the randomness of the frequencies $\omega_k$. This result implies that
\begin{equation}
\label{y_2}
\mathbb{E}\left\{d_j^{(\ell)}\right\}=\frac{\lambda}{2}+\frac{\|\mbx\|_2^2}{4\lambda}.
\end{equation}
In the following lemma, we present the Hoeffding's inequality for bounded random variables:
\begin{lemma}\cite[Theroem~2.2.5]{vershynin2018high}
\label{lem_2}
Let $\left\{X_i\right\}^{n}_{i=1}$ be independent, bounded random variables
satisfying $X_i \in [a_i, b_i]$, then for any $t > 0$ it holds that 
\begin{equation}
\operatorname{Pr}\left(\left|\frac{1}{n} \sum_{i=1}^n\left(X_i-\mathbb{E}\left\{ X_i\right\}\right)\right| \geq t\right)\leq 2 e^{-\frac{2 n^2 t^2}{\sum_{i=1}^n\left(b_i-a_i\right)^2}}.  
\end{equation}
\end{lemma}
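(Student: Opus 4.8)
The plan is to establish this as the classical Hoeffding inequality through the exponential moment (Chernoff) method, with the moment generating function of each centered summand controlled by Hoeffding's lemma. First I would center the variables by setting $Y_i = X_i - \mathbb{E}\{X_i\}$, so that $\mathbb{E}\{Y_i\} = 0$ and each $Y_i$ takes values in an interval of the same length $\ell_i = b_i - a_i$. Writing $S = \sum_{i=1}^{n} Y_i$, the target event $\left|\frac{1}{n}\sum_i (X_i - \mathbb{E}\{X_i\})\right| \geq t$ is exactly $|S| \geq nt$, which I would split into the two one-sided tails $\operatorname{Pr}(S \geq nt)$ and $\operatorname{Pr}(S \leq -nt)$ and bound each separately.

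For the upper tail, I would apply the exponential Markov inequality: for any $s > 0$,
\[\operatorname{Pr}(S \geq nt) \leq e^{-snt}\, \mathbb{E}\{e^{sS}\} = e^{-snt}\prod_{i=1}^{n}\mathbb{E}\{e^{sY_i}\},\]
where the factorization uses the independence of the $Y_i$. The crucial ingredient is then Hoeffding's lemma, which asserts that a mean-zero variable supported on an interval of length $\ell_i$ satisfies $\mathbb{E}\{e^{sY_i}\} \leq e^{s^2\ell_i^2/8}$. Substituting this bound yields
\[\operatorname{Pr}(S \geq nt) \leq \exp\left(-snt + \tfrac{s^2}{8}\sum_{i=1}^{n}\ell_i^2\right),\]
and optimizing the free parameter at $s^{\star} = 4nt / \sum_i \ell_i^2$ produces the exponent $-2n^2 t^2 / \sum_{i=1}^{n}\ell_i^2$. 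Applying the identical argument to $-S$ (whose summands $-Y_i$ are again mean-zero and supported on intervals of length $\ell_i$) bounds the lower tail by the same quantity, and a union bound over the two tails supplies the factor $2$, giving the stated inequality.

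The hard part will be Hoeffding's lemma itself, i.e.\ the uniform bound on the MGF of a centered bounded variable. The plan here is to set $\psi_i(s) = \log\mathbb{E}\{e^{sY_i}\}$ and verify $\psi_i(0) = 0$, $\psi_i'(0) = \mathbb{E}\{Y_i\} = 0$, and $\psi_i''(s) \leq \ell_i^2/4$ for all $s$. The second-derivative bound is the genuinely delicate step: $\psi_i''(s)$ equals the variance of $Y_i$ under the exponentially tilted measure $d\widetilde{\mathbb{P}} \propto e^{sY_i}\,d\mathbb{P}$, and since this tilted law is still supported in an interval of length $\ell_i$, its variance is at most $(\ell_i/2)^2$ by the elementary fact that any random variable confined to an interval of length $\ell_i$ has variance at most $\ell_i^2/4$. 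A second-order Taylor expansion of $\psi_i$ with integral remainder then gives $\psi_i(s) \leq s^2\ell_i^2/8$, which is precisely what Step two requires. Since the statement is a verbatim restatement of \cite[Theorem~2.2.5]{vershynin2018high}, I would ultimately defer the detailed MGF computation to that reference and present only the centering, Chernoff-bounding, and optimization skeleton above, which is the portion directly tied to our normalization by $1/n$.
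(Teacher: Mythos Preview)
Your proof is correct and is precisely the standard Chernoff--Hoeffding argument. Note, however, that the paper does not actually prove this lemma: it is stated with a citation to \cite[Theorem~2.2.5]{vershynin2018high} and used as a black box in the proof of Theorem~\ref{theorem_4}. Your sketch is essentially the proof found in that reference, so there is nothing to compare beyond observing that you have supplied what the paper simply quotes.
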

Notice that for each random variable $d_j^{(\ell)}$, we have $d_j^{(\ell)}\geq 0$ and
\begin{equation}
\label{y_3}
d_j^{(\ell)}=\left|\langle\mba_j,\mbx\rangle-\tau_j^{(\ell)}\right|\leq\left|\langle\mba_j,\mbx\rangle\right|+\left|\tau_j^{(\ell)}\right|\leq2\lambda,
\end{equation}
where in the last inequality, we have utilized the assumption in \eqref{z_1}. Following Lemma~\ref{lem_2}, we can write
\begin{equation}
\label{y_4}
\operatorname{Pr}\left(\left|T_{\mathrm{ave}}\left(\mbx\right)-\frac{\lambda}{2}-\frac{\left\|\mbx\right\|_{2}^2}{4\lambda}\right|\geq\epsilon\right)\leq 2 e^{-\frac{\epsilon^2m^{\prime}}{2\lambda^2}}.
\end{equation}
Similar to the proof of Theorem~\ref{theorem_0}, by considering the supremum over all $\mbx\in\mathcal{K}$, we have
\begin{equation}
\label{y_5}
\operatorname{Pr}\left(\sup_{\mbx\in\mathcal{K}}\left|T_{\mathrm{ave}}\left(\mbx\right)-\frac{\lambda}{2}-\frac{\left\|\mbx\right\|_{2}^2}{4\lambda}\right|\geq\epsilon\right)\leq2 e^{-\frac{\epsilon^2m^{\prime}}{2\lambda^2}+\mathcal{H}(\mathcal{K},\rho)}.
\end{equation}
Based on the Sudakov's inequality, setting $\mathcal{H}(\mathcal{K},\rho)\lesssim\frac{\gamma^2\left(\mathcal{K}\right)}{\rho^2}\lesssim \frac{\epsilon^2m^{\prime}}{4\lambda^2}$ completes the proof.

\section{Proof of Theorem~\ref{theorem_5}}
\label{Dr_guarantee}
Rewrite the average of distances $T_{\mathrm{ave}}$ in \eqref{a_7} as
%\par\noindent\small
\begin{equation}
\label{x_2}
T_{\mathrm{ave}}(\mbx)=\frac{1}{mn}\sum_{\ell=1}^{m}\sum_{j\in\mathcal{I}}\left|\langle\mba_j,\mbx\rangle-\tau_j^{(\ell)}\right|+\frac{1}{mn}\sum_{\ell=1}^{m}\left|\langle\mba_{j^{\prime}},\mbx\rangle-\tau_{j^{\prime}}^{(\ell)}\right|.
\end{equation} %\normalsize
Under the assumption in \eqref{spider}, we can utilize the result obtained in \eqref{z_3} and write
\begin{equation}
\label{x_3}
\mathbb{E}\left\{\left|\langle\mba_j,\mbx\rangle-\tau_j^{(\ell)}\right|\right\}=\frac{\lambda}{2}+\frac{\|\mbx\|_2^2}{2\lambda},
\end{equation}
for all $j\in\mathcal{I}$. For simplicity of notation, let $\tau_{j^{\prime}}^{(\ell)}=\tau_{j^{\prime},\ell}$ and $f_{j^{\prime}}=\langle\mba_{j^{\prime}},\mbx\rangle$. Define $d_{j^{\prime}}^{(\ell)}=\left|f_{j^{\prime}}-\tau_{j^{\prime},\ell}\right|$. Under the assumption in \eqref{spider}, we can write
\begin{equation}
\label{x_4}
\begin{aligned}
\mathbb{E}_{\tau}\left\{d_{j^{\prime}}^{(\ell)}\right\}&=\frac{1}{2\lambda}\int_{-\lambda}^{\lambda}\left|f_{j^{\prime}}-\tau_{j^{\prime},\ell}\right|\,d\tau_{j^{\prime},\ell}\\&=\frac{1}{2\lambda}\int_{-\lambda}^{\lambda}\left(f_{j^{\prime}}-\tau_{j^{\prime},\ell}\right)\,d\tau_{j^{\prime},\ell}=f_{j^{\prime}},
\end{aligned}
\end{equation}
where taking the expectation with respect to the randomness of $\mba_{j^{\prime}}$ leads to $\mathbb{E}\left\{d_{j^{\prime}}^{(\ell)}\right\}=\mu^{\prime}$. As a result, we have
\begin{equation}
\label{x_5}
\begin{aligned}
\mathbb{E}\left\{T_{\mathrm{ave}}(\mbx)\right\}&=\frac{|\mathcal{I}|}{n}\left(\frac{\lambda}{2}+\frac{\|\mbx\|_2^2}{2\lambda}\right)+\frac{\mu^{\prime}}{n}\\&=\frac{n-1}{n}\left(\frac{\lambda}{2}+\frac{\|\mbx\|_2^2}{2\lambda}\right)+\frac{\mu^{\prime}}{n}.
\end{aligned}
\end{equation}
The rest of the proof follows directly from the proof of Theorem~\ref{theorem_0}. The only difference lies in the failure probability, now given by $2 e^{-\frac{c \epsilon^2 m^{\prime}}{2\left(K^{\prime}+\frac{\lambda}{\sqrt{\log(2)}}\right)^2}}$ instead of $2 e^{-\frac{c \epsilon^2 m^{\prime}}{2\left(K+\frac{\lambda}{\sqrt{\log(2)}}\right)^2}}$, since we have assumed $K^{\prime}>K$.

\section{Proof of Proposition~\ref{error1}}
\label{er_1}
Define $\mbz=\frac{1}{2}\left(\mbx+\bar{\mbx}\right)$. Then, we can write
\begin{equation}
\label{w_1}
\langle\mba_j,\mbz\rangle-\tau_j^{(\ell)}=\frac{1}{2}\left(\langle\mba_j,\mbx\rangle-\tau_j^{(\ell)}+\langle\mba_j,\bar{\mbx}\rangle-\tau_j^{(\ell)}\right).
\end{equation}
Following the consistent reconstruction property in Definition~\ref{def_2}, we have
\begin{equation}
\label{w_2}
\left|\langle\mba_j,\mbz\rangle-\tau_j^{(\ell)}\right|=\frac{1}{2}\left(\left|\langle\mba_j,\mbx\rangle-\tau_j^{(\ell)}\right|+\left|\langle\mba_j,\bar{\mbx}\rangle-\tau_j^{(\ell)}\right|\right).
\end{equation}
Taking average over all $j\in[n],\ell\in[m]$ leads to
\begin{equation}
\label{w_3}
T_{\mathrm{ave}}(\mbz)=\frac{1}{2}\left(T_{\mathrm{ave}}(\mbx)+T_{\mathrm{ave}}(\bar{\mbx})\right).
\end{equation}
Based on Theorem~\ref{theorem_0}, if $m^{\prime}\gtrsim \frac{\gamma^2\left(\mathcal{K}\right)}{\rho^2 \epsilon^2}$ is met, then with a failure probability at most $2 e^{-\frac{c \epsilon^2 m^{\prime}}{2\left(K+\frac{\lambda}{\sqrt{\log(2)}}\right)^2}}$, we have
\begin{equation}
\label{w_4}
\frac{\|\mbz\|_2^2}{2\lambda}\geq T_{\mathrm{ave}}(\mbz)-\frac{\lambda}{2}-\epsilon.
\end{equation}
Combining the results of \eqref{w_3} and \eqref{w_4} leads to
\begin{equation}
\label{w_5}
\begin{aligned}
\frac{\|\mbz\|_2^2}{2\lambda}&\geq \frac{1}{2}\left(T_{\mathrm{ave}}(\mbx)+T_{\mathrm{ave}}(\bar{\mbx})\right)-\frac{\lambda}{2}-\epsilon\\&\geq\frac{1}{2}\left(\frac{\lambda}{2}+\frac{\|\mbx\|_2^2}{2\lambda}-\epsilon+\frac{\lambda}{2}+\frac{\|\bar{\mbx}\|_2^2}{2\lambda}-\epsilon\right)-\frac{\lambda}{2}-\epsilon\\&=\frac{1}{4\lambda}\left(\|\mbx\|_2^2+\|\bar{\mbx}\|_2^2\right)-2\epsilon.
\end{aligned}
\end{equation}
Based on the definition of $\mbz$, we can rewrite \eqref{w_5} in terms of $\mbx$ and $\bar{\mbx}$ as follows
\begin{equation}
\label{w_6}
\|\mbx+\bar{\mbx}\|_2^2\geq2\left(\|\mbx\|_2^2+\|\bar{\mbx}\|_2^2\right)-16\epsilon\lambda.
\end{equation}
By the parallelogram law, we conclude that
\begin{equation}
\label{w_7}
\|\mbx-\bar{\mbx}\|_2^2=2\left(\|\mbx\|_2^2+\|\bar{\mbx}\|_2^2\right)-\|\mbx+\bar{\mbx}\|_2^2,
\end{equation}
which together with \eqref{w_6} results in
\begin{equation}
\label{w_8}
\|\mbx-\bar{\mbx}\|_2\leq4\sqrt{\epsilon\lambda}.
\end{equation}

\section{Proof of Proposition~\ref{error2}}
\label{er_2}
Define $\mbz=\frac{1}{2}\left(\mbx+\bar{\mbx}\right)$. Then, we can write
\begin{equation}
\label{v_1}
\langle\mba_j,\mbz\rangle-\tau_j^{(\ell)}=\frac{1}{2}\left(\langle\mba_j,\mbx\rangle-\tau_j^{(\ell)}+\langle\mba_j,\bar{\mbx}\rangle-\tau_j^{(\ell)}\right).
\end{equation}
Define the set $\mathcal{G}$ such that
%\par\noindent\small
\begin{equation}
\label{v_2}
\operatorname{sgn}\left(\langle\mba_j,\mbx\rangle-\tau_j^{(\ell)}\right)\neq\operatorname{sgn}\left(\langle\mba_j,\bar{\mbx}\rangle-\tau_j^{(\ell)}\right),\quad (j,\ell)\in\mathcal{G}.
\end{equation} %\normalsize
For all $(j,\ell)\in[n]\times[m]\setminus\mathcal{G}$, we have
\begin{equation}
\label{v_3}
\left|\langle\mba_j,\mbz\rangle-\tau_j^{(\ell)}\right|=\frac{1}{2}\left(\left|\langle\mba_j,\mbx\rangle-\tau_j^{(\ell)}\right|+\left|\langle\mba_j,\bar{\mbx}\rangle-\tau_j^{(\ell)}\right|\right),
\end{equation}
due to the consistent reconstruction property. It can be simply verified that for all $(j,\ell)\in\mathcal{G}$, we have
\begin{equation}
\label{v_4}
\begin{aligned}
\left|\langle\mba_j,\mbz\rangle-\tau_j^{(\ell)}\right|=\frac{1}{2}\left(\left|\langle\mba_j,\mbx\rangle-\tau_j^{(\ell)}\right|+\left|\langle\mba_j,\bar{\mbx}\rangle-\tau_j^{(\ell)}\right|\right)-\min\left(\left|\langle\mba_j,\mbx\rangle-\tau_j^{(\ell)}\right|,\left|\langle\mba_j,\bar{\mbx}\rangle-\tau_j^{(\ell)}\right|\right).
\end{aligned}
\end{equation}
Taking average over all $j\in[n],\ell\in[m]$ leads to
\begin{equation}
\label{v_5}
T_{\mathrm{ave}}(\mbz)=\frac{1}{2}\left(T_{\mathrm{ave}}(\mbx)+T_{\mathrm{ave}}(\bar{\mbx})\right)-R,
\end{equation}
where
\begin{equation}
\label{v_6}
R=\frac{1}{mn}\sum_{(j,\ell)\in\mathcal{G}}\beta_{j,\ell},
\end{equation}
with $\beta_{j,\ell}=\min\left(\left|\langle\mba_j,\mbx\rangle-\tau_j^{(\ell)}\right|,\left|\langle\mba_j,\bar{\mbx}\rangle-\tau_j^{(\ell)}\right|\right)$.
Based on Theorem~\ref{theorem_0}, if $m^{\prime}\gtrsim \frac{\gamma^2\left(\mathcal{K}\right)}{\rho^2 \epsilon^2}$ is met, then with a failure probability at most $2 e^{-\frac{c \epsilon^2 m^{\prime}}{2\left(K+\frac{\lambda}{\sqrt{\log(2)}}\right)^2}}$, we have
\begin{equation}
\label{v_7}
\frac{\|\mbz\|_2^2}{2\lambda}\geq T_{\mathrm{ave}}(\mbz)-\frac{\lambda}{2}-\epsilon,
\end{equation}
which together with \eqref{v_5} results in
\begin{equation}
\label{v_8}
\begin{aligned}
\frac{\|\mbz\|_2^2}{2\lambda}&\geq \frac{1}{2}\left(T_{\mathrm{ave}}(\mbx)+T_{\mathrm{ave}}(\bar{\mbx})\right)-R-\frac{\lambda}{2}-\epsilon\\&\geq\frac{1}{2}\left(\frac{\lambda}{2}+\frac{\|\mbx\|_2^2}{2\lambda}-\epsilon+\frac{\lambda}{2}+\frac{\|\bar{\mbx}\|_2^2}{2\lambda}-\epsilon\right)-R-\frac{\lambda}{2}-\epsilon\\&=\frac{1}{4\lambda}\left(\|\mbx\|_2^2+\|\bar{\mbx}\|_2^2\right)-R-2\epsilon.
\end{aligned}
\end{equation}
Based on the definition of $\mbz$ and by the parallelogram law, we can write
\begin{equation}
\label{v_9}
\|\mbx-\bar{\mbx}\|_2^2\leq8\lambda R+16\epsilon\lambda.
\end{equation}
According to Theorem~\ref{theorem_0}, with a failure probability at most $2 e^{-\frac{c \epsilon^2 m^{\prime}}{2\left(K+\frac{\lambda}{\sqrt{\log(2)}}\right)^2}}$, the value of $R$ is bounded as
\begin{equation}
\label{v_10}
R\leq\frac{|\mathcal{G}|}{mn}\left(\frac{\lambda}{2}+\frac{1}{2\lambda}\right),
\end{equation}
which together with \eqref{v_9} completes the proof.

\section{Proof of Proposition~\ref{error3}}
\label{er_3}
Define $\mbz=\frac{1}{2}\left(\mbx+\bar{\mbx}\right)$. Similar to the proof of Proposition~\ref{error1}, due to the consistency assumption, we can write
\begin{equation}
\label{u_1}
T_{\mathrm{ave}}(\mbz)=\frac{1}{2}\left(T_{\mathrm{ave}}(\mbx)+T_{\mathrm{ave}}(\bar{\mbx})\right).
\end{equation}
Denote $\mby=\mbA\mbx$, $\bar{\mby}=\mbA\bar{\mbx}$, and $\mby_z=\mbA\mbz$. With the assumption in \eqref{spider}, we have $\mathrm{DR}_{\mby},\mathrm{DR}_{\bar{\mby}}\geq\lambda$ which leads to $\mathrm{DR}_{\mby_z}\geq\lambda$. Based on Theorem~\ref{theorem_5}, if $m^{\prime}\gtrsim \frac{\gamma^2\left(\mathcal{K}\right)}{\rho^2 \epsilon^2}$ is met, then with a failure probability at most $2 e^{-\frac{c \epsilon^2 m^{\prime}}{2\left(K^{\prime}+\frac{\lambda}{\sqrt{\log(2)}}\right)^2}}$, we have
\begin{equation}
\label{u_2}
\frac{(n-1)}{n}\frac{\|\mbz\|_2^2}{2\lambda}\geq T_{\mathrm{ave}}(\mbz)-\frac{(n-1)}{n}\frac{\lambda}{2}-\frac{1}{n}\mu^{\prime}_{z}-\epsilon,
\end{equation}
where $\mu^{\prime}_z=\mathbb{E}\left\{\langle\mba_{j^{\prime}},\mbz\rangle\right\}$. Combining the results of \eqref{u_1} and \eqref{u_2} leads to
%\par\noindent\small
\begin{equation}
\label{u_3}
\begin{aligned}
\frac{(n-1)}{n}\frac{\|\mbz\|_2^2}{2\lambda}&\geq\frac{1}{2}\left(T_{\mathrm{ave}}(\mbx)+T_{\mathrm{ave}}(\bar{\mbx})\right)-\frac{(n-1)}{n}\frac{\lambda}{2}-\frac{1}{n}\mu^{\prime}_{z}-\epsilon\\&\geq\frac{1}{2}\left(\frac{(n-1)}{n}\lambda+\frac{(n-1)}{2\lambda n}\left(\|\mbx\|_2^2+\|\bar{\mbx}\|_2^2\right)+\frac{1}{n}\left(\mu^{\prime}_{x}+\mu^{\prime}_{\bar{x}}\right)-2\epsilon\right)-\frac{(n-1)}{n}\frac{\lambda}{2}-\frac{1}{n}\mu^{\prime}_{z}-\epsilon,
\end{aligned}
\end{equation} %\normalsize
where $\mu^{\prime}_x=\mathbb{E}\left\{\langle\mba_{j^{\prime}},\mbx\rangle\right\}$ and $\mu^{\prime}_{\bar{x}}=\mathbb{E}\left\{\langle\mba_{j^{\prime}},\bar{\mbx}\rangle\right\}$. Based on the definition of $\mbz$, we can write $\mu^{\prime}_z=\frac{1}{2}\left(\mu^{\prime}_x+\mu^{\prime}_{\bar{x}}\right)$. Combining this result with \eqref{u_3} leads to
\begin{equation}
\label{u_4}
\|\mbx+\bar{\mbx}\|_2^2\geq2\left(\|\mbx\|_2^2+\|\bar{\mbx}\|_2^2\right)-16\epsilon\lambda\left(\frac{n}{n-1}\right),
\end{equation}
where the parallelogram law implies
\begin{equation}
\label{u_5}
\|\mbx-\bar{\mbx}\|_2\leq4\sqrt{\epsilon\lambda\left(\frac{n}{n-1}\right)}.
\end{equation}
\section{Proof of Theorem~\ref{theorem_2}}
\label{akk}
Assume that $\left\{\sigma_{i\mbP}\right\}$ and $\left\{\sigma_{i\mbA}\right\}$ denote the singular values of the matrices $\mbP$ and $\mbA$, respectively.
To obtain the singular values of $\mbP$, the matrix $\mbW = \mbP^{\top}\mbP$ is computed as
%\par\noindent\small
\begin{equation}
\label{proof_th}
\begin{aligned}
\mbW &= \mbP^{\top}\mbP\\&=  \left[\begin{array}{c|c|c}
\mbA^{\top}\bOmega^{(1)} &\cdots &\mbA^{\top}\bOmega^{(m)}
\end{array}\right]\left[\begin{array}{c|c|c}
\mbA^{\top}\bOmega^{(1)} &\cdots &\mbA^{\top}\bOmega^{(m)}
\end{array}\right]^{\top}\\
&=\mbA^{\top}\bOmega^{(1)}\bOmega^{(1)}\mbA+\cdots+\mbA^{\top}\bOmega^{(m)}\bOmega^{(m)}\mbA\\
&= m\mbA^{\top}\mbI\mbA=m\mbA^{\top}\mbA,
\end{aligned}
\end{equation} %\normalsize
which means that the singular values of $\mbP$ are $\left\{\sigma_{i\mbP}\right\}=\sqrt{m}\left\{\sigma_{i\mbA}\right\}$. Also, the Frobenius norm of  
$\mbP$ is obtained as
\begin{equation}
\label{frob}
\begin{aligned}
\|\mbP\|^{2}_{\mathrm{F}}=\operatorname{Tr}\left(\mbP^{\top}\mbP\right)= \operatorname{Tr}\left(m\mbA^{\top}\mbA\right) = m\|\mbA\|^{2}_{\mathrm{F}}.
\end{aligned}
\end{equation}
Plugging in \eqref{proof_th} and \eqref{frob} in the definition of the scaled condition number provided in Section~\ref{sec_RKA}, one can conclude that $\kappa\left(\mbP\right)=\kappa\left(\mbA\right)$.

\section{Convergence analysis of ST-ORKA, SVP-ORKA, and HT-ORKA}
\label{MY-ST}
In this section, we 
%prove the convergence, and 
derive the convergence rates of ST-ORKA, SVP-ORKA, and HT-ORKA. %Additionally, we provide proofs of the convergence of these algorithms when integrated with the radial retraction map, as defined in \eqref{kvm_18} and \eqref{kvm_29}.
%\vspace{-18pt}
\subsection{Convergence Analysis for ST-ORKA}
\label{sta}
As discussed in \cite{escalante2011alternating}, if $\Omega_c$ is a closed and convex set in any Hilbert space $\mathcal{H}$, then the projection operator
$P: \mathcal{H} \rightarrow \Omega_c$ is non-expansive, i.e., for any vectors $\mbx_1, \mbx_2 \in \mathcal{H}$, $\left\|P(\mbx_1)-P(\mbx_2)\right\|_2\leq\left\|\mbx_1-\mbx_2\right\|_2$. Due to the fact that the ST operator projects the signal to $\left\|\mbx\right\|_1\leq \epsilon$ which is a closed and convex set,
%hall, 
we can write that for any vectors $\mbx_1,\mbx_2\in\mathcal{H}$, $\left\|S_{\upkappa}(\mbx_1)-S_{\upkappa}(\mbx_2)\right\|_2\leq\left\|\mbx_1-\mbx_2\right\|_2$.

Define $\mbe_i=\mbx_i-\widehat{\mbx}$, $t^{(\ell)}_{j}=r^{(\ell)}_j\tau^{(\ell)}_j$ and $\mbp^{(\ell)}_j=r^{(\ell)}_j\mba_j$, we derive the convergence rate of ST-ORKA as
%\par\noindent\small
\begin{equation}
\begin{aligned}
\label{kvm_31}
\left\|\mbe_{i+1}\right\|^2_2&=\left\|S_{\upkappa}\left(\mbz_{i+1}\right)-S_{\upkappa}\left(\widehat{\mbx}\right)\right\|^2_2\\ &\leq\left\|\mbz_{i+1}-\widehat{\mbx}\right\|^2_2\\
&=\left\|\mbe_i+\frac{\left(t^{(\ell)}_j-\langle\mbp^{(\ell)}_j,\mbx_{i}\rangle\right)^+}{\left\|\mbp^{(\ell)}_j\right\|^2_2}  \mbp^{(\ell)}_j\right\|^2_2\\
&= \left\|\mbe_i\right\|^2_2+\frac{\left(\left(t^{(\ell)}_j-\langle\mbp^{(\ell)}_j,\mbx_{i}\rangle\right)^+\right)^2}{\left\|\mbp^{(\ell)}_j\right\|^2_2}+\frac{2\left(t^{(\ell)}_j-\langle\mbp^{(\ell)}_j,\mbx_{i}\rangle\right)^+\langle\mbp^{(\ell)}_j,\mbe_i\rangle}{\left\|\mbp^{(\ell)}_j\right\|^2_2}.
\end{aligned}
\end{equation} %\normalsize
Since $\langle\mbp^{(\ell)}_j,\widehat{\mbx}\rangle\geq t^{(\ell)}_j$, we have
%Due to the fact that 
$\langle\mbp^{(\ell)}_j,\mbe_i\rangle=\langle\mbp^{(\ell)}_j,\mbx_i-\widehat{\mbx}\rangle\leq\langle\mbp^{(\ell)}_j,\mbx_i\rangle-t^{(\ell)}_j$. Therefore, one can rewrite \eqref{kvm_31} as
%one can observe 
%\par\noindent\small
\begin{equation}
\begin{aligned}
\label{kvm_32}
\left\|\mbe_{i+1}\right\|^2_2&\leq\left\|\mbe_i\right\|^2_2+\frac{\left(\left(t^{(\ell)}_j-\langle\mbp^{(\ell)}_j,\mbx_{i}\rangle\right)^+\right)^2}{\left\|\mbp^{(\ell)}_j\right\|^2_2}+\frac{2\left(t^{(\ell)}_j-\langle\mbp^{(\ell)}_j,\mbx_{i}\rangle\right)^+\langle\mbp^{(\ell)}_j,\mbe_i\rangle}{\left\|\mbp^{(\ell)}_j\right\|^2_2}\\
&\leq\left\|\mbe_i\right\|^2_2+\frac{\left(\left(t^{(\ell)}_j-\langle\mbp^{(\ell)}_j,\mbx_{i}\rangle\right)^+\right)^2}{\left\|\mbp^{(\ell)}_j\right\|^2_2}-\frac{2\left(t^{(\ell)}_j-\langle\mbp^{(\ell)}_j,\mbx_{i}\rangle\right)^+\left(t^{(\ell)}_j-\langle\mbp^{(\ell)}_j,\mbx_{i}\rangle\right)}{\left\|\mbp^{(\ell)}_j\right\|^2_2}\\
&=\left\|\mbe_i\right\|^2_2-\frac{\left(\left(t^{(\ell)}_j-\langle\mbp^{(\ell)}_j,\mbx_{i}\rangle\right)^+\right)^2}{\left\|\mbp^{(\ell)}_j\right\|^2_2}.
\end{aligned}
\end{equation}%\normalsize
Define $\mbt=\left[\begin{array}{c|c|c}
\mbt_1^{\top} &\cdots &\mbt_m^{\top}
\end{array}\right]^{\top}$, where $\mbt_{\ell}=\left[t_j^{(\ell)}\right]_{j=1}^{n}$ for $\ell\in[m]$. Taking the expectation from both sides of \eqref{kvm_32} results in
%\par\noindent\small
\begin{equation}
\begin{aligned}
\label{kvm_33}
\mathbb{E}\left\{\left\|\mbe_{i+1}\right\|^2_2\right\}&\leq \left\|\mbe_i\right\|^2_2-\sum_{j,\ell=1}^{m^{\prime}}\frac{\left\|\mbp^{(\ell)}_j\right\|^2_2}{\left\|\mbP\right\|^2_{\mathrm{F}}}\frac{\left(\left(t^{(\ell)}_j-\langle\mbp^{(\ell)}_j,\mbx_{i}\rangle\right)^+\right)^2}{\left\|\mbp^{(\ell)}_j\right\|^2_2}\\
&= \left\|\mbe_i\right\|^2_2-\frac{\left\|\left(\mbt-\mbP\mbx_{i}\right)^+\right\|^2_2}{\left\|\mbP\right\|^2_{\mathrm{F}}}.
\end{aligned}
\end{equation} %\normalsize
Based on the Hoffman bound \cite[Theorem~4.2]{leventhal2010randomized}, we have $\left\|\mbe_i\right\|^2_2\leq L_1\left\|\left(\mbt-\mbP\mbx_{i}\right)^+\right\|^2_2$, where $L_1=1/\sigma^2_{\text{min}}\left(\mbP\right)$. Combining this result with \eqref{kvm_33} leads to
\begin{equation}
\begin{aligned}
\mathbb{E}\left\{\left\|\mbe_{i+1}\right\|^2_2\right\} &\leq \left(1-\frac{\sigma^2_{\text{min}}\left(\mbP\right)}{\left\|\mbP\right\|^2_{\mathrm{F}}}\right) \left\|\mbe_i\right\|^2_2\\
&= \left(1-\frac{1}{\kappa^{2}\left(\mbP\right)}\right) \left\|\mbe_i\right\|^2_2.
\end{aligned}
\end{equation}
According to Theorem~\ref{theorem_2}, we know $\kappa\left(\mbA\right)=\kappa\left(\mbP\right)$. Thus, after $i$ iterations we have
\begin{equation}
\label{Conv}
\mathbb{E}\left\{\left\|\mathbf{e}_{i+1}\right\|^2_2\right\}\leq\left(1-\frac{1}{\kappa^{2}\left(\mbA\right)}\right)^{i} \left\|\mbe_0\right\|^2_2.
\end{equation}
It is evident that the rest of the proof follows the same logic as the proof of Proposition~\ref{penaltyyy}.
Note that the convergence rate \eqref{Conv} can be derived for any operator follows $P:\mathcal{H}\rightarrow\Omega_c$. This underscores the versatility of the algorithm and its ability to accommodate various types of operators.

\subsection{Convergence Analysis for SVP-ORKA}
\label{svp_proof}
To prove the convergence of SVP-ORKA, %as comprehensively discussed in \cite{carlsson2021lipschitz} and \cite[Theorem~4.2]{andersson2016operator}, 
consider an operator function $\mathcal{G}_f$ applied to a matrix $\mbX$ with rank $r^{\prime}$ as follows:
%For the SVP-ORKA, consider an operator function $\mathcal{G}_f$ applied to a matrix $\mbX$ with rank $r$ as follows:
%as comprehensively discussed in \cite{carlsson2021lipschitz} and \cite[Theorem~4.2]{andersson2016operator}, if we have an operator function applied to a matrix $\mbX$ with rank $r$ as follows:
\begin{equation}
\label{kvm_34}
\mathcal{G}_f\left(\mbX\right)=\sum^{r^{\prime}}_{k=1} f\left(\sigma_k\right)\mbu_k\mbv^{\mathrm{H}}_k,
\end{equation}
where $\left\{\sigma_k, \mbu_k, \mbv_k\right\}^{r^{\prime}}_{k=1}$ are singular values of $\mbX$ and its corresponding singular 
%value 
vectors, and $f$ is a $L$-Lipschitz continuous projector function. %defined in \eqref{Lip}, 
As comprehensively discussed \cite[Theorem~4.2]{andersson2016operator}, the following relation holds for two matrices $\mbX_1$ and $\mbX_2$ belonging to the Hilbert space $\mathcal{H}$:
\begin{equation}
\label{Lip1}
\left\|\mathcal{G}_f\left(\mbX_1\right)-\mathcal{G}_f\left(\mbX_2\right)\right\|_{\mathrm{F}} \leq L \left\|\mbX_1-\mbX_2\right\|_{\mathrm{F}}.
\end{equation}
In %the 
SVP-ORKA, $f$ %function 
is an operator which only chooses the $r$-largest singular values. 
%which 
It is straightforward to verify that such $f$ satisfies \eqref{Lip1} with $L=1$. Therefore, one can conclude
\begin{equation}
\label{Lip2}
\left\|P_r\left(\mbX_1\right)-P_r\left(\mbX_2\right)\right\|_{\mathrm{F}} \leq \left\|\mbX_1-\mbX_2\right\|_{\mathrm{F}},~\forall\mbX_1, \mbX_2 \in \mathcal{H}.
\end{equation}
Since $\left\|\mbX_1-\mbX_2\right\|_{\mathrm{F}}=\left\|\operatorname{vec}\left(\mbX_1\right)-\operatorname{vec}\left(\mbX_2\right)\right\|_2$, the convergence proof of SVP-ORKA is identical to that of ST-ORKA from this point forward. It is worth noting that using the SVT operator instead of SVP can lead to the same convergence proof for finding the solution in $\mathcal{P}^{(M)}_1$. This is because in %the 
SVT operator, the function $f(\cdot)$ is $f(\mbx) = \left(\mbx-\boldsymbol{\uptau}\right)^{+}$, where $\boldsymbol{\uptau}$ is a predefined threshold. As discussed for the ST operator in Appendix~\ref{sta},
%Section H-(1) of the Supplementary Material, 
this function also satisfies the Lipschitz continuity.

%\section{Convergence Analysis of HT-ORKA}
%\label{MY-STttt}
\subsection{Convergence Analysis for HT-ORKA}
\label{htt_orka}
%By 
Define $\mbe_i=\mbx_i-\widehat{\mbx}$, $t^{(\ell)}_{j}=r^{(\ell)}_j\tau^{(\ell)}_j$, and $\mbp^{(\ell)}_j=r^{(\ell)}_j\mba_j$. Since $\mathcal{T}_{s}(\cdot)$ is an operator that selects the best $s$-sparse approximation of the solution at each iteration, we can determine the convergence rate of HT-ORKA as follows:
\begin{equation}
\label{Khoshgele}
\begin{aligned}
\left\|\mbe_{i+1}\right\|_2&=\left\|\mathcal{T}_{s}\left(\mbz_{i+1}\right)-\widehat{\mbx}\right\|_2\\ &\leq\left\|\mathcal{T}_{s}\left(\mbz_{i+1}\right)-\widehat{\mbx}\right\|_2+\left\|\mbz_{i+1}-\widehat{\mbx}\right\|_2\\
&\leq 2 \left\|\mbz_{i+1}-\widehat{\mbx}\right\|_2.
\end{aligned}
\end{equation}
As presented earlier, the term $\left\|\mbz_{i+1}-\widehat{\mbx}\right\|^2_2$ can be bounded as
%\par\noindent\small
\begin{equation}
\begin{aligned}
\left\|\mbz_{i+1}-\widehat{\mbx}\right\|^2_2
\leq \left\|\mbe_i\right\|^2_2-\frac{\left(\left(t^{(\ell)}_j-\langle\mbp^{(\ell)}_j,\mbx_{i}\rangle\right)^+\right)^2}{\left\|\mbp^{(\ell)}_j\right\|^2_2},
\end{aligned}
\end{equation} %\normalsize
and %\par\noindent\small
\begin{equation}
\begin{aligned}
\mathbb{E}\left\{\left\|\mbz_{i+1}-\widehat{\mbx}\right\|^2_2\right\}\leq \left\|\mbe_i\right\|^2_2-\frac{\left\|\left(\mbt-\mbP\mbx_{i}\right)^+\right\|^2_2}{\left\|\mbP\right\|^2_{\mathrm{F}}}.
\end{aligned}
\end{equation} %\normalsize
According to the Hoffman bound \cite[Theorem~4.2]{leventhal2010randomized} and \eqref{Khoshgele}, we have
\begin{equation}
\label{khoshgele1}
\begin{aligned}
\mathbb{E}\left\{\left\|\mathbf{e}_{i+1}\right\|_2\right\}\leq2\left(1-\frac{1}{\kappa^{2}\left(\mbA\right)}\right)^{\frac{i}{2}} \left\|\mbe_0\right\|_2.
\end{aligned}
\end{equation}
It is straightforward to verify that a similar roadmap to the proof of Proposition~\ref{penaltyyy} may be followed.

\bibliographystyle{IEEEbib}
\bibliography{strings,refs}

\end{document}